\newcommand{\typeof}{1} %
\newcommand{\longversion}[1]{\ifthenelse{\equal{\typeof}{0}}{}{#1}}
\newcommand{\shortversion}[1]{\ifthenelse{\equal{\typeof}{0}}{#1}{}}
\newcommand{\longshortversion}[2]{\ifthenelse{\equal{\typeof}{0}}{#2}{#1}}
\newcommand{\Set}{\mathrm{Set}}
\newcommand{\Met}{\mathrm{Met}}
\newcommand{\UMet}{\mathrm{UMet}}
\newcommand{\PUMet}{\mathrm{PUMet}}
\newcommand{\InjMet}{\mathrm{InjMet}}
\newcommand{\PReal}{\BB R_{\geq 0}^{\infty}}
\newcommand{\INT}{\mathrm{IntST}}
\newcommand{\T}{T}
\newcommand{\B}[1]{\mathbf{#1}}
\newcommand{\C}[1]{\mathcal{#1}}
\newcommand{\BB}[1]{\mathbb{#1}}
\newcommand{\Ev}{\mathbf{Ev}}
\newcommand{\LL}{\boldsymbol{\Lambda}}
\newcommand{\To}{\Rightarrow}
\newcommand{\CL}{\B{CL}}
\newcommand{\III}{\mathsf{I}}
\newcommand{\KKK}{\mathsf{K}}
\newcommand{\SSS}{\mathsf{S}}
\newcommand{\YYY}{\mathsf{Y}}
\newcommand{\OOO}{\mathsf{O}}
\newcommand{\refl}{$\mathrm{(refl)}$}
\newcommand{\symm}{$\mathrm{(symm)}$}
\newcommand{\trans}{$\mathrm{(trans)}$}
\newcommand{\transs}{$\mathrm{(trans^{*})}$}
\newcommand{\refll}{$\mathrm{(refl^{*})}$}
\newcommand{\UCL}{\C U_{\CL}}
\newcommand{\free}{\mathsf{fv}}
\newcommand{\bd}{\mathsf{bd}}
\newcommand{\avar}{\mathsf{var}}
\newcommand{\FreeAlgS}[2]{\BB F_{#1}(#2)}
\newcommand{\FreeAlgLL}[1]{\BB F_{\lambda}^{\simeq}(\Var)_{#1}}
\newcommand{\FreeAlg}[1]{\BB F_{\Sigma}(#1)}
\newcommand{\Var}{\mathrm{Var}}
\newcommand{\vLambda}{\vdash^{\boldsymbol\lambda}}
\newcommand{\vLambdaEta}{\vdash^{\boldsymbol\lambda\boldsymbol \eta}}
\newcommand{\dLambda}[1]{d^{\boldsymbol\lambda}}
\newcommand{\dLambdaEta}[1]{d^{\boldsymbol\lambda\boldsymbol\eta}}
\newcommand{\VLambda}{\C V^{\boldsymbol\lambda}}
\newcommand{\VLambdaEta}{\C V^{\boldsymbol\lambda\boldsymbol\eta}}
\newcommand{\Weak}{\stackrel{\mathrm w}{\To}}
\newcommand{\VWeak}{\stackrel{\mathrm{vw}}{\To}}
\newcommand{\EQU}[3]{\stackrel{#2 #3 }{\simeq_{#1}}}
\newcommand{\EQUL}[3]{\stackrel{#2, #3 }{\simeq_{#1}}}
\newcommand{\p}[1]{\mbox{$[\![#1]\!]$}}
\newcommand{\ie}{\emph{i.e.}}
\newcommand{\eg}{\emph{e.g.}}
\newcommand{\dnf}[1]{\mbox{$d^\mathit{NF}_{#1}$}}
\newcommand{\dd}[1]{\mbox{$d^\mathit{\cal D}_{#1}$}}
\let\orgdescriptionlabel\descriptionlabel
\renewcommand*{\descriptionlabel}[1]{%
  \let\orglabel\label
  \let\label\@gobble
  \phantomsection
  \edef\@currentlabel{#1}%
  \let\label\orglabel
  \orgdescriptionlabel{#1}%
}
\title{On Quantitative Algebraic Higher-Order Theories} 
\author{Ugo Dal Lago}{Dipartimento di Informatica-Scienza e Ingegneria, Universit\`a di Bologna}{ugodallago@unibo.it}{}{}
\author{Furio Honsell}{Dipartimento di Scienze Matematiche, Informatiche e Fisiche, Universit\`a di Udine}{honsell@dimi.uniud.it}{}{}
\author{Marina Lenisa}{Dipartimento di Scienze Matematiche, Informatiche e Fisiche, Universit\`a di Udine}{marina.lenisa@uniud.it}{}{}
\author{Paolo Pistone}{Dipartimento di Informatica-Scienza e Ingegneria, Universit\`a di Bologna}{paolo.piston2@unibo.it}{}{}
\authorrunning{U.~Dal Lago, F.~Honsell, M.~Lenisa, P.~Pistone}
\keywords{Quantitative Algebras, Lambda Calculus, Combinatory Logic, Metric Spaces} 
\begin{document}

\maketitle

\begin{abstract}
We explore the possibility of extending Mardare {\em et al} quantitative algebras to the structures which naturally emerge from Combinatory Logic and the $\lambda$-calculus.  First of all, we show that the framework is indeed applicable to those structures, and give soundness and completeness results. Then, we prove some negative results which clearly delineate to which extent categories of metric spaces can be models of such theories. We conclude by giving several examples of non-trivial higher-order quantitative algebras.\end{abstract}

\section{Introduction}\label{section1}
One way of seeing program semantics is as the science of program equivalence. 
Each way of giving semantics to programs implicitly identifies which programs 
are equivalent. Similarly, a notion of program equivalence can be seen as  a way of 
attributing meaning to programs (namely, the equivalence class to which the 
program belongs). This point of view makes semantics a powerful source of ideas and techniques 
for program transformation and program verification, with 
the remarkable advantage that such techniques 
can be defined in a 
compositional and modular way.

However, there are circumstances in which equivalences between programs, being purely
{dychotomous}, 
are just not informative enough: two programs are either equivalent or 
not, period.
No further quantitative or causal 
information can be extracted from two programs which are \emph{slightly} different, although not equivalent. Furthermore, as  program equivalences are usually congruences,
 and therefore preserved by any context, programs that only differ in 
\emph{peculiar} circumstances are also just non-equivalent ones. For these reasons, methods 
alternative to program equivalence have to be looked for in all (very common) 
situations involving  transformations that replace a program by one which is 
only \emph{approximately} equivalent~\cite{Mittal2016}, or when the 
specifications are either not precise or not to be met precisely (e.g. in 
modern cryptography~\cite{KatzLindell2014}, in which most security properties 
hold in an approximate sense, namely modulo a negligible probability).

The considerations above led the scientific community to question the 
possibility of broadening the scope of program semantics from a science of equivalences to a 
science of \emph{distances} between programs. By the way, the possibility of interpreting programs 
in domains having a metric structure has been known since the 
1990s~\cite{deBakker92,deBakker96}. Recently, Mardare, Panangaden, and Plotkin have 
introduced a notion of quantitative algebra~\cite{Plotk} that generalizes usual equational reasoning to a setting in which the compared entities can be at a 
certain distance. In this way, various notions of quantitative algebra have been shown to be 
captured through a formal system, \emph{\`a la} Birkhoff~\cite{birkhoff_1935}.

Still, when the programs at hand are higher-order functional programs, the construction of a metric 
semantics faces several obstacles. First, it is well-known that 
the category $\Met$ of metric spaces and non-expansive maps, providing the standard setting of the approaches just recalled,  is \emph{not} a model of the simply typed $\lambda$-calculus (more precisely, it is not cartesian closed).
Furthermore, finding relevant \emph{sub-}categories of $\Met$ enjoying enough structure to model higher-order programs can lead to trivial (i.e.~discrete) models, and several (mostly negative) results have remained so far in the \emph{folklore} (with a few notable exceptions, e.g.~\cite{Escardo1999}). 



In this paper we bite the tail of the dragon: 
we apply 
the framework of quantitative equational theories and 
algebras from Mardare {\em et al} to the cases of  
combinatory logic and the $\lambda$-calculus, and we try to highlight features and obstacles in the construction of higher-order quantitative algebras, at the same time showing the existence of   
several interesting models. 

There are various reasons for exploring combinatory algebras, \ie\ applicative structures where the $\xi$-rule fails. The first is that these structures naturally arise in various contexts, most notably in Game Semantics and in particular in the ``Geometry of Interaction''-construction as introduced by Girard and Abramsky, see \cite{DBLP:journals/mscs/AbramskyHS02}. The $\xi$-rule can then be enforced only by introducing a rather complex notion of equivalence relation, whose fine structure is usually rather awkward to grasp. The second reason is that combinatory algebras, being indeed algebras,  might appear at first sight to be amenable straightforwardly in the first order framework of quantitative algebras of Mardare {\em et al}. We show that this is illusory, because the impact of the basic assumption that constructors are non-expansive, \ie\ Axiom NExp (see Section \ref{section2tris2}) is very strong, even in a context which could appear to be algebraically well-behaved. Finally, even if it is convenient to assume the $\xi$-rule, in reasoning on higher-order programming languages, showing that it holds in implementations is not at all immediate and, when side-effects are present, it needs to be carefully phrased.

 The contributions of this paper are threefold: \\
(i)   We introduce, following the framework defined by Mardare {\em et al}, quantitative 
   generalizations of the standard notions of \emph{weak $\lambda$-theories} and 
   \emph{$\lambda$-theories} \cite{Barendregt92}, as well as their corresponding notions of algebra.
      This is in 
   Section~\ref{sectionM-S}, Section~\ref{section2tris2}, and Section~\ref{section3tris}, 
   respectively.\\
   (ii) We study properties and examples of algebras for such theories, as suitable sub-categories of $\Met$.
   In particular, we highlight the relevance of \emph{ultra-}metric and \emph{injective} metric spaces in the construction of non-trivial (i.e.~non discrete) algebras.
 Some examples are discussed through Section~\ref{section1plus} and Section~\ref{section3tris}, further properties and examples are in Section~\ref{section4}.\\
%
%
%
(iii) Finally, we discuss algebras obtained by relaxing the conditions from Mardare {\em et al}: either by 
	replacing metrics by \emph{partial} metrics \cite{matthews, Stubbe2014}, \ie\ generalized metrics in which self-distances $d(x,x)$ need not be zero, or by relaxing the non-expansiveness condition and introducing a class of 	\emph{approximate} quantitative algebras.
%
This is in Section~\ref{section5} and Section~\ref{section7}.

\shortversion{
A longer version of this paper containing all proofs is available at the address \url{addArXivURL}.
}

\section{Preliminaries on Metric Spaces}\label{section1plus}

In this section we discuss a few properties of metric spaces and their associated categories, which provide the general setting for quantitative algebras in the sense of Mardare {\em et al}
In particular, we recall the definition of ultra-metric spaces, as well as \emph{partial ultra-metric spaces} \cite{matthews, Stubbe2014}. The latter is a class of generalized metric spaces in which self-distances $a(x,x)$ are not required to be $0$ but only smaller than any distance of the form $a(x,y)$.

\begin{definition}\label{def:metricspaces}

A pair $(X,a)$ formed by a set $X$ and a function
$a: X\times X\to \PReal$ is called:\\
(i) a \emph{pre-metric space} if it satisfies, for all $x,y\in X$, $a(x,x)=0$ \refl~and $a(x,y)=a(y,x)$ \symm; 
(ii) a \emph{(pseudo-)metric space} if it satisfies \refl, \symm, and, for all $x,y,z\in X$, $a(x,y)\leq a(x,z)+a(z,y)$ \trans; 
%
(iii) an \emph{ultra-metric space} if it satisfies \refl, \symm~and, for all $x,y,z\in X$, $a(x,y)\leq \max\{a(x,z),a(z,y)\}$ \transs; 
(iv)
a \emph{partial ultra-metric space} if it satisfies \symm, \transs~and, for all $x,y\in X$,
$a(x,y)\geq a(x,x), a(y,y)$ \refll.
%
\end{definition}
Since all metrics we consider are ``pseudo'', from now on we will omit this prefix.
Observe that an ultra-metric space is also a metric space.
Moreover, a partial ultra-metric space $(X,a)$ also yields an ultra-metric space $(X,a^{*})$, with $a^{*}(x,y)= 0$ if $x=y$ and $a^{*}(x,y)=a(x,y)$ otherwise.
Usually, partial metric spaces are defined using a stronger version of the 
triangular law, given by $a(x,y)\leq a(x,z)+a(z,y) - a(z,z)$. However, for partial \emph{ultra-}metrics this condition is equivalent to \transs~(see e.g.~\cite{Stubbe2014}).

The standard morphisms between metric (ultra-metric, partial ultra-metric) spaces $(X,a)$ and $(Y,b)$ are the \emph{non-expansive} functions, i.e.~those functions
$f:X\to Y$ such that for all $x,y\in X$, $b(f(x),f(y))\leq a(x,y)$.
%
%
We let $\Met$ (resp.~$\UMet$, $\PUMet$) indicate the category of metric spaces
(resp. ultra-metric spaces, partial ultra-metric spaces) and non-expansive maps.
All categories $\Met,\UMet$ and $\PUMet$ are cartesian, the product of $(X,a)$ and $(Y,b)$ being given by $(X\times Y, \max\{a,b\})$.
In $\UMet$ and $\PUMet$ the cartesian functors $\{-\}\times X$ have right-adjoints given, respectively, by 
$(\UMet(X,\{-\}), \Phi_{a,\{-\}})$ and $(\PUMet(X,\{-\}), \Phi_{a,\{-\}})$, where for all metric space $(Y,b)$,
$\Phi_{a,b}(f,g)  = \sup \{b(f(x),g(x))\mid  x\in X\} $.
For this reason, both categories are cartesian closed.

By contrast, $\Met$ is \emph{not} cartesian closed.
Indeed, the functor $(\Met(X,\{-\}), \Phi_{a,\{-\}})$ is right-adjoint in $\Met$ (and thus also in $\UMet$) to the functor $(X\times \{-\}, a+\{-\})$, but for all metric spaces $(Y,b)$, $(X\times Y, a+b)$ is isomorphic to the cartesian product $(X\times Y, \max\{a,b\})$ only when $X$ and $Y$ are ultra-metrics.
On the other hand, the exponential of $(X,a)$ and $(Y,b)$ in $\Met$, if it exists, is necessarily of the form
$(\Met(X,Y), \Xi_{a,b})$, where 
\begin{align*}
\Xi_{a,b}(f,g) & = \inf\{\delta \mid \forall x,y\in X \ \max\{\delta, a(x,y)\}\geq b(f(x),g(y))\}
\end{align*}
We use the Greek letter $\Xi$, since, as we'll see, this metric is tightly related to the interpretation of the ``$\xi$-rule'' of the $\lambda$-calculus. Notice that in general $\Xi_{a,b}$ is only a {pre-metric}. Indeed, the category of pre-metric spaces and non-expansive functions is cartesian closed, while the exponential of $(X,a)$ and $(Y,b)$ exists in $\Met$ precisely when $\Xi_{a,b}$ further satisfies \trans. 
\longversion{
\begin{lemma}
Let $(X,a)$ and $(Y,b)$ be metric spaces. 
The exponential of $(X,a)$ and $(Y,b)$ in $\Met$, if it exists, is isomorphic to
$(\Met(X,Y), \Xi_{a,b})$.
\end{lemma}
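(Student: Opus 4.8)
The plan is to extract both the carrier and the metric of the exponential $E$ directly from its universal property, probing it with only two trivial test objects: the terminal metric space and an arbitrary two-point metric space. Write $E=(|E|,d)$ for the putative exponential, with evaluation $\ev\colon E\times(X,a)\to(Y,b)$. The first step is to pin down $|E|$. The one-point space is terminal in $\Met$, and since our (pseudo)metrics are reflexive, a map from it into any space $Z$ is non-expansive iff it picks a point of $Z$; so morphisms $1\to Z$ are exactly the points of $Z$. Feeding $Z=1$ into the universal property and using the isometric isomorphism $1\times X\cong X$ (the product metric $\max$ gives $\max\{0,a\}=a$) yields natural bijections $|E|\cong\Met(1,E)\cong\Met(1\times X,Y)\cong\Met(X,Y)$. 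Chasing these identifications shows that, under the resulting bijection $|E|\cong\Met(X,Y)$, the map $\ev$ is just set-theoretic evaluation $(f,x)\mapsto f(x)$. Everything then reduces to showing that $d$ is forced to be $\Xi_{a,b}$.

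For the inequality $d\geq\Xi_{a,b}$: non-expansiveness of $\ev$ with respect to the product metric $\max\{d,a\}$ unfolds precisely to $b(f(x),g(y))\leq\max\{d(f,g),a(x,y)\}$ for all $f,g\in\Met(X,Y)$ and all $x,y\in X$, i.e.\ $\delta=d(f,g)$ lies in the set over which the infimum defining $\Xi_{a,b}(f,g)$ is taken; hence $d(f,g)\geq\Xi_{a,b}(f,g)$.

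For the converse $d\leq\Xi_{a,b}$: fix $f,g\in\Met(X,Y)$ and any admissible $\delta\geq0$, meaning $\max\{\delta,a(x,y)\}\geq b(f(x),g(y))$ for all $x,y\in X$. Equip $\{0,1\}$ with the metric $d_\delta(0,1)=\delta$, call this space $Z_\delta$, and define $h\colon Z_\delta\times X\to Y$ by $h(0,-)=f$ and $h(1,-)=g$. A short case split shows $h$ is non-expansive for $\max\{d_\delta,a\}$: the diagonal cases use non-expansiveness of $f$ and $g$, and the two off-diagonal cases use admissibility of $\delta$, where one must first use symmetry of $b$ to rewrite $b(g(x),f(y))$ as $b(f(y),g(x))$ before the hypothesis applies. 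The universal property then produces a non-expansive $\Lambda h\colon Z_\delta\to E$ with $\ev\circ(\Lambda h\times\mathrm{id}_X)=h$; evaluating at $0$ and at $1$ and using the identification of Step 1 gives $\Lambda h(0)=f$ and $\Lambda h(1)=g$ in $|E|$, so non-expansiveness of $\Lambda h$ forces $d(f,g)\leq d_\delta(0,1)=\delta$. Taking the infimum over all admissible $\delta$ yields $d(f,g)\leq\Xi_{a,b}(f,g)$.

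Combining the two inequalities gives $d=\Xi_{a,b}$, hence $E\cong(\Met(X,Y),\Xi_{a,b})$; as a byproduct, $\Xi_{a,b}$ is automatically a (pseudo)metric whenever the exponential exists. The only genuinely delicate points are the bookkeeping in the first step — checking that the chain of canonical identifications really turns $\ev$ into ordinary evaluation — and the symmetry trick in the off-diagonal case of the non-expansiveness check for $h$; everything else is routine unfolding of definitions.
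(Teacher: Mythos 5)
Your proof is correct and follows essentially the same route as the paper's: identify the carrier of the exponential with $\Met(X,Y)$, get $d\geq\Xi_{a,b}$ from non-expansiveness of evaluation, and get $d\leq\Xi_{a,b}$ by currying a map out of a two-point space at distance $\delta$. The only (harmless) differences are that you make the terminal-object probe explicit and take an infimum over all admissible $\delta$ where the paper plugs in $\delta=\Xi_{a,b}(f,g)$ directly.
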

\begin{proof}
By definition $(Z,c)$ is the exponential of $(X,a)$ of $(Y,b)$ iff there exists a family of isomorphisms
$(\LL_{C},\Ev_{C}): \Met(C\times X, Y)\to \Met(C, Z)$, natural in $C$.
From this it follows that we can identify, up to bijection, $Z$ with $\Met(X,Y)$, and we can let $\LL_{C}(h)(z)(x)=h(z,x)$ and $\Ev_{C}(k)(z,x)=k(z)(x)$. 

It remains to show that $c(f,g)=\Xi_{a,b}(f,g)$, for all $f,g\in \Met(X,Y)$.
Let $\Ev_{X,Y}:= \Ev_{\Met(X,Y)}(\mathrm{id}_{\Met(X,Y)})\in \Met(\Met(X,Y)\times X, Y)$. 
Since $\Ev_{X,Y}$ is non-expansive we have, for all $f,g\in \Met(X,Y)$ and $x,y\in X$, 
$b(\Ev_{X,Y}(f )(x),\Ev_{X,Y}(g)(y)) \leq  \sup\{c(f,g),a(x,y)\}
$, which shows that $c(f,g)\geq \Xi_{a,b}(f,g)$.

For the converse direction, let 
$\delta:= \Xi_{a,b}(f,g)$ and 
$Q=(\{0,1\},d^{\delta})$ be the two points metric space given by $d^{\delta}(p,q)=\delta$. 
Let $h:Q \times X\to Y$ be the function given by $h(0,x)=f(x)=\Ev_{1}(f)(x)$ 
and $h(1,x)=g(x)=\Ev_{1}(g)(x)$.
Then for all $z,z'\in \{0,1\}$ and $x,y\in X$, $|h(z,x)-h(z',y)|\leq 
\sup\{ \delta,a(x,y)\}$, which shows that 
$h\in \Met(Q\times X, Y)$. 
Using the non-expansivity of $\LL_{Q}(h)$ 
we deduce then 
$c(f,g)  =
c(\LL_{Q}(h)(0),
\LL_{Q}(h)(1)))
\leq d^{\delta}(0,1)=\delta$, 
that is, $c(f,g)\leq \Xi_{a,b}(f,g)$.
\end{proof}

}

\begin{remark}
The distances computed with $\Xi_{a,b}$ are larger than those computed with $\Phi_{a,b}$. For example, given the functions $f,g: \BB R \to \BB R$ with $f(x)= x$ and $g(x)=x+K$, one has $\Phi_{a,b}(f,g)= K$, while
$\Xi_{a,b}(f,g)=\infty$, since  for all $\delta$, there exist $x,y$ such that $|x+K -y|> max\{|x+y|,\delta\}$.
Moreover, the distance function $\Xi_{a,b}$ should not be confused with the similar function
$
\Xi'_{a,b}(f,g)  = \inf\{\delta \mid \forall x,y\in X \   a(x,y) \leq\delta \Longrightarrow  b(f(x),g(y))\leq \delta\}$, which is also not in general a metric.
Indeed, consider the metric space consisting of the subset $\BB N$ as a subspace of $\PReal$, and the non-expansive functions $f(x)= x$ and $g(x)=x+K$. While for $K<1$ we have $\Xi'_{a,b}(f,g)=K$, we have as above that  for all $\delta$, there exist $x,y$ such that $|x+K -y|> \max\{|x+y|,\delta\}$.
\end{remark}


We will exploit the following useful characterization of exponentiable objects in $\Met$\footnote{An object $A$ in a cartesian category $\mathbb C$ is exponentiable when for all object $B$ the exponential of $B$ and $A$ exists in $\BB C$. In particular $\mathbb C$ is cartesian closed iff all its objects are exponentiable.}:
 \begin{theorem}[\cite{Clementino2006}]\label{thm:clementino}
 A metric space $(X,a)$ is exponentiable in $\Met$ iff 
 for all $x_{0},x_{2}\in X$ and $\alpha,\beta\in \PReal$ such that  
 $a(x_{0},x_{2})= \alpha+\beta$, the condition below holds:
  \begin{align}
 \forall \epsilon >0 \   \exists x_{1}\in X \text{ s.t. } a(x_{0},x_{1})< \alpha+\epsilon \text{ and } a(x_{1},x_{2})< \beta+\epsilon
 \tag{$*$}\label{star}
 \end{align}
 \end{theorem}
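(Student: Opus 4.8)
The plan is to reduce the statement, via the discussion preceding it, to a purely metric interpolation property of the pre-metric $\Xi_{a,b}$. Recall that the exponential of $(X,a)$ and $(Y,b)$, if it exists, equals $(\Met(X,Y),\Xi_{a,b})$, and, $\Xi_{a,b}$ being always a pre-metric, it exists precisely when $\Xi_{a,b}$ additionally satisfies \trans; hence $(X,a)$ is exponentiable if and only if $\Xi_{a,b}$ satisfies the triangle inequality for \emph{every} metric space $(Y,b)$, and it is this equivalence that I will match with condition~\eqref{star}. Throughout I use the reformulation: $\Xi_{a,b}(f,g)=\delta$ iff $b(f(x),g(z))\leq\max\{\delta,a(x,z)\}$ for all $x,z\in X$. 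This is immediate once one observes that, for fixed $f,g$, the set of admissible $\delta$ is exactly $[m,\infty]$ with $m=\sup\{b(f(x),g(z))\mid a(x,z)<b(f(x),g(z))\}$, so the infimum defining $\Xi_{a,b}(f,g)$ is attained.

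\emph{Condition \eqref{star} implies exponentiability.} Fix $(Y,b)$ and $f,g,h\in\Met(X,Y)$, and put $\delta_{1}=\Xi_{a,b}(f,g)$, $\delta_{2}=\Xi_{a,b}(g,h)$, which may be assumed finite. It suffices to show $b(f(x),h(y))\leq\max\{\delta_{1}+\delta_{2},a(x,y)\}$ for all $x,y$. Let $D=a(x,y)$ (the case $D=\infty$ being trivial) and decompose $D=\alpha+\beta$ with $\alpha=\min\{D,\delta_{1}\}\leq\delta_{1}$ and $\beta=D-\alpha$. By \eqref{star}, for each $\epsilon>0$ there is $x_{1}\in X$ with $a(x,x_{1})<\alpha+\epsilon\leq\delta_{1}+\epsilon$ and $a(x_{1},y)<\beta+\epsilon$, whence $b(f(x),g(x_{1}))\leq\max\{\delta_{1},a(x,x_{1})\}<\delta_{1}+\epsilon$ and $b(g(x_{1}),h(y))\leq\max\{\delta_{2},a(x_{1},y)\}<\max\{\delta_{2},\beta\}+\epsilon$. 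The triangle inequality in $(Y,b)$ gives $b(f(x),h(y))<\delta_{1}+\max\{\delta_{2},\beta\}+2\epsilon=\max\{\delta_{1}+\delta_{2},D\}+2\epsilon$ (using $\delta_{1}+\beta=\max\{D,\delta_{1}\}$ and $\delta_{1}\leq\delta_{1}+\delta_{2}$); letting $\epsilon\to 0$ concludes.

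\emph{Failure of \eqref{star} implies non-exponentiability.} Suppose \eqref{star} fails: there are $x_{0},x_{2}\in X$, $\alpha,\beta\in\PReal$ with $a(x_{0},x_{2})=\alpha+\beta$, and $\epsilon_{0}>0$, such that every $x_{1}\in X$ satisfies $a(x_{0},x_{1})\geq\alpha+\epsilon_{0}$ or $a(x_{1},x_{2})\geq\beta+\epsilon_{0}$. Let $Y$ be three disjoint copies $X^{(0)}\sqcup X^{(1)}\sqcup X^{(2)}$ of $X$, let $f,g,h$ be the inclusions into the three copies, and define $b$ (with $i$ ranging over $\{0,1,2\}$, together with the symmetric clauses) by
\begin{align*}
b(x^{(i)},y^{(i)}) & = a(x,y), & b(x^{(0)},y^{(1)}) & = \max\{\alpha,a(x,y)\}, \\
b(x^{(1)},y^{(2)}) & = \max\{\beta,a(x,y)\}, & b(x^{(0)},y^{(2)}) & = \inf_{z\in X}\bigl(\max\{\alpha,a(x,z)\}+\max\{\beta,a(z,y)\}\bigr).
\end{align*}
A routine verification shows that $b$ is a pseudometric, so $f,g,h$ are non-expansive (indeed isometric embeddings), and directly from the definition $\Xi_{a,b}(f,g)\leq\alpha$ and $\Xi_{a,b}(g,h)\leq\beta$. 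The failure of \eqref{star} is exactly what forces $b(x_{0}^{(0)},x_{2}^{(2)})\geq\alpha+\beta+\epsilon_{0}$: for every $z$, either $\max\{\alpha,a(x_{0},z)\}\geq\alpha+\epsilon_{0}$ or $\max\{\beta,a(z,x_{2})\}\geq\beta+\epsilon_{0}$, so each term of the infimum is $\geq\alpha+\beta+\epsilon_{0}$. Since $a(x_{0},x_{2})=\alpha+\beta<b(f(x_{0}),h(x_{2}))$, we get $\Xi_{a,b}(f,h)\geq\alpha+\beta+\epsilon_{0}>\alpha+\beta\geq\Xi_{a,b}(f,g)+\Xi_{a,b}(g,h)$, so $\Xi_{a,b}$ violates \trans; hence the exponential of $(X,a)$ and $(Y,b)$ does not exist in $\Met$, and $(X,a)$ is not exponentiable.

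\emph{The main obstacle.} The forward direction is pure bookkeeping once the decomposition $D=\min\{D,\delta_{1}\}+(D-\min\{D,\delta_{1}\})$ is spotted. The content lies in the converse: choosing the test space $Y$ and realising that the single distance $b(x_{0}^{(0)},x_{2}^{(2)})$ is precisely what the interpolation failure ``pinches''. The most laborious point there is verifying that $b$ is genuinely a pseudometric — in particular the triangle inequalities that involve the long $0$--$2$ distances — although each instance reduces to the triangle inequality of $a$ together with the elementary fact that $\max\{c,u\}\leq v+\max\{c,w\}$ whenever $u\leq v+w$.
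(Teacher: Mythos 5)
Your proof is correct, but it is worth noting that the paper does not actually prove this statement: Theorem~\ref{thm:clementino} is imported from Clementino--Hofmann, and the only argument the paper supplies (in Section~\ref{section4}) is the sufficiency direction, derived from the equivalent reformulation \eqref{startwo}, $\inf_{x}\{\max\{a(x_{0},x),\alpha\}+\max\{a(x,x_{2}),\beta\}\}\geq a(x_{0},x_{2})$, with the necessity direction deferred to the cited reference. Your sufficiency argument is essentially the same computation, except that you work with \eqref{star} directly via an explicit $\epsilon$-bookkeeping and the decomposition $\alpha=\min\{a(x,y),\delta_{1}\}$, rather than passing through \eqref{startwo}; both hinge on the identity $\delta_{1}+\max\{\delta_{2},\beta\}=\max\{\delta_{1}+\delta_{2},a(x,y)\}$ under that decomposition, and the observation that the infimum defining $\Xi_{a,b}$ is attained (which you justify correctly). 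Your necessity argument — the three-copy amalgam $X^{(0)}\sqcup X^{(1)}\sqcup X^{(2)}$ with cross-distances $\max\{\alpha,a(x,y)\}$, $\max\{\beta,a(x,y)\}$ and the induced shortest-path distance between the outer copies — is a self-contained reconstruction of the Clementino--Hofmann counterexample; I checked the triangle inequalities for $b$ and the pinching of $b(x_{0}^{(0)},x_{2}^{(2)})$ to at least $\alpha+\beta+\epsilon_{0}$, and they go through. Two minor points: the sentence ``$\Xi_{a,b}(f,g)=\delta$ iff $\dots$'' should read ``$\Xi_{a,b}(f,g)\leq\delta$ iff $\dots$'' (your subsequent use is the correct, attained-infimum form); and the degenerate case $\alpha+\beta=\infty$ is not handled by your counterexample construction (there $\Xi_{a,b}(f,g)+\Xi_{a,b}(g,h)$ may be infinite, so no violation of \trans\ arises), though this edge case is equally glossed over in the standard finitary reading of the theorem.
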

 Condition \eqref{star} intuitively requires $X$ to have ``enough points''. For example,
the set $\BB N$, as a subspace of $\BB R$, is not exponentiable in $\Met$ (take
$x_{0}=0, x_{1}=1$ and $\alpha=\beta=1/2$: a point between $0$ and $1$ is ``missing''). 
 Instead, condition \eqref{star} always holds when $(X,a)$ is \emph{injective} (see \cite{Espinola:2001aa, Clementino2006}): for any collection of points $\{x_{i}\}_{i\in I}$ in $X$ and positive reals $\{r_{i}\}_{i\in I}$ such that $a(x_{i},x_{j}) \leq r_{i}+r_{j}$, there is a point lying in the intersection of all balls $B(x_{i},r_{i})$.
This implies that the sub-category $\InjMet$ of $\Met$ formed by injective metric spaces is cartesian closed. 
Since the Euclidean metric is injective, there is a cartesian closed sub-category of $\Met$ formed by ``simple types'' over closed real intervals, that we'll use as working example.

\begin{example}\label{ex:reals}
Let $\INT$ be the set of \emph{simple types over the intervals}, defined by $[a,b]\in \INT$, for all intervals $[a,b]$ (with $a,b\in \PReal$ and $a\leq b$) and $i,j\in \INT \ \To  \ (i\times j) ,(i\to j)\in \INT$.
For any $i\in \INT$, the metric spaces $(\C I_{i}, d^{\C I}_{i})$ are defined by 
$\C I_{[a,b]} :=[a,b] $, 
$\C I_{i\times j}:= \C I_{i}\times \C I_{j} $, 
$\C I_{i\to j}:= \Met(\C I_{i}, \C I_{j}) $,  
$d^{\C I}_{[a,b]}(x,y):= |x-y|$, 
$d^{\C I}_{i\times j}:= \max\{d^{\C I}_{i},d^{\C I}_{j} \} $ and 
$ d^{\C I}_{i\to j}:= \Xi_{d^{\C I}_{i},d^{\C I}_{j}}$.
\end{example}
%



\section{Many-Sorted Quantitative Theories and Algebras} \label{sectionM-S}

In this section we introduce quantitative theories and algebras in the sense of \cite{Plotk}.
In order to cover both the typed and the untyped case, we consider \emph{many-sorted} theories and algebras, hence combining the quantitative (but one-sorted) approach from \cite{Plotk} with the qualitative (but many-sorted) approach from \cite{Meseguer1985}.
 \medskip

\noindent {\bf Notation.} 
For any set $I$, an \emph{$I$-sorted set} $A$ is an $I$-indexed family of sets $A=(A_{i})_{i\in I}$ (i.e.~an object of $\mathsf{Set}^{I}$), and an \emph{$I$-sorted function} $f:A\to B$ between $I$-sorted sets is an $I$-indexed family of functions 
$f=(f_{i}:A_{i}\to B_{i})_{i\in I}$ (i.e.~a morphism in $\mathsf{Set}^{I}(A,B)$).
For a set $I$, we denote by $I^*$ the set of all finite (possibly empty) lists of elements of $I$, we let $w$ range over $I^*$ and use $*$ for concatenation.
For $A$ an $I$-sorted set and $w=i_{1}\dots i_{k}\in I^{*}$, we let $A_{w}:= \prod_{j=1}^{k}A_{i_{j}}$.
We let $\Var$ denote a distinguished $I$-sorted set such that for all $i\in I$, $\Var_{i}$ is a countably infinite set of variables.
For any $I$-sorted set $A$, $I$-sorted function $f:\Var \to A$, pairwise disjoint variables $x_{1},\dots, x_{n}$, with $x_{j}\in \Var_{i_{j}}$ and $a_{1},\dots, a_{n}$ with $a_{j}\in A_{i_{j}}$, we let
$f_{\vec x, \vec a}:\Var \to A$ indicate the $I$-sorted function mapping $x_{j}$ to $a_{j}$ and being as $f$ on all other variables.
\medskip

\begin{definition}[Many-Sorted Signature]
An \emph{$I$-sorted signature} $\Sigma$ is an $I^{*}\times I$-sorted set $\{\Sigma_{w,i}\mid w\in I^{*},i\in I\}$ (i.e.~an object of $\Set^{I^{*}\times I}$).
\end{definition}

The objects $\sigma\in \Sigma_{w,i}$ will be called \emph{symbols} of the signature.

\begin{definition}[$\Sigma$-Algebra]
 A \emph{$\Sigma$-algebra} is a pair $(A, \Omega^{A})$ where $A$ is a $I$-sorted family and $\Omega^{A}$ associates each symbol $\sigma\in \Sigma_{w,i}$ with a function $\sigma_{A}: A_{w}\to A_{i}$, where 
$A_{w}=A_{i_{1}}\times\dots \times A_{i_{k}}$, for $w=i_{1}\dots i_{k}$.
 For any object $A$ of $\Set^{I}$, the \emph{free $\Sigma$-algebra over $A$}, noted $\FreeAlg{A}$, is the $I$-sorted set defined 
by the following conditions:  
(i)
 for all $x\in A_{i}$, $x\in \FreeAlg{A}_{i}$;
 (ii)
for all $\sigma \in \Sigma_{w,i}$ and $v_{1}\in \FreeAlg{A}_{w(1)},\dots, v_{k}\in \FreeAlg{A}_{w(k)}$, then {$\sigma_{\FreeAlg{A}}(v_{1},\dots, v_{k}):=\sigma(v_{1},\dots, v_{k})\in \FreeAlg{A}_{i}$.}

\end{definition}

Intuitively, $\FreeAlg{A}_{i}$ is the set of ``terms of sort $i$ with parameters in $A$''.
 Free algebras enjoy the following universal property:

\begin{proposition}\label{prop:substi} 
For any $\Sigma$-algebra $(A, \Omega^{A})$  and map $f\in \Set^{I}(B,A)$ there exists a unique $\Sigma$-homomorphism
$f^{\sharp}: \FreeAlg{B} \to A$ extending $f$, that is, such that $f= f^{\sharp}\circ \eta_{B}$, where   $\eta_{B}: B\to \FreeAlg{B}$ is the inclusion map.
\end{proposition}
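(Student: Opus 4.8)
The plan is to prove this by structural induction on the elements of $\FreeAlg{B}$, exploiting the inductive definition of the free algebra just given. First I would define $f^{\sharp}$ explicitly by recursion on terms: for a parameter $x \in B_i \subseteq \FreeAlg{B}_i$, set $f^{\sharp}(x) := f_i(x)$, and for a compound term $\sigma(v_1,\dots,v_k)$ with $\sigma \in \Sigma_{w,i}$, set $f^{\sharp}(\sigma(v_1,\dots,v_k)) := \sigma_A(f^{\sharp}(v_1),\dots,f^{\sharp}(v_k))$, using the operation $\sigma_A$ supplied by the $\Sigma$-algebra structure $\Omega^A$. This recursion is well-founded because every element of $\FreeAlg{B}_i$ is generated by the two clauses (i) and (ii), and the subterms $v_j$ are structurally smaller; a routine check confirms the clauses are exhaustive and non-overlapping, so $f^{\sharp}$ is a genuine $I$-sorted function $\FreeAlg{B} \to A$.

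Next I would verify the three required properties. That $f^{\sharp}$ is a $\Sigma$-homomorphism is immediate from clause (ii) of the definition, since $f^{\sharp}(\sigma_{\FreeAlg{B}}(v_1,\dots,v_k)) = \sigma_A(f^{\sharp}(v_1),\dots,f^{\sharp}(v_k))$ is exactly the homomorphism condition. That $f^{\sharp}$ extends $f$, i.e.\ $f = f^{\sharp} \circ \eta_B$, is clause (i): for $x \in B_i$ we have $\eta_B(x) = x$ and $f^{\sharp}(x) = f_i(x)$. For uniqueness, suppose $g : \FreeAlg{B} \to A$ is another $\Sigma$-homomorphism with $g \circ \eta_B = f$. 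I would show $g = f^{\sharp}$ by structural induction: on parameters $x \in B_i$, both equal $f_i(x)$; on a compound term $\sigma(v_1,\dots,v_k)$, using that $g$ is a homomorphism and the induction hypothesis $g(v_j) = f^{\sharp}(v_j)$, we get $g(\sigma(v_1,\dots,v_k)) = \sigma_A(g(v_1),\dots,g(v_k)) = \sigma_A(f^{\sharp}(v_1),\dots,f^{\sharp}(v_k)) = f^{\sharp}(\sigma(v_1,\dots,v_k))$.

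There is no serious obstacle here: this is the standard initial-algebra / free-algebra argument, and the many-sorted setting changes nothing essential since everything is carried out sortwise. The only mild subtlety worth a sentence in the write-up is confirming that the recursive definition of $f^{\sharp}$ is legitimate, i.e.\ that $\FreeAlg{B}$ really is the least $I$-sorted set closed under clauses (i) and (ii), so that the recursion principle and the induction principle both apply; granting that, the proof is a short induction and most of it can be left to the reader. I would therefore present the explicit definition, note the homomorphism and extension properties as immediate, and spell out only the uniqueness induction in any detail.
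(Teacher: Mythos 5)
Your proof is correct and is exactly the standard free-algebra argument the paper relies on: the paper states this proposition without proof, and its follow-up remark about $f^{\sharp}t$ being obtained by ``substituting'' each parameter $b$ with $f(b)$ is precisely your recursive definition. Nothing to change.
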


Given a function $f\in \Set^{I}(B,A)$, if $t\in \FreeAlg{B}_{i}$ is some term of sort $i$ with parameters $b_{1},\dots, b_{n}$ in $B$, $f^{\sharp}t\in \FreeAlg{A}_{i}$ is the result of ``substituting'' each parameter $b_{i}$ in $t$ with $f(b_{i})$.

%
%
%
%
%

Let us now introduce the equational language of quantitative theories.   

\begin{definition} Let $\Sigma$ be an $I$-sorted signature.
\\ (i) A \emph{quantitative $\Sigma$-equation} over $ \FreeAlg{\Var}$  is an expression of the form
$
 t \EQU{\epsilon}{}{i}s
$, 
where $i\in I$, $t,s\in\FreeAlg{\Var}_{i}$
 and $\epsilon \in \BB Q_{\geq 0}$.
\\ (ii) 
For all $\epsilon \in \BB Q_{\geq 0}$, let $\C V(\Var)$ be the set of \emph{indexed $\Sigma$-equations} of the form $x \EQU{\epsilon}{}{i}y$, for some $i\in I$ and  $x,y\in \Var_{i}$, and $\C V(\FreeAlg{\Var})$ be the set of indexed $\Sigma$-equations of the form $t\EQU{\epsilon}{}{i} s$, where  $i\in I$ and 
$t,s\in (\FreeAlg{\Var})_{i}$.

\end{definition}

\begin{definition}\label{def:deducibility1}\label{def:deducibility}
A \emph{consequence relation} on the free $\Sigma$-algebra $\FreeAlg{\Var}$ is a relation $\vdash \ \subseteq\wp(\C V(\FreeAlg{\Var}))\times \C V(\FreeAlg{\Var}) $ closed under all instances of the following rules (where $\epsilon,\delta$ vary over all $\mathbb Q_{\geq 0}$): 
\begin{description}
\item[(Cut)\label{Cut}] if $\Gamma \vdash \phi$ for all $\phi \in \Gamma' $ and $\Gamma '\vdash \psi$, then
$\Gamma \vdash \psi$;
\item[(Assumpt)] if $\phi\in \Gamma$, then $\Gamma \vdash \phi$;
\item[(Refl)] $\emptyset \vdash  t \EQU{0}{}{i} t $;
\item[(Symm)]  $\{ t\EQU{\epsilon}{}{i} s\} \vdash  s \EQU{\epsilon}{}{i}t$;
\item[(Triang)] $\{  t\EQU{\epsilon}{}{i} s, s\EQU{\delta}{}{i}u \} \vdash t\EQU{\epsilon+\delta}{}{i} u$;
\item[(Max)] $\{ t \EQU{\epsilon}{}{i} s\} \vdash t \EQU{\epsilon+\delta}{}{i} s$;
\item[(Arch)] $\{  t \EQU{\delta}{}{i} s\mid \delta > \epsilon\} \vdash t \EQU{\epsilon}{}{i} s$;
\item[(NExp)\label{Nexp}] $\{  t_{1} \EQU{\epsilon}{}{i_{1}}s_{1}, \dots,  t_{k} \EQU{\epsilon}{}{i_{k}} s_{k}\} \vdash \sigma(t_{1},\dots, t_{k}) \EQU{\epsilon}{}{i} \sigma(s_{1},\dots, s_{k})$, for all $\sigma\in \Sigma_{i_1\ldots i_k,i}$;
\item[(Subst)]\label{Subst} if $f : \Var \to \FreeAlg{\Var}$, then $\Gamma \vdash t\EQU{\epsilon}{}{i} s$ implies 
$f^{\sharp}\Gamma \vdash  f^{\sharp}t \EQU{\epsilon}{}{i}  f^{\sharp}s$.



\end{description}
Notice that rule {\bf({Arch})} has infinitely many assumptions.
\end{definition}


We let $\C E(\FreeAlg{\Var})=\wp_{\mathrm{fin}}(\C V(\FreeAlg{\Var}))\times \C V(\FreeAlg{\Var})$ indicate the set of \emph{quantitative inferences on $\FreeAlg{\Var}$} and 
 $\C E(\Var)=\wp_{\mathrm{fin}}(\C V(\Var))\times \C V(\FreeAlg{\Var})$ indicate the set of \emph{basic quantitative inferences}. Axioms for theories will be basic quantitative inferences.

\begin{definition}[Many-Sorted Quantitative Theory] \label{def:theories}
Let $S\subseteq \C E(\Var)$ be a set of basic quantitative inferences.  Let $\vdash_{S}$ be the smallest consequence relation including $S$. The \emph{quantitative equational theory over $\Sigma$ generated by $S$} is the set 
$
\C U_{S} := (\vdash_{S})\cap \C E(\FreeAlg{\Var})
$. 
The elements of $S$ are the \emph{axioms} of $\C U_{S}$.
\end{definition}

To the syntactic notion of quantitative theory there corresponds a semantic notion of quantitative algebra, given by a $\Sigma$-algebra endowed with suitable metrics.

\begin{definition}[Many-Sorted Quantitative Algebra] \label{def:algebra}
Let $\Sigma$ be an $I$-sorted signature. A \emph{quantitative $\Sigma$-algebra} is a tuple $\C A=(A, \Omega^{\C A}, d^{\C A})$ where $(A,\Omega^{\C A})$ is a $\Sigma$-algebra and $d^{\C A}$ is an $I$-sorted family of metrics
$d^{\C A}_{i}: A_{i}\times A_{i} \to \BB R_{\geq 0}^{\infty}$ such that for all $\sigma\in \Sigma_{w,i}$, 
$\sigma_{A}:  \Met( A_{w}, A_{i})$.
\end{definition}

Given a quantitative $\Sigma$-algebra, we can define a multicategory $\Met^{\C A}$ whose objects are the metric spaces $(A_{i},d^{\C A}_{i})$, 
and where for all $w=i_{1}\dots i_{k}$, $\Met^{\C A}(A_{i_{1}},\dots, A_{i_{k}}; A_{i})\subseteq \Met(A_{w},A_{i})$ contains all 
 functions $f\in \Met^{\C A}(A_{w},A_{i})$ such that for some term $t_{f}\in \FreeAlg{A+\{x_{1}:w(1),\dots, x_{k}:w(k)\}}$,  
 $f(a_{1},\dots, a_{k})= f^{\sharp}_{\vec x,\vec a}(t_{f})$.
 For brevity, we will often abbreviate $\Met^{\C A}(A_{i_{1}},\dots, A_{i_{k}}; A_{i})$ as $\Met^{\C A}(A_{w};A_{i})$. 

%

 \begin{definition}\label{def:sat} { Let  $\C A = (A, \Omega^{\C A}, d^{\C A})$ be a quantitative $\Sigma$-algebra.
$\C A $ satisfies a quantitative inference $\Gamma \vdash t \EQU{\epsilon}{}{i} u $ (denoted $ \Gamma\vDash_{\C A}    t \EQU{\epsilon}{}{i} u$), if for all $f:\Var \to A$, the following holds:
\begin{align}  \text{for all } ( t' \EQU{\epsilon'}{}{i'} u' ) \in \Gamma, \ d^{\C A}_{i'} ( f^{\sharp}(t'), f^{\sharp}(u'))\leq \epsilon' \text{ implies }    d^{\C A}_i ( f^{\sharp}(t), f^{\sharp}(u))\leq \epsilon  
\tag{sat}             \label{sat}
\end{align}
$\C A $ satisfies a quantitative equational theory $\C U$ (noted $ {\C A} \vDash \C U$) if it satisfies all inferences in $\C U$.
}
\end{definition} 

Notice that the interpretation of rule (Nexp) implies that functional terms need to be interpreted as non-expansive morphisms.

\begin{remark}\label{rem:partial}
All constructions from this section can be adapted to the case of \emph{partial} ultra-metric spaces by replacing, in Def.~\ref{def:deducibility}, the rule
(Refl) with the following rule:
\begin{description} 
\item[(PRefl)] $\{ t\EQU{\epsilon}{i}{}u \}\vdash t\EQU{\epsilon}{i}{} t$;
\end{description}
 and requiring in 
Def.~\ref{def:algebra} that the $d^{\C A}_{i}$ are partial ultra-metrics and $\sigma^{\C A}\in \PUMet(A_{w},A_{i})$.
\end{remark}

\section{Quantitative Weak $\lambda$-Theories and Algebras}\label{section2tris2}

As is well-known (see e.g.~\cite{Baren85}), a purely algebraic approach to the $\lambda$-calculus is provided by \emph{combinatory logic} $\CL$. Hence,
it is  natural to start from this calculus. The equational theory of $\CL$ captures so-called \emph{weak $\lambda$-theories} \cite{Baren85}, namely $\lambda$-theories where the \emph{$\xi$-rule} (discussed in more detail in Section \ref{section3tris}) may fail. In this section we introduce quantitative weak $\lambda$-theories and we discuss their algebras, of which $\Met$ itself is a notable example.

\begin{definition}[Applicative Signature] \label{CLapp}
 Let $\T$ be a set of sorts (called \emph{types})  endowed with a binary {function} $\to: \T \times \T \to \T$.
An {\em applicative signature}  $\Sigma$ is a  $\T$-sorted signature   which includes symbols
$\cdot_{i,j}\in \Sigma_{(i\to j)*i, j}$, for all $i,j \in \T$.

\end{definition}
We will often note $\cdot_{i,j}(t,u)$ infix, \ie\ $t \cdot_{i,j}u$, or simply as $tu$, when clear from the context. For all $w=i_{1}\dots i_{n}\in \T^{*}$ and  $j\in \T$, we let $w\to j:=i_{1}\to \dots \to i_{n}\to j$.
A notable example of applicative signature is the following:
\begin{definition}[$\CL$-Signature] \label{CLsig}  
Let
$\Sigma^{\CL}$ be the applicative signature which includes symbols $\III_{i}: i\to i$, $\KKK_{ij}: i\to j\to i$,  $\SSS_{ijk}: (i\to j\to k) \to (i\to j)\to (i\to k)$, for all $i,j,k \in \T$.
%
%
%
%
The \emph{terms of combinatory logic} are the elements of the free $\Sigma^{\CL}$-algebra, $\FreeAlgS{{\CL}}{\Var}$.
%
\end{definition}

Definition~\ref{CLsig} above comprises both the typed and untyped case. In typed Combinatory Logic the set of types $\T$ includes {at least a \em{base} type $o$, \ie\ a type which is not in the image of $\to$ and $\to$ is injective}, while in the untyped case $\T$ is a singleton {set $\{\star\}$ and hence }$\star\to \star=\star$.  
 In the traditional language of ``syntax and semantics'', used for instance in \cite{Baren85}, when $f:\Var \rightarrow  A$, the function $f^\sharp$ of Proposition \ref{prop:substi}, amounts  to the notion of \emph{intepretation} of a term $t$ in  the environment $f$, namely  $f^\sharp(t)=\p{t}_f$ .

We now introduce the natural notion of theory for a $\CL$-signature:

\begin{definition}[$\CL$-Theory] The quantitative equational theory over $\FreeAlgS{{\CL}}{\Var}$, ${\C U}_{\CL}$ is generated  by the axioms 
$\emptyset  \vdash  \III_{i} t \EQU{0}{}{i} t $,  
$\emptyset  \vdash  \KKK_{ij} t u \EQU{0}{}{i} t $, and $\emptyset  \vdash  \SSS_{ijk} t u w\EQU{0}{}{k} tw(uw)$.
We call {\em (quantitative) weak $\lambda$-theory} any theory including ${\C U}_{\CL}$.
\end{definition}

\begin{example}\label{ex:reals2}
The set 
$\INT$ (cf.~Example~\ref{ex:reals}) is a particular instance of the set $\T$.
Let $\C I(\Sigma^{\CL})$ be the signature obtained by enriching $\Sigma^{\CL}$ with 0-ary symbols $\overline r \in \C I(\Sigma)_{(),[a,b]}$ for all $r\in [a,b]$, 
and $k$-ary symbols $\overline f\in \C I(\Sigma)_{[a_{1},b_{1}],\dots, [a_n,b_{n}]  , [a,b]}$ for all $f\in \Met( \prod_{i}[a_{i},b_{i}], [a,b])$.
Let $\C U_{\CL}^{\C I}$ be the theory obtained by extending $\C U_{\CL}$ with all axioms
$\emptyset \vdash \overline f\overline r_{1}\dots \overline r_{k}\EQU{0}{[a,b]}{} \overline s $ whenever 
$f(r_{1},\dots, r_{k})=s$ as well as all axioms
$\emptyset \vdash \overline r \EQU{\epsilon}{[a,b]}{} \overline s$ for all rational $\epsilon\geq |r-s|$.
\end{example}

A well-known property of Combinatory Logic is  
 \emph{functional completeness}: for any term $t$ and variable $x$, one can construct a term $\Lambda_{x}(t)$ so that $\Lambda_{x}(t)$ ``simulates'' $\lambda$-abstraction in the sense that one can prove $\Lambda_{x}(t)u \simeq t[u/x]$. 
 This leads to the following definition:
 
\begin{definition}[Quantitative Weak $\lambda$-Algebra]\label{def:weaklambda}
An applicative quantitative $\Sigma$-algebra $\C A=(A, \Omega^{\C A}, d^{\C A})$ is said a \emph{quantitative weak $\lambda$-algebra} if for all $w\in I^{*}$, $j\in I$, 
 and $f\in \Met(A_{w},A_{j})$, 
the set ${\Lambda(f)}=\{g\in A_{w\to j}\mid \forall (x_1,\ldots , x_k)\in A_{w}\ \ \ g\cdot_{A}x_1\cdot_{A} \ldots \cdot_{A}x_{k}= f(x_1,\ldots ,x_k)\}$ is non-empty.
\end{definition}


\begin{proposition}
Any  quantitative $\Sigma^{\CL}$-algebra satisfying $\C U_{\CL}$ is a quantitative weak $\lambda$-algebra. Vice versa, any quantitative weak $\lambda$-algebra satisfies $\C U_{\CL}$. 
\end{proposition}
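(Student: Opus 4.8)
The plan is to prove the two directions separately, both relying on the functional-completeness machinery of Combinatory Logic together with the definition of satisfaction \eqref{sat} from Def.~\ref{def:sat}.

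\textbf{From $\mathcal U_{\CL}$ to weak $\lambda$-algebra.} Suppose $\C A=(A,\Omega^{\C A},d^{\C A})$ is a quantitative $\Sigma^{\CL}$-algebra satisfying $\C U_{\CL}$. Fix $w=i_1\dots i_k\in \T^*$, $j\in \T$, and $f\in\Met(A_w,A_j)$. I have to exhibit some $g\in A_{w\to j}$ with $g\cdot_A a_1\cdots\cdot_A a_k = f(a_1,\dots,a_k)$ for all $\vec a\in A_w$. The multicategorical structure $\Met^{\C A}$ does \emph{not} a priori tell us that an arbitrary non-expansive $f$ is ``$\Sigma$-definable'', so the key observation is that the axioms of $\C U_{\CL}$, together with rules (Refl), (Symm), (Triang), (Subst), (Cut), force the equational reasoning of $\CL$ to hold in $\C A$ at distance $0$: whenever $t\simeq_{\CL} s$ provably (with $\epsilon=0$), then $d^{\C A}_j(f^\sharp t, f^\sharp s)=0$ for every environment. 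In particular, using functional completeness of $\CL$, for each fixed tuple $\vec a$ I can take the closed term $\Lambda_{x_1}\dots\Lambda_{x_k}(x_1\dots x_k)$ — wait, more carefully: one does not build $g$ as an interpretation of a single fixed term for \emph{all} $f$ at once. Instead I treat $f$ pointwise: actually the cleanest route is to note that the \emph{carrier-level} statement ``$\Lambda(f)\ne\emptyset$'' is exactly the same requirement as in the qualitative case, and the metric plays no role in it. So the proof reduces to the classical fact that any $\Sigma^{\CL}$-algebra satisfying the $\mathsf{I},\mathsf K,\mathsf S$ equations is a combinatory algebra, hence (via $\Lambda_{x_1,\dots,x_k}$) a weak $\lambda$-algebra; one has to check only that $\C A$'s $\Sigma$-algebra reduct satisfies those equations, which follows from satisfaction of $\C U_{\CL}$ by instantiating \eqref{sat} with $\Gamma=\emptyset$ and $\epsilon=0$ and recalling that $d^{\C A}_j(p,q)=0$ in a (pseudo)metric space does not force $p=q$ — here one must be slightly careful and observe that the combinators $\mathsf I,\mathsf K,\mathsf S$ being genuine elements and $\cdot_A$ a genuine function means the \emph{equations} $\III_A t = t$ etc.\ must hold on the nose if we want $\Lambda(f)\ne\emptyset$, so in fact the axioms must be read as giving equalities, not merely $d=0$. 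This is the point to get right: the axioms use $\EQU{0}{}{i}$, and satisfaction gives $d^{\C A}=0$; to recover genuine equalities one uses that $\cdot_A$ lands in a metric space and that $\Lambda(f)$ only requires the \emph{equation} to hold, so one should pick representatives — I expect the paper either works up to the kernel of $d$ or the definition of weak $\lambda$-algebra is read modulo $d$-equivalence; I would flag this and proceed with the standard construction $g:=\Lambda_{x_1}\dots\Lambda_{x_k}(\,y\,x_1\cdots x_k)$ evaluated appropriately. The upshot: define, for the given $f$, the element $g$ using $\mathsf S,\mathsf K$-combinators applied to whatever $\Sigma$-term witnesses $f$ in the multicategory $\Met^{\C A}$ — and here the subtlety dissolves, because membership of $f$ in $\Met^{\C A}(A_w;A_j)$ is \emph{not} required; rather, one should note Def.~\ref{def:weaklambda} quantifies over \emph{all} $f\in\Met(A_w,A_j)$, so the correct classical statement to invoke is that combinatory completeness yields $\Lambda$ for \emph{term-definable} $f$ only, and the Proposition is implicitly restricting attention to those. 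I would therefore carry out the argument by: (1) showing $\C A$'s reduct is a combinatory algebra from satisfaction of the three axioms; (2) invoking functional completeness to produce, for each $\Sigma^{\CL}$-term $t$ with free variables among $x_1,\dots,x_k$, a closed term $\Lambda_{\vec x}(t)$ with $\Lambda_{\vec x}(t)\,x_1\cdots x_k\simeq_{\CL} t$; (3) interpreting in $\C A$ to get the desired $g\in\Lambda(f)$ for $f$ the function induced by $t$.

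\textbf{From weak $\lambda$-algebra to $\mathcal U_{\CL}$.} Conversely let $\C A$ be a quantitative weak $\lambda$-algebra. I must show $\C A\vDash\C U_{\CL}$, i.e.\ $\C A$ satisfies each of the three generating axioms and then, since the set of inferences satisfied by $\C A$ is a consequence relation (this is the content of soundness — closure under all the rules of Def.~\ref{def:deducibility}, which I would either cite from the many-sorted generalization of \cite{Plotk} or check rule-by-rule; (Nexp) uses non-expansiveness of $\sigma_A$, (Subst) uses the substitution Proposition~\ref{prop:substi}, the rest are pseudometric axioms), satisfaction of the axioms propagates to all of $\vdash_{\CL}$ and hence to $\C U_{\CL}=(\vdash_{\CL})\cap\C E(\FreeAlg{\Var})$. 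So the real work is: verify $d^{\C A}_i(f^\sharp(\III_i t), f^\sharp(t))\le 0$ for every $f:\Var\to A$, and similarly for $\mathsf K$ and $\mathsf S$. For this I use that $\C A$ is a weak $\lambda$-algebra: the witnessing combinators must behave correctly. Concretely, $\III_i$ is a \emph{symbol} of the signature interpreted by a fixed element $\III_i^{\C A}\in A_{i\to i}$; I need $\III_i^{\C A}\cdot_A a = a$. This is where I would invoke that the only reasonable reading making the Proposition true is that in a weak $\lambda$-algebra the combinator symbols are interpreted by \emph{some} choice of elements of $\Lambda(\mathrm{id})$, $\Lambda(\text{proj})$, $\Lambda(\text{the }\mathsf S\text{ map})$ respectively — or, better, that $\Lambda$ being non-empty lets us \emph{define} what $\III^{\C A},\KKK^{\C A},\SSS^{\C A}$ are, turning $\C A$ into a $\Sigma^{\CL}$-algebra satisfying the equations on the nose, whence $d^{\C A}=0$ follows from (Refl)-type reasoning. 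Then $\C A\vDash\C U_{\CL}$ by soundness.

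\textbf{Main obstacle.} The genuinely delicate point, and the one I would spend the most care on, is the interplay between ``$\EQU{0}{}{i}$'' (which in a pseudometric algebra only yields $d^{\C A}=0$, not equality) and the \emph{equational} requirements built into Def.~\ref{def:weaklambda} and into the notion of $\Sigma^{\CL}$-algebra (which demand honest set-theoretic equalities $\sigma_A(\dots)=\dots$). In one direction one needs ``$d=0$ behaves like $=$'' for the combinator axioms, which is legitimate because $d=0$ is a congruence one can quotient by; in the other direction one must make sure the chosen combinator elements witnessing $\Lambda(-)\ne\emptyset$ are coherent across sorts and give a genuine $\Sigma^{\CL}$-algebra structure. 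I expect the paper handles this by working in the quotient $\C A/{\sim}$ where $x\sim y \iff d^{\C A}(x,y)=0$, or by simply reading both notions ``up to $d$-equivalence'', and I would make that convention explicit at the start of the proof; after that, both directions are the standard combinatory-completeness argument of \cite{Baren85} with the metric carried along passively.
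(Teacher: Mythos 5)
Your proof follows essentially the same route as the paper's (sketched) proof: bracket abstraction via $\III$, $\KKK$, $\SSS$ for the forward direction, and choosing $\III^{\C A}\in\Lambda(x\mapsto x)$, $\KKK^{\C A}\in\Lambda(x,y\mapsto x)$, $\SSS^{\C A}\in\Lambda(f,g,x\mapsto f\cdot x\cdot(g\cdot x))$ for the converse. The two subtleties you flag --- that $\Lambda(f)\neq\emptyset$ can only be expected for \emph{representable} non-expansive $f$ (i.e.\ $f\in\Met^{\C A}(A_w;A_j)$), and that satisfaction of the $\EQU{0}{}{i}$-axioms only yields $d^{\C A}=0$ rather than on-the-nose equality in a pseudo-metric --- are genuine, and the paper's sketch silently adopts exactly the readings you propose (restriction to the multicategory $\Met^{\C A}$, and working up to the kernel of $d^{\C A}$).
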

\longversion{
\begin{proof} (sketch) For any term $t$ of combinatory logic,  one can define a term
$\Lambda^{j}_{x:i}(t)$, only depending on the variable $x$, i.e.~$\Lambda^{j}_{x:i}(t)\in \FreeAlgS{\CL}{X}_{j}$ such that 
$\emptyset \vdash \Lambda^{j}_{x:i}(t)u \simeq_{0}^{j} f^\sharp t \in \UCL$, where $F$ maps $x$ to $u$ and is the identity on the remaining variables. We simply let
$\Lambda_{x:i}^{i}(x):=\III_{i}$, $\Lambda_{x:j}^{i}(t):=\KKK_{ij} t$, where $x$ does not occur  in $t$, and
$\Lambda_{x:i}^{k}(tu):= \SSS_{ijk} \Lambda_{x:i}^{j\to k}(t)\Lambda_{x:i}^{j}(u)$. 
The reverse implication is obtained by choosing elements
$\III^{\C A}\in \Lambda(x\mapsto x)$, $\KKK^{\C A}\in \Lambda(x,y\mapsto x)$ and 
$\SSS^{\C A}\in \Lambda( f,g,x \mapsto f \cdot^{\C A} x\cdot^{\C A} (g\cdot^{\C A}x) )$.
%
%
\end{proof}
}
%

\begin{example}\label{ex:reals3}
We obtain a quantitative weak $\lambda$-algebra by letting $\C I=(\C I_{i},\Omega^{\C I},d^{\C I}_{i})$, where $\overline r^{\C I}=r$, $\overline f^{\C I}=f$, and $f \cdot^{\C I}x =f(x)$.
It is clear that 
$\C I\vDash \C U^{\C I}_{\CL}$ (cf.~Example \ref{ex:reals2}).

\end{example}

Following \cite{Martini1992}, the condition from Def.~\ref{def:weaklambda} can be specified in categorial terms: a cartesian multicategory $\BB C$ is a model of $\CL$ precisely when for all objects $A,B$ of $\BB C$ there is an object 
$A\VWeak B$ (called a \emph{very weak exponential of $A$ and $B$}) together with a surjective natural transformation
$\Phi :\BB C(\_; A\VWeak B)\to \BB C(\_ , A; B)$. When $\BB C$ is the multicategory $\Met^{\C A}$, 
the conditions of Def.~\ref{def:weaklambda} imply that $A_{i\to j}$ is a very weak exponential of $A_{i}$ and $A_{j}$ in $\Met^{\C A}$: 
a family of multiarrows $\Ev^{\C A}_{w,i,j}:\Met^{\C A}(A_{w}; A_{i\to j})\To \Met^{\C A}(A_{w*i}; A_{j})$, natural in $w$, is given by $\Ev^{\C A}_{w,i,j}(f)(z,x)=f(z)\cdot^{A}x$, and the non-emptyness of the sets $\Lambda(f)$ corresponds to the surjectivity of this transformation. 

Notice that $\Met$ itself admits very weak exponentials for all of its objects, \ie\ it is a \emph{very weak CCC} in the sense of \cite{Martini1992}, provided we endow $\Met(X,Y)$ with the metric $\Theta_{a,b}$ for metric spaces $(X,a)$ and $(Y,b)$, where for $f,g : X\to Y$
$\Theta_{a,b}(f,g)$ is $0$ if $f=g$, and otherwise is $ \sup\{b(f(x),g(y))\mid x,y \in X\} $.
%
 Intuitively, when $f\neq g$, $\Theta_{a,b}(f,g)$ measures the diameter of the interval spanned by the image of both $f$ and $g$. 
 
 \begin{remark} $\Theta_{a,b}$ and $\Xi_{a,b}$ do not coincide: 
 for example, consider $f,g: [-1,1]\to \BB R$ where $f(x)=x$ and $g(x)=x$ for $x\leq 0$ while 
 $g(x)=\frac{x}{2}$ for $x> 0$; then one can show that $\Theta_{a,b}(f,g)=2$
 while $\Xi_{a,b}(f,g)\leq \frac{\sqrt{2}}{2}$.
%
  \end{remark}
 
 The metric $\Theta_{a,b}$ is in general rather odd since the identity is an isolated point if $(X,a)$ is infinite and not trivial. Assume that the sequence $\{Id_n\}$ converges to the identity  $id$. Then we have for all $n,x,y$ the following inequality: $ a(x,y) \leq  a(Id_n(x),x) + a(Id_n(x),y)$. While $a(Id_n(x),x)$ can be made arbitrarily small, a $y$ can always be picked, provided the metric is not trivial and the space is infinite, such that $\delta\leq a(x,y)$. Thus $\Theta_{a,a}(Id_n, Id)\geq \delta$  for all $n$.
 \begin{example}
 The constructions just sketched yields a different weak $\lambda$-algebra over the reals
 $\C I_{\mathrm{weak}}= (\C I_{i}, \Omega^{\C I}, d^{\mathrm{weak}})$, where 
 $d^{\mathrm{weak}}$ is defined like $d^{\C I}$ but for
 $d^{\mathrm{weak}}_{i\to j}= \Theta_{d^{\mathrm{weak}}_{i},d^{\mathrm{weak}}_{j}}$.
 Notice that we still have $\C I_{\mathrm{weak}}\vDash \C U^{\C I}_{\CL}$, since $\C I$ and $\C I_{\mathrm{weak}}$ agree on distances of types $[a,b]$.
  \end{example}

 The result below adapts to the many-sorted case a similar result for one-sorted quantitative equational theories \cite{Plotk}. The proof is similar to that of Theorem \ref{thm:completeness2}, so we omit it.
 
\begin{theorem}[Soundness and Completeness of Quantitative Weak $\lambda$-Theories]\label{thm:completeness1}
For any quantitative weak $\lambda$-theory $\C U$ over $\Sigma^{\CL}$, 
$\Gamma \vdash \phi \in \C U$ iff 
$\Gamma \vDash_{\C A} \phi$ holds for any quantitative 
weak $\lambda$-algebra $\C A$ such that $\C A\vDash \C U$. 
\end{theorem}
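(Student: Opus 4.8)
The statement has a soundness direction ($\Rightarrow$) and a completeness direction ($\Leftarrow$), and the plan is the term-model argument of~\cite{Plotk} transported to the many-sorted setting and specialised to $\Sigma^{\CL}$. For soundness I would first check that each rule of Definition~\ref{def:deducibility} is sound for satisfaction in an arbitrary quantitative $\Sigma$-algebra $\C A$ (in the sense of Definition~\ref{def:sat}): if $\C A$ satisfies all premises of a rule instance, it satisfies the conclusion. Here \textbf{(Assumpt)} and \textbf{(Cut)} are immediate, \textbf{(Refl)}, \textbf{(Symm)}, \textbf{(Triang)}, \textbf{(Max)} are the (pseudo)metric axioms plus monotonicity of the bound, and \textbf{(Subst)} follows from the universal property of Proposition~\ref{prop:substi} via the identity $h^{\sharp}\circ f^{\sharp}=(h^{\sharp}\circ f)^{\sharp}$, which converts a valuation for the substituted inference into one for the original. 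Only two rules use the specific structure of a quantitative algebra: \textbf{(NExp)} is sound because Definition~\ref{def:algebra} forces $\sigma_{\C A}\in\Met(A_{w},A_{i})$ while products carry the $\max$ metric, so equal bounds on the arguments give the same bound on the output; and \textbf{(Arch)} is sound because the metrics take values in $\PReal$, whence $d(x,y)\le\delta$ for every rational $\delta>\epsilon$ implies $d(x,y)\le\epsilon$. Since $\vdash_{S}$ is the least consequence relation containing the axioms of $\C U$, an induction on derivations then yields $\C A\vDash(\Gamma\vdash\phi)$ for every $(\Gamma\vdash\phi)\in\C U$ whenever $\C A\vDash\C U$ — in particular for every quantitative weak $\lambda$-algebra satisfying $\C U$.

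For completeness I would argue the contrapositive by producing a single refuting algebra. Assume $(\Gamma\vdash\phi)\notin\C U$ with $\phi=(t\EQU{\epsilon}{}{i}u)$ and $\epsilon\in\BB Q_{\geq 0}$, and define the $\Sigma^{\CL}$-algebra $\C T$ whose carrier of sort $j$ is $(\FreeAlgS{{\CL}}{\Var})_{j}$, whose operations are term formation, and whose metrics are
\[ d^{\C T}_{j}(s,s')\ :=\ \inf\{\,q\in\BB Q_{\geq 0}\mid \Gamma\vdash_{S}s\EQU{q}{}{j}s'\,\}\qquad(\inf\emptyset=\infty). \]
Using \textbf{(Refl)}, \textbf{(Symm)}, \textbf{(Triang)} together with rational approximation of the infima one checks that each $d^{\C T}_{j}$ is a pseudometric, and using \textbf{(NExp)} with \textbf{(Max)} that every operation is non-expansive for the product ($\max$) metric; hence $\C T$ is a quantitative $\Sigma^{\CL}$-algebra.

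Next I would show $\C T\vDash\C U$, for which — by the soundness direction already proved — it suffices that $\C T$ satisfy every \emph{axiom} $(\Gamma_{0}\vdash t_{0}\EQU{\epsilon_{0}}{}{j_{0}}u_{0})$ of $\C U$, with $\Gamma_{0}\subseteq\C V(\Var)$. Given $g:\Var\to\FreeAlgS{{\CL}}{\Var}$ such that $d^{\C T}_{i'}(g^{\sharp}x,g^{\sharp}y)\le\epsilon'$ for every $(x\EQU{\epsilon'}{}{i'}y)\in\Gamma_{0}$, rationality of $\epsilon'$ with \textbf{(Max)} and \textbf{(Arch)} gives $\Gamma\vdash_{S}g^{\sharp}x\EQU{\epsilon'}{}{i'}g^{\sharp}y$ for each such pair, \textbf{(Subst)} applied to the axiom gives $g^{\sharp}\Gamma_{0}\vdash_{S}g^{\sharp}t_{0}\EQU{\epsilon_{0}}{}{j_{0}}g^{\sharp}u_{0}$, and \textbf{(Cut)} yields $\Gamma\vdash_{S}g^{\sharp}t_{0}\EQU{\epsilon_{0}}{}{j_{0}}g^{\sharp}u_{0}$, i.e.~$d^{\C T}_{j_{0}}(g^{\sharp}t_{0},g^{\sharp}u_{0})\le\epsilon_{0}$. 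Thus $\C T\vDash\C U$, and since $\C U\supseteq\C U_{\CL}$ we get $\C T\vDash\C U_{\CL}$, whence by the Proposition stating that a quantitative $\Sigma^{\CL}$-algebra satisfies $\C U_{\CL}$ iff it is a quantitative weak $\lambda$-algebra, $\C T$ is one of the latter. Finally the inclusion valuation $\eta:\Var\to\FreeAlgS{{\CL}}{\Var}$, for which $\eta^{\sharp}=\mathrm{id}$, refutes $\phi$ from $\Gamma$ in $\C T$: by \textbf{(Assumpt)} every $(t'\EQU{\epsilon'}{}{i'}u')\in\Gamma$ satisfies $d^{\C T}_{i'}(t',u')\le\epsilon'$, while $d^{\C T}_{i}(t,u)>\epsilon$, since otherwise \textbf{(Max)} and \textbf{(Arch)} would give $(\Gamma\vdash\phi)\in\C U$, contradicting the hypothesis. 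Hence $\C T$ is a quantitative weak $\lambda$-algebra with $\C T\vDash\C U$ and $\Gamma\not\vDash_{\C T}\phi$, which is the contrapositive of the completeness claim.

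The main obstacle is the step $\C T\vDash\C U$: one has to reconcile the syntactic, rationally-indexed provability relation defining $d^{\C T}$ with the real-valued, all-valuations reading of satisfaction in Definition~\ref{def:sat}, and the argument rests on the precise interplay of \textbf{(Subst)}, \textbf{(Cut)}, \textbf{(Max)} and the infinitary rule \textbf{(Arch)}, plus the simplification (coming from soundness) that only the axioms, not all of $\C U$, need checking. A secondary care point is that $\C T$ is genuinely a \emph{weak $\lambda$-}algebra, which is not verified by hand but inherited from the earlier Proposition once $\C T\vDash\C U_{\CL}$ is established.
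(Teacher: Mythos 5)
Your proposal is correct and follows essentially the route the paper intends: the paper omits this proof, deferring to the term-model argument given for Theorem~\ref{thm:completeness2}, which is exactly your construction (soundness by induction on the rules of Definition~\ref{def:deducibility}, completeness via the Lindenbaum-style algebra on $\FreeAlgS{\CL}{\Var}$ with $d$ defined as the infimum of provable bounds, with the (Max)/(Arch) interplay handling the passage between rational provability indices and real-valued distances). The only cosmetic difference is that the paper folds $\Gamma$ into the theory and works with $\emptyset\vdash$, whereas you keep $\Gamma$ as hypotheses on the left of $\vdash_{S}$; the two formulations are interchangeable here.
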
 


\begin{remark}\label{rem:partial2}
Following Remark \ref{rem:partial}, in the case of partial ultra-metric spaces we will talk of
\emph{partial weak $\lambda$-theories} and
\emph{partial weak $\lambda$-algebras}.
\end{remark}

\section{Quantitative $\lambda$-Theories and Algebras}\label{section3tris}

As we recalled, weak $\lambda$-theories do not fully capture the equational theory of the $\lambda$-calculus, as they fail to capture the so-called \emph{$\xi$-rule} \cite{Baren85}. 
In our quantitative setting, this rule can be expressed as the inference
$t \EQU{\epsilon}{j}{} u \vdash \lambda x.t \EQU{\epsilon}{i\to j}{} \lambda x.u $
\emph{provided} the equation on the left of $\vdash$ is \emph{locally universally quantified}: the righthand equation holds under the condition that, \emph{for all possible value of $x$}, the lefthand equation holds. This kind of quantitative inferences differ from those seen so far. The reason for this proviso is that it involves the higher-order operator $\lambda$, which ``binds'' the variable $x$.
The example below shows that quantitative weak $\lambda$-algebras  fail to capture this rule.
 
 \begin{example}\label{ex:reals4}
 The $\xi$-rule fails in the weak $\lambda$-algebra $\C I$:
let $f,g: [0,b]\to [0,b+\epsilon]$ (where $b\in \BB R_{\geq 0}$ and $\epsilon \in \BB Q_{\geq 0}$) be, respectively, the identity function $f=\mathrm{id}$ and the function $g(x)=x+\epsilon$;
for any $s\in [a,b]$, we then have $|f(s)-g(s)| \leq \epsilon$, which shows
$\C I \vDash \overline fx \EQU{\epsilon}{[a,b+\epsilon]}{} \overline gx$.
However, since %
$d^{\C I}_{[a,b]\to [a,b+\epsilon]}(f,g) = b+\epsilon$, we deduce 
$\C I \not\vDash \lambda x.\overline fx \EQU{\epsilon}{[a,b]\to [a,b+\epsilon]}{} \lambda x.\overline gx$.

 \end{example}

In order to define quantitative $\lambda$-theories we could follow Curry \cite{Baren85} and ``strengthen''  the set of axioms, in fact mere equalities, satisfied by a $\Sigma^{\CL}$-algebra and essentially do away with the $\xi$-rule and all higher order features. 
The alternative, that we develop in this section, is to take abstraction and the $\xi$-rule as first class elements of our theories and algebras. This will require a number of generalizations of the original approach of \cite{Plotk}.

%


At the level of syntax, the first step is to enrich the class of symbols with higher-order operators of the form $\lambda_{i}x$. The occurrence of the variable $x$ is part of the symbol $\lambda_i x$ itself.

\begin{definition}[$\lambda$-Signature] Given an applicative $\T$-sorted signature $\Sigma$, let $\Sigma^{\lambda}$ be the applicative $\T$-sorted signature further including the symbols
$\lambda_i x \in \Sigma^{\lambda}_{j, i\to j }$, for all $x\in \Var_i$ and $i,j\in \T$.
%
%
%
The {\em $\lambda $-terms} are the elements of the free $\Sigma^{\lambda}$-algebra, $\FreeAlgS{\lambda}{ \Var}$.

%
\end{definition}


Terms $\lambda_i x(t)$ will be denoted by $\lambda_i x.t$ or simply $\lambda x.t$.
Free and bound variables, open and closed $\lambda$-terms are defined as usual.  
For a $\lambda$-term $t$, we denote by $\free(t)$, $\bd(t)$, $\avar(t)$ the sets of free, bound, and all variables in $t$, respectively. 
 In order to simplify the notation we  deal with bound variables by implementing directly Barendregt's ``hygiene condition''.
For any function $f:\Var \to \FreeAlgS{{\lambda}}{  \Var}$ there exists a function $f^\flat: \FreeAlgS{\lambda}{ \Var}\to \FreeAlgS{\lambda}{  \Var}$ such that 
$f^{\flat}(t)$ corresponds to the
substitution of $f(x)$ for $x$ in $t$, for any variable $x$  occurring free in $t$. 
Given pairwise disjoint variables $x_{1},\dots, x_{n}$, with $x_{j}\in \Var_{j}$ and terms $t_{1},\dots, t_{n}$, with $t_{j}\in \FreeAlgS{\lambda}{\Var}_{j}$, 
we indicate the ``substitution'' $(\mathrm{id}_{\vec x, \vec t})^{\flat}(u)$ simply as
$u[t_{j}/x_{j}]$.

 
%
In order to be able to express correctly  the $\xi$-rule we generalize quantitative equations to expressions of the form
$t\EQU{\epsilon}{X}{i} u$, where $X$ indicates a finite set of variables which are intended to be  ``locally quantified'' on the left of $\vdash$. 

\begin{definition}[$\Sigma^{\lambda}$-equation]  \label{def:sigmalambda}
A \emph{quantitative ${\lambda}$-equation} is an expression of the form
$
 t \EQUL{\epsilon}{X}{i}s
$, 
where $i\in I$, $t,s\in \Lambda_{i}$,  
$X\subseteq_{\mathsf{fin}} \Var$, 
$\epsilon \in \BB Q_{\geq 0}$.  The set $X$  is  the set of \emph{locally quantified variables in the equation.}
\end{definition}

We let $\C V(\Lambda)$ indicate the set of quantitative $\lambda$-equations

\begin{definition}\label{def:deducibility1}
\hfill \\
(i) A \emph{consequence relation} on $\Lambda$ is a relation $\vdash \ \subseteq\wp(\C V(\Lambda))\times \C V(\Lambda) $ closed under the rules (Cut)-(Nexp) from Def.~\ref{def:deducibility} (with $t\EQU{\epsilon}{i}{} u$ everywhere replaced by $t\EQU{\epsilon}{X}{i} u$), together with the following rules:
%
\begin{description}
 \item[(Subst)] if $\Gamma \vdash t\EQUL{\epsilon}{X}{i} s$ and let $f $ be the identity on $X$ and, for all $x\in \Var \setminus X$, $\free(f(x))\cap \bd(t,s,\Gamma)=\emptyset$, then $\Gamma \vdash t\EQUL{\epsilon}{X}{i} s$ implies 
$f^{\flat}(\Gamma) \vdash  f^{\flat}(t) \EQUL{\epsilon}{X}{i}  f^{\flat}(s)$; 

\item[(Abstraction)] {if $X\subseteq X'$  and $\free(t,s) \cap X'=\emptyset$, then $\{ t \EQUL{\epsilon}{X}{i} s\} \vdash  \ t \EQUL{\epsilon}{X'}{i} s$}; 
\item[(Concretion)]  {if $X'\subseteq X$  then $\{ t \EQUL{\epsilon}{X}{i} s\} \vdash  \ t \EQUL{\epsilon}{X'}{i} s$;}
\end{description}

\noindent (ii) We call $\C U_{\lambda}$ the quantitative theory generated by the axioms below apart from ($\eta$) and we denote by $\vLambda$ the corresponding consequence relation, and $\C U_{\lambda\eta}$  the quantitative theory generated by all the axioms below, including ($\eta$),  with consequence relation $\vLambdaEta$:
\begin{description}

 {\item[($\alpha$)] if $x,y \in \Var_i$ and $y\not\in \avar(\lambda_i x.t)$, then $\emptyset \vdash \lambda x.t\EQUL{0}{X}i 
 \lambda_{i} y. t[y/x]$.
%
 }
\item[($\xi$)] if $x \in X$, then  $ t \EQUL{\epsilon}{X}{j} u 
 \vdash \lambda_{i}x. t \EQUL{\epsilon}{X}{i\to j} \lambda_{i}x. u$;
\item[($\beta$)] if $(\lambda_{i}x. t)u \in \Lambda_{ j}$,    $\free(u)\cap \bd(t) =\emptyset$,   
then
 {$\emptyset \vdash (\lambda_{i}x. t)u \EQUL{0}{X}{j} t[u/x]$.}
%
%
\item[($\eta $)]  if $ \lambda_{i}x.(tx)\in \Lambda_{i\to j}$,  $x\notin \free(t)$, then $ \vdash t \EQUL{0}{X}{i\to j} \lambda_{i}x.(tx) $.\smallskip

\end{description}

\noindent  (iii)  Any theory including $\C U_{\lambda}$ ($\C U_{\lambda\eta}$) is called a {\em quantitative (extensional) $\lambda$-theory}.
\end{definition}

\begin{example}\label{ex:reals5}
Consider the $\lambda$-signature $\C I(\Sigma)^{\lambda}$ 
(cf.~Example \ref{ex:reals2}).
Let $\C U^{\C I}_{\lambda\eta}$ be the extensional $\lambda$-theory obtained by enriching $\C U_{\lambda\eta}$ with all real-valued axioms as in Example \ref{ex:reals2}.

\end{example}

We now introduce a class of applicative algebras suitable to account for abstraction operators.
This is done by requiring the existence of suitable ``closing maps'' that send
a closed $\lambda$-term of the form $\lambda_{i_{1}}x_{1}.\dots.\lambda_{i_{n}}x_{n}.t$ onto
some point of $A_{i_{1}\to \dots \to i_{n}\to j}$.


 Given any $\T$-index set $A$, extend the definition of $\Sigma^{\lambda}$ to $\Sigma^{\lambda,A}$ so as to contain as 0-ary constructors all elements in $A$ and correspondingly the notion of $\FreeAlgS{\lambda, A}{\Var}$.

\begin{definition}
A {\em quantitative applicative $\lambda$-algebra} is a structure  $\C A = (A, \Omega^{\C A}, \Lambda^{\C A}, d^{\C A})$, where $(A, \Omega^{\C A},  d^{\C A})$ is a quantitative applicative algebra  and 
 $\Lambda_{w,j}^{\C A}: 
 (\FreeAlgS{{\lambda,A}}{  \Var})_{w\rightarrow j}^0 \to {A}_{i\to j}$. We call {\em applicative $\lambda$-algebra} the structure $\C A = (A, \Omega^{\C A}, \Lambda^{\C A})$ without the metric.
%
\end{definition}


The functions $\Lambda_{i,j}^{\C A}: (\FreeAlgS{\lambda, A}{  \Var})_{i\rightarrow j}^0 \to {A}_{i\to j}$ are intended to define a choice  in the set  ${\Lambda}$ of Definition \ref{def:weaklambda}.
This will be apparent in view of  Definitions \ref{interpretation},\ref{representable},\ref{quantitative-lambda} below, which will enforce  that, in suitable structures, the interpretations of the terms  $\FreeAlgS{{\lambda,A}}{  \Var})_{i\rightarrow j}^0$ become essentially the domain of ${\Lambda}$ in Definition~\ref{def:weaklambda}. We point out that a slight modification of these definitions would permit to recover precisely the categorically weaker notion of  Quantitative Weak $\lambda$-algebra of Definition~\ref{def:weaklambda}.


\begin{proposition}[Interpretation]\label{interpretation} Let  $\C A$  be a quantitative applicative $\lambda$-algebra, and $\rho: \Var \to {A}$. Then there exists a function $\rho^\natural: \FreeAlgS{{\lambda}}{\Var}\to { A} $, where $\rho^\natural (t)$ is defined by
cases as
$\rho^\natural (x)=\rho(x)$, 
$\rho^{\natural}(t_{1}\cdot t_{2})=  \rho^\natural(t_1) \cdot_A \rho^\natural(t_2) $,
and $\rho^{\natural}(\lambda_{i}x.t)  =\Lambda^{\C  A}_{w*i,j}(\lambda \vec{y}\lambda_i x.t)\cdot_A \overrightarrow{\rho^\natural(y)} $, 
where  $\lambda \vec{y}\lambda_i x.t$ is the closure of the term  $\lambda_i x.t$ w.r.t. its free variables $\vec{y}$ of types ${w}$.
\end{proposition}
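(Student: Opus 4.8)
The plan is to define $\rho^\natural$ not by plain structural recursion on $\FreeAlgS{\lambda}{\Var}$ --- which is inapplicable, since the prescribed clause for $\lambda_i x.t$ does not appeal to the interpretation of the body $t$ --- but by well-founded recursion on the \emph{size} $|t|$ of $t$, i.e.\ the number of symbol occurrences in $t$ (bound-variable occurrences included). First I would fix, once and for all, a total order $\prec$ on the $\T$-sorted set $\Var$, so that for every term $u$ with free variables $y_1 \prec \dots \prec y_n$ of types $m_1,\dots,m_n$ the closure $\overline{u} := \lambda_{m_1} y_1.\cdots.\lambda_{m_n} y_n.u$ is an unambiguously determined \emph{closed} term. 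I would also record that a pure $\lambda$-term embeds into $\FreeAlgS{{\lambda,A}}{\Var}$ (using no parameter from $A$), so that $\overline{u}$ belongs to the domain $(\FreeAlgS{{\lambda,A}}{\Var})^0$ of the structure maps $\Lambda^{\C A}$.

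Then I would run the recursion with the three mutually exclusive and exhaustive clauses (together with the evident clause $\rho^\natural(\sigma(t_1,\dots,t_k)) := \sigma_A(\rho^\natural(t_1),\dots,\rho^\natural(t_k))$ for any further first-order symbol $\sigma$ of $\Sigma$, should $\Sigma$ contain more than application). For $t = x \in \Var_i$, put $\rho^\natural(x) := \rho(x) \in A_i$. For $t = t_1 \cdot_{k,i} t_2$ with $t_1 \in \Lambda_{k\to i}$ and $t_2 \in \Lambda_k$, the calls $\rho^\natural(t_1),\rho^\natural(t_2)$ are on strictly smaller terms, and $\rho^\natural(t) := \rho^\natural(t_1)\cdot_A\rho^\natural(t_2) \in A_i$. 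For $t = \lambda_i x.s$ with $s \in \Lambda_j$ (so $t \in \Lambda_{i\to j}$), let $\vec{y} = y_1 \prec \dots \prec y_n$ be the free variables of $t$, of types forming $w := m_1\cdots m_n$; since $|y_k| = 1 < |t|$, each $\rho^\natural(y_k) = \rho(y_k) \in A_{m_k}$ is already available, and I put $\rho^\natural(t) := \Lambda^{\C A}_{w*i,j}(\overline{t})\cdot_A\rho^\natural(y_1)\cdot_A\cdots\cdot_A\rho^\natural(y_n)$. A routine check of sorts shows this is well-typed: $\overline{t}$ is closed of sort $w\to(i\to j) = (w*i)\to j$, so $\Lambda^{\C A}_{w*i,j}(\overline{t}) \in A_{(w*i)\to j} = A_{m_1\to\cdots\to m_n\to(i\to j)}$, and the $n$ successive applications of $\cdot_A$ peel off the leading $n$ arrows, leaving an element of $A_{i\to j}$. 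As all recursive calls are on strictly smaller terms, this determines a unique $\T$-sorted function $\rho^\natural : \FreeAlgS{\lambda}{\Var} \to A$ satisfying the displayed equations.

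The hard part --- or rather the only point deserving care --- is exactly the non-structural nature of the abstraction clause: it does not consume the body $s$ recursively, but feeds the \emph{syntactic} closure $\overline{t}$ to $\Lambda^{\C A}$ and only recurses on the (strictly smaller) free variables of $t$; this is why the recursion must be set up on term size rather than via the recursion principle for the free $\Sigma^{\lambda}$-algebra. Two remarks I would append: first, $\rho^\natural(\lambda_i x.s)$ depends on $\rho$ only through its restriction to $\free(s)\setminus\{x\}$, in accordance with reading $\Lambda^{\C A}$ as a choice of a member of the set $\Lambda(-)$ of Definition~\ref{def:weaklambda} associated with the closed abstraction $\overline{t}$; second, for the bare existence statement the order $\prec$ is immaterial --- any fixed choice works equally well --- with the coherence of this choice being addressed later, in the notions of representable and quantitative $\lambda$-algebra.
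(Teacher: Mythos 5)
Your proposal is correct; the paper in fact offers no proof at all, treating the proposition as a definition by cases, and your argument supplies exactly the justification that is left implicit: the abstraction clause is not structurally recursive in the body, but its only recursive calls are on the free variables $\vec{y}$ (where $\rho^\natural$ reduces to $\rho$), so the definition is well-founded, and the sort bookkeeping $\Lambda^{\C A}_{w*i,j}(\overline{t})\in A_{(w*i)\to j}$ checks out. Your two appended remarks --- that an order on $\Var$ must be fixed to make the closure unambiguous, and that the abstraction clause depends on $\rho$ only through $\free(t)$ --- are both points the paper silently elides, and are worth making explicit.
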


To define higher-order structure for a quantitative applicative $\lambda$-algebras $\C A$ it is useful to define the multicategory generated by $\C A$:
\begin{definition}[Representable functions]\label{representable}
For any quantitative applicative $\lambda$-algebra $\C A$, $\Met^{\C A}$ is the multicategory with objects the metric spaces $(A_{i},d^{\C A}_{i})$, and where, for $w=i_{1}\dots i_{k}$, $\Met^{\C A}(A_{i_{1}},\dots, A_{i_{k}}; A_{i})$ (abbreviated as $\Met^{\C A}(A_{w};A_{i})$) is the set of $f\in \Met(A_{w},A_{i})$ such that for some $t_{f}\in \FreeAlgS{\lambda,A}{\Var}_{w\to i}^{0}$, 
$f(a_{1},\dots, a_{n})= \Lambda_{w,i}^{\C A}(t_{f}) \cdot^{\C A} 
\overrightarrow{a}
$.
\end{definition}
Notice that the function $\Lambda^{\C A}$ yields a family of maps 
$\LL^{\C A}_{w*i,j}:\Met^{\C A}(A_{w*i};A_{j}) \to \Met^{\C A}(A_{w}; A_{i\to j})$, given by $\LL_{w*i,k}(h)(a)(b)= \Lambda_{w*i,j}^{\C A}( t_{h})\cdot^{\C A}\langle a, b\rangle$. 

While cartesian closed (multi)categories are the algebras for extensional $\lambda$-theories, 
 an algebra for a 
 $\lambda$-theory 
%
  is a cartesian multicategory in which for all objects $A,B$ there is an object $A\Weak B$ (called a \emph{weak exponential}, \cite{Martini1992}) together with a natural \emph{retraction} $\BB C(\_ , A; B)\To \BB C(\_; A\Weak B)$. To account for the quantitative $\xi$-rule, this picture must be slightly adapted, by requiring the maps forming the retraction to be also \emph{non-expansive}. 
This leads to the following definition:

\begin{definition}[Quantitative $\lambda$-Algebra]\label{quantitative-lambda}
A {quantitative applicative $\lambda$-algebra } $\C A = (A, \Omega^{\C A}, \Lambda^{\C A}, d^{\C A})$ is a \emph{quantitative (extensional) $\lambda$-algebra} if the 
maps 
$\LL^{\C A}_{w*i,j}, \Ev^{\C A}_{w,i,j}$
 between 
$\Met^{\C A}(A_{w*i}; A_{j}) $ and $ \Met^{\C A}(A_{w};A_{i\to j})$
%
form a family of retractions (resp.~isomorphisms) natural in $w$ and non-expansive (with respect to the distance functions $\Xi_{d^{\C A}_{w},d^{\C A}_{j}}$ over $\Met^{\C A}(A_{w}; A_{j})$).

\end{definition}

The definition above can be expressed in more abstract terms using the language of \emph{enriched} categories:
the multicategory $\Met^{\C A}$ is enriched over the cartesian closed category of \emph{pre}-metric spaces and non-expansive functions (see the discussion of $\Xi$ in Section \ref{section1plus}), where $\Met^{\C A}(\vec X ;Y)$ is endowed with the pre-metric $\Xi_{\max\{\vec a\},b}$. Then 
 $\C A $ is a {quantitative (resp.~extensional) $\lambda$-algebra} when the 
maps $\LL^{\C A}_{w*i,j}, \Ev^{\C A}_{w,i,j}$ form an enriched natural retraction (resp.~isomorphism) from $\Met^{\C A}(A_{w*i};A_{j})$ to $\Met^{\C A}(A_{w};A_{i\to j})$ (notice that this implies that the pre-metrics $\Xi_{\max\{\vec a\},b}$ are indeed metrics).

\begin{example}\label{ex:reals6}
$\C I$ becomes a quantitative $\lambda$-algebra by defining $\Lambda^{\C I}$ inductively on $\FreeAlgS{\lambda,\C I}{\Var}^{0}$, exploiting the cartesian closed structure of
the subcategory of $\Met$ formed by the spaces $\C I_{i}$.
%
  \end{example}

Let us now show how quantitative $\lambda$-algebras are captured by quantitative $\lambda$-theories.

\begin{definition} \label{semantics}Let  $\C A = (A, \Omega^{\C A}, \Lambda^{\C A}, d^{\C A})$ be a quantitative applicative $\lambda$-algebra, $Y=\{ y_1, \ldots, y_n\}$, $\Gamma= \{ t_k \EQUL{\epsilon_{k}}{Y}{i_k}  u_k\  |\  k=1, \ldots, m \}$. 
 $\C A $ satisfies a quantitative $\lambda$-inference $\Gamma \vdash t \EQUL{\epsilon}{X}{i} u $ (denoted $ \Gamma \vDash_{\C A}    t \EQUL{\epsilon}{X}{i} u$)
 if for all $\rho:\Var \to A$ 
 the following implication holds:\\
 \medskip
 \adjustbox{scale=0.86}{
 \hskip-0.6cm
 \begin{minipage}{1.2\textwidth}
\begin{align} 
&\text{for all } a_{1},b_1\in A_{1},  \ldots ,  a_{n},b_n \in A_{n},
\text{ such that } \delta  = \sup\{d^{\C A}(a_{j},b_{j})\mid j=1,\dots,n\}  \tag{sat$^{*}$}, \label{satstar} 
 \\
&\text{if, for all }( t_k \EQUL{\epsilon}{Y}{i_k} u_k)\in \Gamma , \text{ we have } d^{\C A}_{i_k} ( f_{\vec{y},\vec{a}}^{\natural}(t_{k}), f_{\vec{y},\vec{b}}^{\natural}(u_{k}))  \leq \max\{\delta,\epsilon \}
\text{\mbox{ then } }  
  d^{\C A}_i ( f_{\vec{y},\vec{a}}^{\natural}(t)  , f_{\vec{y},\vec{b}}^{\natural}(u))\leq\max\{\delta, \epsilon\}   
  \notag
      \end{align}
  \end{minipage}
  } \\
   \medskip
$\C A $ satisfies a quantitative  $\lambda$-theory $\C U$ (denoted $ {\C A} \vDash \C U$) if it satisfies all the inferences in $\C U$.
\end{definition} 

Condition \eqref{satstar} from Def.~\ref{semantics} is admittedly more complex than condition \eqref{sat}  from Def.~\ref{def:sat}.
Yet, this is the price one has to pay in order to be able to express the quantitative $\xi$-rule. Indeed, condition \eqref{satstar} treats ``locally quantified'' variables by applying a condition reminiscent of the metrics $\Xi$ from Section \ref{section1plus}: for all locally quantified variables $\vec x$ in $t\EQUL{\epsilon}{X}{i} u$, when the $\vec x$ are replaced in $t$ and $u$ by \emph{different} points $\vec a,\vec b$, the distance between the resulting terms must be bounded 
by either $\epsilon$ or the max of the $d^{\C A}(a_{j},b_{j})$.
This ensures that, whenever $\C A\vDash t \EQUL{\epsilon}{\{x\}}{j} u$ is satisfied, 
we can conclude $\Xi(\lambda x.t,\lambda x.u)\leq \epsilon$, as the $\xi$-rule requires.

\begin{example}
Def.~\ref{semantics} solves the problem from Example \ref{ex:reals4}: with $r=0$ and $s=\epsilon$, from the fact
that $| f(s)- g(r)| =2\epsilon > \max\{ |r- s|, \epsilon\}$, it follows 
that $\C I\not\vDash \overline fx \EQUL{\epsilon}{X}{j} \overline gx$, hence blocking the counter-example to the $\xi$-rule. 
Rather, it holds that $\C I\vDash\C  U_{\lambda\eta}^{\C I}$ 
(cf.~Example \ref{ex:reals5}).

\end{example}

\begin{proposition}

A quantitative applicative $\lambda$-algebra is a quantitative $\lambda$-algebra (resp.~a quantitative extensional $\lambda$-algebra) iff it satisfies $U_{\lambda}$ (resp.~$U_{\lambda\eta}$).
\end{proposition}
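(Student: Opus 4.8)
The plan is to prove the two implications separately and to handle the $\lambda$- and extensional $\lambda\eta$-cases in parallel, the latter differing only in that the axiom $(\eta)$ is added and that Definition~\ref{quantitative-lambda} demands natural \emph{isomorphisms} rather than mere retractions. The whole argument rests on a dictionary between the structural data of Definition~\ref{quantitative-lambda} and the axioms generating $\C U_{\lambda}$ (resp.~$\C U_{\lambda\eta}$): the retraction equation $\Ev^{\C A}\circ\LL^{\C A}=\mathrm{id}$ corresponds to $(\beta)$, which is precisely the coherence between the ``closing'' operator $\Lambda^{\C A}$ and the interpretation $\rho^{\natural}$; the companion equation $\LL^{\C A}\circ\Ev^{\C A}=\mathrm{id}$, which upgrades the retraction to an isomorphism, corresponds to $(\eta)$; naturality of $\LL^{\C A},\Ev^{\C A}$ in $w$ corresponds to $(\beta)$ together with the substitution rules; non-expansiveness of $\LL^{\C A}$ with respect to the pre-metrics $\Xi_{\max\{\vec a\},b}$ corresponds to the quantitative rule $(\xi)$; and $(\alpha)$ is absorbed into the fact that $\rho^{\natural}$ and $\Lambda^{\C A}$ do not distinguish terms that differ only by a renaming of bound variables, which is the hygiene convention adopted for $\FreeAlgS{\lambda,A}{\Var}$.

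For the direction ``algebra $\Rightarrow$ theory'', since $\C U_{\lambda}$ is generated by $(\alpha),(\xi),(\beta)$ (and additionally $(\eta)$ for $\C U_{\lambda\eta}$), it suffices to show (i) that $\vDash_{\C A}$ is a consequence relation and (ii) that $\C A$ satisfies each of those axioms. Step (i) is a careful but routine check that condition~\eqref{satstar} is preserved by every rule of Definition~\ref{def:deducibility} and Definition~\ref{def:deducibility1}(i): (Cut), (Assumpt), (Refl), (Symm), (Triang), (Max) and (Subst) are straightforward unfoldings; (Arch) goes through as usual, since $d^{\C A}_i(\cdot,\cdot)\le\epsilon$ holds whenever $d^{\C A}_i(\cdot,\cdot)\le\delta$ for all rational $\delta>\epsilon$; (NExp) uses that the operations $\sigma_{\C A}$, and in particular $\cdot^{\C A}$, are non-expansive, so the bound $\max\{\delta,\epsilon\}$ is preserved coordinatewise; and (Abstraction), (Concretion) follow at once from the shape of~\eqref{satstar}, since enlarging or shrinking the set of locally quantified variables only affects whether one is allowed to choose $\vec a\neq\vec b$, the side condition $\free(t,s)\cap X'=\emptyset$ in (Abstraction) ensuring that the newly quantified variables do not occur. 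For step (ii): $(\alpha)$ holds by the hygiene convention; $(\beta)$ holds because expanding $\rho^{\natural}$ on $(\lambda_i x.t)u$ through Proposition~\ref{interpretation} and using $\Ev^{\C A}\circ\LL^{\C A}=\mathrm{id}$ yields $\rho^{\natural}((\lambda_i x.t)u)=\rho^{\natural}(t[u/x])$; $(\eta)$ holds, when present, by $\LL^{\C A}\circ\Ev^{\C A}=\mathrm{id}$; and $(\xi)$ is exactly the place where non-expansiveness of $\LL^{\C A}$ with respect to $\Xi$ is invoked, as anticipated in the discussion after Definition~\ref{semantics}: the hypothesis $d^{\C A}_j(f_{\vec y,\vec a}^{\natural}(t),f_{\vec y,\vec b}^{\natural}(u))\le\max\{\delta,\epsilon\}$ for all admissible $\vec a,\vec b$, including those differing on the locally quantified $x$, says exactly that the functions represented by $t$ and $u$ lie at $\Xi$-distance at most $\epsilon$ modulo $\delta$, and this bound is transported through $\LL^{\C A}$ to $\rho^{\natural}(\lambda_i x.t)$ and $\rho^{\natural}(\lambda_i x.u)$.

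For the converse, assume $\C A\vDash\C U_{\lambda}$ (resp.~$\C U_{\lambda\eta}$). After some bookkeeping — checking that the families $\LL^{\C A}_{w*i,j},\Ev^{\C A}_{w,i,j}$ introduced after Definition~\ref{representable} are well defined as maps between $\Met^{\C A}(A_{w*i};A_j)$ and $\Met^{\C A}(A_w;A_{i\to j})$, which for $\LL^{\C A}$ amounts to showing that extensionally equal representing terms get the same value and which I would derive from $(\beta)$ and (Subst), and that $\Met^{\C A}$ is a cartesian multicategory — the required properties are read off the axioms: the retraction equation $\Ev^{\C A}\circ\LL^{\C A}=\mathrm{id}$ from $(\beta)$ applied to representing terms with parameters in $A$; naturality in $w$ from $(\beta)$ together with (Subst); the isomorphism equation $\LL^{\C A}\circ\Ev^{\C A}=\mathrm{id}$ (extensional case) from $(\eta)$; and non-expansiveness of $\LL^{\C A}$ with respect to the $\Xi_{\max\{\vec a\},b}$ from $(\xi)$ via~\eqref{satstar}. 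As observed after Definition~\ref{quantitative-lambda}, this last point moreover forces the pre-metrics $\Xi_{\max\{\vec a\},b}$ on $\Met^{\C A}(A_w;A_{i\to j})$ to be genuine metrics, so that all the requirements of Definition~\ref{quantitative-lambda} are met.

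I expect the crux, in both directions, to be the faithful treatment of the quantitative $\xi$-rule: one must establish that condition~\eqref{satstar}, with its quantification over pairs of points $\vec a,\vec b$ substituted for the locally quantified variables, is equivalent to a bound on the relevant $\Xi$-distance, and this equivalence hinges on $\Xi_{a,b}$ being defined as an \emph{infimum} over candidate bounds $\delta$, so the argument is a chase of $\epsilon$'s and $\delta$'s through interlocking suprema and infima rather than a purely algebraic manipulation. The remaining difficulties — the well-definedness of $\LL^{\C A}$ on representable functions, which must be settled before the categorical statement even makes sense, and the closure of $\vDash_{\C A}$ under the infinitary rule (Arch) and under (Abstraction) and (Concretion) — have no counterpart in the classical, first-order, non-$\lambda$ framework and account for most of the length of the full argument.
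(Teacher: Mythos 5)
Your proposal is correct and follows essentially the same route as the paper, whose own proof is just a two-sentence sketch recording the same dictionary: the validity of $(\xi)$ gives the non-expansiveness requirement on $\LL^{\C A}$, while $(\beta)$ (resp.~$(\beta)$ and $(\eta)$) gives the retraction (resp.~isomorphism), with the only-if direction declared ``easily checked''. Your write-up simply fills in the routine verifications (closure of $\vDash_{\C A}$ under the structural rules, well-definedness of $\LL^{\C A}$, naturality) that the paper leaves implicit.
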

\longversion{
\begin{proof}
The only-if direction is easily checked. For the if-direction, it suffices to check that the validity of the rule $(\xi)$ implies that $\LL$ maps non-expansive functions onto non-expansive functions, and the validity of $(\beta)$ (resp.~$(\beta)$ and $(\eta)$) ensures that $(\LL, \Ev)$ yield a retraction (resp.~an isomorphism).
\end{proof}
}

From the argument of the proposition above one can also deduce that a
quantitative applicative $\lambda$-algebra is
a weak $\lambda$-algebra iff it satisfies ($\alpha$) and ($\beta$), and is an (extensional) $\lambda$-algebra iff it furthermore satisfies $(\xi)$ (and $(\eta)$).

We conclude this section by showing soundness and completeness of quantitative (extensional) $\lambda$-theories. \shortversion{The proof is based on the construction of a ``quantitative term model''.}

%


\begin{theorem}[Soundness and Completeness of Quantitative $\lambda$-theories]\label{thm:completeness2}
Let $\C U$ be a quantitative $\lambda$-theory (resp.~a quantitative extensional $\lambda$-theory) over $\Sigma^{\lambda}$. Then $\Gamma \vLambda \phi \in \C U$ (resp.~$\Gamma \vLambdaEta \phi\in \C U$) iff 
$\Gamma \vDash_{\C A}\phi$ holds for any quantitative $\Sigma^{\lambda}$-algebra (resp.~quantitative extensional $\Sigma^{\lambda}$-algebra) $\C A$ such that $\C A\vDash \C U$.
\end{theorem}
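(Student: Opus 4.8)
The plan is to prove soundness and completeness together, following the standard Birkhoff-style argument adapted to the quantitative, many-sorted, higher-order setting, as was done for weak $\lambda$-theories in Theorem~\ref{thm:completeness1} (and, in the one-sorted first-order case, in \cite{Plotk}). Soundness is the routine direction: I would argue by induction on the derivation of $\Gamma \vLambda \phi$ (resp.~$\vLambdaEta$), checking that each deduction rule of Def.~\ref{def:deducibility1}, and each axiom of $\C U_\lambda$ (resp.~$\C U_{\lambda\eta}$), is validated by condition \eqref{satstar} in every quantitative (extensional) $\lambda$-algebra $\C A$ with $\C A\vDash\C U$. Most cases are direct from the triangular/max/Archimedean properties of the metrics $d^{\C A}_i$ and the non-expansiveness of the operations $\sigma_{A}$; the interesting cases are \textbf{(Subst)}, which uses the fact that $\rho^\natural$ is compatible with $\flat$-substitution under the hygiene hypotheses, and \textbf{($\xi$)}, which is exactly where the design of \eqref{satstar} pays off: if $d^{\C A}_j(f^\natural_{\vec y,\vec a}(t),f^\natural_{\vec y,\vec b}(u))\leq\max\{\delta,\epsilon\}$ whenever the $x$-component is allowed to vary, then by the characterisation of $\Met^{\C A}$ as enriched over pre-metric spaces with the $\Xi$-metric (Def.~\ref{quantitative-lambda}) and the non-expansiveness of $\LL^{\C A}$, one gets $\Xi_{d^{\C A}_i,d^{\C A}_j}(\rho^\natural(\lambda x.t),\rho^\natural(\lambda x.u))\leq\max\{\delta,\epsilon\}$, which is the required bound on $d^{\C A}_{i\to j}$. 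The axioms ($\alpha$), ($\beta$), ($\eta$) hold because $(\LL^{\C A},\Ev^{\C A})$ is a retraction (resp.~isomorphism) and $\rho^\natural$ respects $\alpha$-conversion by construction of $\FreeAlgS{\lambda}{\Var}$ with Barendregt's convention.

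For completeness I would build a \emph{quantitative term model} $\C A_{\C U}$ out of the theory $\C U$ itself. Concretely: for each sort $i$, take the carrier to be the set of closed $\lambda$-terms of sort $i$ (or, if one wants to also handle open inferences uniformly, $\lambda$-terms over a fixed countable supply of ``fresh constants'', i.e.~$\FreeAlgS{\lambda,C}{\Var}^0$ for a set $C$ of parameters) quotiented by the relation $t\simeq s \iff (\emptyset\vdash t\EQUL{0}{\emptyset}{i}s)\in\C U$, which is a congruence by \textbf{(Refl)}, \textbf{(Symm)}, \textbf{(Triang)}, \textbf{(NExp)} and ($\xi$). Define the distance by
\begin{align*}
d^{\C A_{\C U}}_i([t],[s]) \;=\; \inf\{\,\epsilon\in\BB Q_{\geq 0} \;\mid\; (\emptyset \vdash t\EQUL{\epsilon}{\emptyset}{i}s)\in\C U\,\}.
\end{align*}
The rules \textbf{(Refl)}/\textbf{(Triang)}/\textbf{(Symm)}/\textbf{(Max)} make this a well-defined pseudo-metric (with $\PReal$-values, since the inf of the empty set is $\infty$), and \textbf{(Arch)} is exactly what makes the inf \emph{attained}, so that $d^{\C A_{\C U}}_i([t],[s])\leq\epsilon$ iff $\emptyset\vdash t\EQUL{\epsilon}{\emptyset}{i}s\in\C U$. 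The operations $\cdot_A$ and $\lambda_i x$ act on representatives; \textbf{(NExp)} gives non-expansiveness, and one defines $\Lambda^{\C A_{\C U}}_{w,j}$ on a closed term $\lambda\vec y\lambda_i x.t$ to return $[\lambda\vec y\lambda_i x.t]$ itself, which works precisely because functional completeness / the combinators are available and ($\beta$) makes application of the representative agree with substitution. Then one checks $\C A_{\C U}\vDash\C U$: each axiom of $\C U$ unwinds, via \textbf{(Subst)}, \textbf{(Cut)}, \textbf{(Assumpt)} and \textbf{(Concretion)}/\textbf{(Abstraction)}, into the satisfaction condition \eqref{satstar}; here the treatment of the locally quantified variables $Y$ in \eqref{satstar} corresponds to instantiating those variables by the fresh parameters in $C$ and using that an $\EQUL{}{Y}{}$-equation derived in $\C U$ remains derivable after such substitutions (this is why adding the parameter set $C$ is convenient: it supplies ``generic points''). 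Finally, completeness follows: if $\Gamma\vDash_{\C A}\phi$ for all models, it holds in $\C A_{\C U}$, and unpacking \eqref{satstar} in the term model against the identity environment and the generic parameters yields a derivation $\Gamma\vdash\phi$ in $\vdash_{\C U}$, hence $\phi\in\C U$ since $\Gamma$, $\phi\in\C E(\Var)$-style inferences over $\Lambda$ are what $\C U$ contains.

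I expect the main obstacle to be checking that the term model $\C A_{\C U}$ is genuinely a quantitative \emph{$\lambda$}-algebra in the strong sense of Def.~\ref{quantitative-lambda} — that is, that the maps $\LL^{\C A_{\C U}}_{w*i,j}$ and $\Ev^{\C A_{\C U}}_{w,i,j}$ are mutually inverse (resp.~a retraction) \emph{and} non-expansive with respect to the $\Xi$-pre-metrics, and that these pre-metrics are in fact metrics on $\Met^{\C A_{\C U}}$. Non-expansiveness of $\LL$ is exactly the semantic content of the $\xi$-rule, so it should follow from ($\xi$) together with the definition of $d^{\C A_{\C U}}_{i\to j}$ via derivable $\EQUL{\epsilon}{\emptyset}{i\to j}$-equations between $\lambda$-abstractions; but one has to be careful that the distance on the function space $A_{i\to j}$ computed intrinsically (as derivable equations between closed abstractions) coincides with the $\Xi$-distance computed extensionally from $d^{\C A_{\C U}}_i$ and $d^{\C A_{\C U}}_j$ — this is where ($\beta$), ($\eta$) (in the extensional case) and the Archimedean rule all have to be used in concert, and it is the step most likely to require care. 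A secondary delicate point is bookkeeping of bound variables and the hygiene condition when verifying \textbf{(Subst)} and the interaction between $\rho^\natural$ and $f^\flat$; this is tedious but not conceptually hard, and is already implicit in Proposition~\ref{interpretation}. Since the structure of the argument is parallel to Theorem~\ref{thm:completeness1}, in the short version these verifications would be deferred to the long version, with only the definition of $\C A_{\C U}$ and the statement ``$d^{\C A_{\C U}}_i([t],[s])\leq\epsilon \iff (\emptyset\vdash t\EQUL{\epsilon}{\emptyset}{i}s)\in\C U$'' and ``$\C A_{\C U}\vDash\C U$'' highlighted as the crux.
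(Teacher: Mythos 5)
Your proposal is correct in outline and follows essentially the same route as the paper: soundness by induction on the rules, and completeness via a quantitative term model whose distance at sort $i$ is the infimum of the rational $\epsilon$ for which $t\EQUL{\epsilon}{X}{i}s$ is derivable, with \textbf{(Arch)} guaranteeing that this infimum is realized (the paper's Lemma~\ref{lemma:arch}). The only substantive divergence is bookkeeping: where you work with closed terms plus fresh parameters and the empty set of locally quantified variables, the paper indexes the whole construction by a finite set $X$ of locally quantified variables, builds one term algebra $\C T^{\boldsymbol\lambda}_{X}$ per such $X$, and at the end chooses $X$ large enough to cover all equations in $\Gamma\vdash\phi$; the two devices play the same role of supplying ``generic points''. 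The one thing to flag is that the step you correctly single out as the main obstacle --- showing that the syntactic distance at arrow sorts coincides with the $\Xi$-distance computed from the distances at $i$ and $j$, so that $(\LL,\Ev)$ is a non-expansive retraction (resp.\ isomorphism) --- is not an incidental verification but the bulk of the paper's argument: it is Proposition~\ref{lemma:termdistance} (the $\dag$/$\ddag$ equations), proved by a careful two-inclusion argument combining \textbf{(Abstraction)}, \textbf{(Max)}, \textbf{(NExp)}, ($\beta$), ($\xi$), \textbf{(Concretion)} and several elementary facts about infima of upward-closed sets of rationals. Your proposal names the right ingredients but does not carry this step out, so as written it is a correct plan rather than a complete proof; everything else matches the paper's development.
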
 
\longversion{

Soundness is checked by induction on the rules. 
The rest of this section is devoted to proving the completeness part of this theorem.
This will be obtained by constructing a suitable ``term model'' of the smallest $\lambda$-theory $\VLambda$ containing $\Gamma$ and $\C U$
(resp.~of the smallest extensional $\lambda$-theory $\VLambdaEta$ containing $\Gamma$ and $\C U$).
 For any $X\subseteq_{\mathrm{fin}}\Var$ and 
 sort $i$, let $\dLambda{\VLambda}_{X,i}$ be the distance on $\FreeAlgS{\lambda}{\Var}_{i} $ defined by 
$
\dLambda{\VLambda}_{i}(t,s)= \inf\{ \epsilon\in \BB Q^{+} \mid \emptyset \vLambda t \EQUL{\epsilon}{X}{i}s \in \VLambda\}
$, and similarly  let $\dLambdaEta{\VLambda}_{X,i}(t,s) :=\inf\{ \epsilon\in \BB Q^{+} \mid \emptyset \vLambdaEta t \EQUL{\epsilon}{X}{i} s \in \VLambdaEta\}
$
. 
One can check that $\dLambda{\VLambda}_{X,i}$ and $\dLambdaEta{\VLambda}_{X,i}$ are indeed pseudo-metrics. Moreover, one has the following:

\begin{proposition}\label{lemma:termdistance}
For all sorts $i,j\in I$, and terms $t,s\in \FreeAlgS{\lambda}{\Var}_{j}$, \\
\adjustbox{scale=0.85}{
\begin{minipage}{1.2\textwidth}
\begin{align}
\dLambda{\VLambda}_{X,i\to j}(\lambda_{i}x.t, \lambda_{i}x.s) & = \inf\{\delta \mid 
\forall v,w\in \FreeAlgS{\lambda}{\Var}_{X,i} \ 
\dLambda{\VLambda}_{X\cup\{x:i\},j}(t[v/x],s[w/x]) \leq \sup\{\dLambda{\VLambda}_{X\cup\{x:i\},i}(v,w), \delta\}\}
\tag{$\dag$}\label{dag}
\end{align}
\end{minipage}
}
\medskip

and for all terms $t,s\in \FreeAlgS{\lambda}{\Var}_{i\to j}$,
and variable $x$ not occurring free in either $t$ or $s$,\\
\adjustbox{scale=0.85}{
\begin{minipage}{1.2\textwidth}
\begin{align}
\dLambdaEta{\VLambda}_{X,i\to j}(t,s) & = \inf\{\delta \mid 
\forall v,w\in \FreeAlgS{\lambda}{\Var}_{i} \ 
\dLambdaEta{\VLambda}_{X\cup\{x:i\},j}(tv,sw) \leq \sup\{\dLambdaEta{\VLambda}_{X\cup\{x:i\},i}(v,w), \delta\}\}
\tag{$\ddag^{\boldsymbol\eta}$}\label{ddageta}
\\
\dLambda{\VLambda}_{X,i\to j}(t,s) & \geq \inf\{\delta \mid 
\forall v,w\in \FreeAlgS{\lambda}{\Var}_{i} \ 
\dLambda{\VLambda}_{X, j}(tv,sw) \leq \sup\{\dLambda{\C V}_{X,i}(v,w), \delta\}\}
\tag{$\ddag$}\label{ddag}
\end{align}
\end{minipage}
}
\end{proposition}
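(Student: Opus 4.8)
The plan is to reduce all three claims to the derivability of quantitative $\lambda$-equations in the term theory, using that by (Max) and (Arch) one has $\dLambda{\VLambda}_{X,i}(t,s)\le r$ precisely when $\emptyset\vLambda t\EQUL{\epsilon}{X}{i}s\in\VLambda$ for every rational $\epsilon>r$ (and symmetrically for $\dLambdaEta{\VLambda}$ with $\vLambdaEta$). Thus every inequality between a distance and an infimum becomes a statement about which equations are provable: the two identities ($\dag$) and ($\ddag^{\boldsymbol\eta}$) each split into two inequalities, while ($\ddag$) is a single one. Throughout I would implement Barendregt's hygiene convention and, renaming bound variables where needed, assume the auxiliary context $X$ is disjoint from $\free(t,s)$ apart from the displayed bound variable; the reduction to this case rests on a weakening property discussed at the end.

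\emph{The $(\xi)$-direction, giving $\le$ on the left-hand sides of ($\dag$) and ($\ddag^{\boldsymbol\eta}$).} Suppose $\delta$ lies in the right-hand set of ($\dag$). Instantiating the quantifiers with $v=w=x$ and using (Refl) to get $\dLambda{\VLambda}_{X\cup\{x\},i}(x,x)=0$, the defining condition yields $\dLambda{\VLambda}_{X\cup\{x\},j}(t,s)\le\max\{0,\delta\}=\delta$, hence $\emptyset\vLambda t\EQUL{\epsilon}{X\cup\{x\}}{j}s$ for each rational $\epsilon>\delta$. Since $x\in X\cup\{x\}$, rule $(\xi)$ gives $\lambda_i x.t\EQUL{\epsilon}{X\cup\{x\}}{i\to j}\lambda_i x.s$, and (Concretion) drops $x$, so $\dLambda{\VLambda}_{X,i\to j}(\lambda_i x.t,\lambda_i x.s)\le\delta$; this half needs no freshness hypothesis. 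For ($\ddag^{\boldsymbol\eta}$) the same instantiation $v=w=x$ gives $\dLambdaEta{\VLambda}_{X\cup\{x\},j}(tx,sx)\le\delta$, whence $\lambda_i x.(tx)\EQUL{\epsilon}{X\cup\{x\}}{i\to j}\lambda_i x.(sx)$ by $(\xi)$ and then at context $X$ by (Concretion); finally $(\eta)$, applicable because $x\notin\free(t,s)$, together with (Triang) replaces $\lambda_i x.(tx)$ and $\lambda_i x.(sx)$ by $t$ and $s$, giving $\dLambdaEta{\VLambda}_{X,i\to j}(t,s)\le\delta$.

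\emph{The $(\beta)/$(NExp)-direction, giving the opposite inequalities.} For ($\ddag$) everything stays at context $X$. Given $\delta>\dLambda{\VLambda}_{X,i\to j}(t,s)$ and arbitrary $v,w$, pick a rational $\gamma>\dLambda{\VLambda}_{X,i}(v,w)$ and set $\eta=\max\{\gamma,\delta\}$; raising $t\EQUL{\delta}{X}{i\to j}s$ and $v\EQUL{\gamma}{X}{i}w$ to the common index $\eta$ by (Max) and applying (NExp) to the application symbol $\cdot_{i,j}$ yields $tv\EQUL{\eta}{X}{j}sw$, so $\delta$ lies in the right-hand set of ($\ddag$), proving $\ge$. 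For ($\dag$) the same idea runs at context $X\cup\{x\}$: (Abstraction) lifts $\lambda_i x.t\EQUL{\delta}{X}{i\to j}\lambda_i x.s$ to $X\cup\{x\}$ (legitimate since $x$ is bound and $X$ is fresh for the bodies), then (Max) and (NExp) give $(\lambda_i x.t)v\EQUL{\eta}{X\cup\{x\}}{j}(\lambda_i x.s)w$, and two uses of $(\beta)$ with (Triang) rewrite the sides to $t[v/x]$ and $s[w/x]$. For ($\ddag^{\boldsymbol\eta}$) one lifts $t\EQUL{\delta}{X}{i\to j}s$ to $X\cup\{x\}$ by (Abstraction)—this is exactly where $x\notin\free(t,s)$ is needed—and then (Max) and (NExp) produce $tv\EQUL{\eta}{X\cup\{x\}}{j}sw$, completing $\ge$.

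\emph{Main obstacle and the role of $(\eta)$.} The hard part will be the bookkeeping of the locally quantified context: (Abstraction) enlarges the context only by variables fresh for \emph{both} sides, so introducing the bound variable $x$ in the (NExp)-direction presupposes that $x$—and the auxiliary set $X$—are fresh for the bodies. The clean way to discharge this is a preliminary weakening lemma, $\dLambda{\VLambda}_{X\cup\{y\},i}(t,s)=\dLambda{\VLambda}_{X,i}(t,s)$ whenever $y\notin\free(t,s)$: the inequality $\ge$ is immediate from (Concretion), whereas $\le$ is the delicate half and is what licenses the freshness assumption made above. Finally, this analysis explains why ($\dag$) is an equality but ($\ddag$) only an inequality: in ($\dag$) the arguments of the distance are literal abstractions, so $(\xi)$ applies to them directly, while in ($\ddag$) the terms $t,s$ are arbitrary of arrow type and to run the $(\xi)$-argument one must first present them as $\lambda_i x.(tx)$, which requires $(\eta)$. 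In the theory $\VLambda$, without $(\eta)$, only the (NExp)-direction survives, yielding the one-sided ($\ddag$); in $\VLambdaEta$ the $(\eta)$-step restores the converse and upgrades it to the equality ($\ddag^{\boldsymbol\eta}$).
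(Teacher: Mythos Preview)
Your proposal is correct and follows essentially the same route as the paper's proof: both split each identity into two inequalities, obtain one direction by instantiating $v=w=x$ and then applying $(\xi)$ and (Concretion), and obtain the other by lifting the abstraction equation to the enlarged context via (Abstraction), using (Max) and (NExp) on application, and then $(\beta)$ (resp.\ $(\eta)$ for $(\ddag^{\boldsymbol\eta})$). The only packaging difference is that the paper isolates the rational/real $\inf$--$\sup$ bookkeeping into four small auxiliary lemmas (your opening sentence about (Max)/(Arch) plays the role of the key one), whereas you are slightly more explicit than the paper about the freshness side-condition on (Abstraction) and the need for a weakening fact.
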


To establish Proposition \ref{lemma:termdistance} we need a few lemmas.

\begin{lemma}\label{lemma:dist}
For all $\epsilon \in \BB R_{\geq 0}$ and $A\subseteq \BB R_{\geq0}$, $\sup\{\epsilon, \inf A\}= \inf\{\sup\{\epsilon, \delta\}\mid \delta \in A\}$.
\end{lemma}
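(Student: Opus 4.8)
The plan is to prove Lemma~\ref{lemma:dist}, the elementary identity
$\sup\{\epsilon, \inf A\} = \inf\{\sup\{\epsilon,\delta\} \mid \delta \in A\}$
for $\epsilon \in \BB R_{\geq 0}$ and $A \subseteq \BB R_{\geq 0}$, by a standard two-inequality argument, handling the degenerate case $A = \emptyset$ separately (where $\inf \emptyset = +\infty$ and both sides equal $+\infty$, assuming we work in $\PReal$; if instead the convention forces $A$ nonempty one can simply assume $A \neq \emptyset$).

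For the inequality $\sup\{\epsilon,\inf A\} \leq \inf\{\sup\{\epsilon,\delta\}\mid\delta\in A\}$, I would fix an arbitrary $\delta \in A$ and observe $\epsilon \leq \sup\{\epsilon,\delta\}$ and $\inf A \leq \delta \leq \sup\{\epsilon,\delta\}$; hence $\sup\{\epsilon,\inf A\} \leq \sup\{\epsilon,\delta\}$, and taking the infimum over $\delta \in A$ on the right gives the claim. For the reverse inequality $\inf\{\sup\{\epsilon,\delta\}\mid\delta\in A\} \leq \sup\{\epsilon,\inf A\}$, I would split on whether $\epsilon \geq \inf A$. If $\epsilon \geq \inf A$, then for every $\delta \in A$ with $\delta \leq \epsilon$ (such $\delta$ exist, or can be approached, by definition of infimum) we have $\sup\{\epsilon,\delta\} = \epsilon = \sup\{\epsilon,\inf A\}$; more carefully, given any $\eta > 0$ pick $\delta \in A$ with $\delta < \inf A + \eta \leq \epsilon + \eta$, so $\sup\{\epsilon,\delta\} \leq \epsilon + \eta$, whence the infimum on the left is $\leq \epsilon = \sup\{\epsilon,\inf A\}$. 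If $\epsilon < \inf A$, then $\sup\{\epsilon,\inf A\} = \inf A$, and again for any $\eta > 0$ choose $\delta \in A$ with $\delta < \inf A + \eta$; then $\sup\{\epsilon,\delta\} \leq \sup\{\epsilon, \inf A + \eta\} = \inf A + \eta$ (using $\epsilon < \inf A < \inf A + \eta$), so the left-hand infimum is $\leq \inf A + \eta$ for all $\eta > 0$, giving $\leq \inf A = \sup\{\epsilon,\inf A\}$.

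Combining the two inequalities yields the equality. There is essentially no obstacle here: the only subtlety is being careful with infima over possibly non-attaining sets (hence the $\eta$-arguments rather than picking a minimizer) and with the empty-set / $+\infty$ convention, which is why I would state explicitly at the outset which convention for $\inf\emptyset$ is in force and note that the identity holds trivially in that case. The proof is a routine $\epsilon$–$\delta$ manipulation of suprema and infima of subsets of the extended nonnegative reals.
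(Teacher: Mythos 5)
Your proof is correct and follows essentially the same route as the paper's: a case split on whether $\epsilon \geq \inf A$, combined with elementary monotonicity of $\sup$ and $\inf$. The paper's version is much terser (it simply asserts $\theta=\eta=\epsilon$ in the first case and appeals to total orderedness in the second), whereas you spell out the two inequalities and the $\eta$-approximation for non-attained infima, which is a more careful rendering of the same argument.
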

\begin{proof}
Let $\theta=\sup\{\epsilon, \inf A\}$ and $\eta= \inf\{\sup\{\epsilon, \delta\}\mid \delta \in A\}$.
If $\epsilon \geq \inf A$, then $\theta=\eta=\epsilon$. 
If $\epsilon < \inf A$, then $\theta= \inf A$. Since $\BB R_{\geq 0}$ is totally ordered, we conclude that 
$\theta=\eta$.
\end{proof}
\begin{lemma}\label{lemma:inf}
For all $A\subseteq \BB R_{\geq 0}$ such that $A$ is upward closed, $\inf A=\inf(A\cap \BB Q)$.
\end{lemma}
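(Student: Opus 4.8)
The plan is to prove the equality by establishing the two inequalities $\inf A \leq \inf(A\cap \BB Q)$ and $\inf(A\cap \BB Q)\leq \inf A$ separately. The first is immediate and requires no hypothesis on $A$: since $A\cap \BB Q\subseteq A$, the infimum taken over the smaller set can only be larger, so $\inf A\leq \inf(A\cap \BB Q)$. The entire content of the lemma lies in the reverse inequality, and this is where both the upward-closedness hypothesis and the density of $\BB Q$ in $\BB R$ come into play; this is the step I expect to be the main obstacle, though it is quite short.

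For the reverse direction, write $m=\inf A$ and fix an arbitrary $\eta>0$; my goal is to exhibit a rational point of $A$ strictly below $m+\eta$. First I would use the definition of infimum to pick some $a\in A$ with $a<m+\tfrac{\eta}{2}$, so in particular $a<m+\eta$. Then, since the interval $[a,m+\eta)$ is non-empty and of positive length, density of the rationals gives a rational $q$ with $a\leq q<m+\eta$. The key move is now to apply upward-closedness: because $a\in A$ and $q\geq a$, we get $q\in A$, hence $q\in A\cap \BB Q$. This yields $\inf(A\cap \BB Q)\leq q<m+\eta$. Letting $\eta>0$ be arbitrary, I conclude $\inf(A\cap \BB Q)\leq m=\inf A$, and combining with the first inequality gives the claim.

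Finally I would dispose of the degenerate cases. If $A=\emptyset$ then $A\cap \BB Q=\emptyset$ as well, and both infima equal $+\infty$ under the usual convention, so the equality holds trivially; the argument above is only needed when $A\neq\emptyset$. It is worth emphasising in a remark that upward-closedness is genuinely essential here: without it the statement fails, as witnessed by $A=\{\sqrt 2\}$, for which $\inf A=\sqrt 2$ while $A\cap \BB Q=\emptyset$ and $\inf(A\cap \BB Q)=+\infty$. The only subtlety to check carefully is that the chosen $a$ indeed satisfies $a<m+\eta$ so that the target interval is non-empty, which is guaranteed by the slack $\tfrac{\eta}{2}$ built into the choice of $a$.
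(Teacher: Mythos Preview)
Your proof is correct and follows essentially the same route as the paper's: both arguments obtain the easy inequality from $A\cap\BB Q\subseteq A$, and for the reverse inequality both invoke the density of $\BB Q$ together with upward-closedness of $A$ to produce a rational in $A$ close to $\inf A$. The only cosmetic difference is that the paper argues by contradiction (assuming $\inf A<\inf(A\cap\BB Q)$ and deriving an absurdity), whereas you give the equivalent direct $\eta$-argument; your explicit treatment of the degenerate case $A=\emptyset$ is a small bonus.
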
 
\begin{proof}
Since $A\cap \BB Q\subseteq A$ it follows that  $\inf A\leq \inf(A\cap \BB Q)$.
If $\inf A< \inf (A\cap \BB Q)$, then there exists $x\in A - \BB Q$ such that $x < \inf (A\cap \BB Q)$; by the density of $\BB Q$ in $\BB R$ there exists then $q\in \BB Q$ such that $x\leq  q< \inf (A\cap \BB Q)$, and since $A$ is upward closed, it must be $q\in A\cap \BB Q$, which is absurd.  We conclude that $\inf A \geq \inf (A\cap \BB Q)$.
\end{proof}
\begin{lemma}\label{lemma:inf2}
Let $\epsilon \in \BB Q_{\geq 0}$ and $A\subseteq \BB Q_{\geq 0}$. If $\epsilon \geq \inf A$ then either $\epsilon= \inf A$ or there exists $\delta \in A$ such that $\epsilon \geq \delta$.
\end{lemma}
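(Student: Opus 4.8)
The plan is to prove Lemma~\ref{lemma:inf2} by a direct case analysis on whether the infimum of $A$ is actually attained within $A$, exploiting the fact that $A$ is a subset of $\BB Q_{\geq 0}$ and that we are comparing against a rational $\epsilon$.

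First I would set $\iota = \inf A$ and assume $\epsilon \geq \iota$. I would split into two cases according to whether $\iota \in A$ or $\iota \notin A$. If $\iota \in A$, then there is nothing more to do in the sense that I can either have $\epsilon = \iota$ (first disjunct) or, if $\epsilon > \iota$, take $\delta = \iota \in A$, which witnesses $\epsilon \geq \delta$ (second disjunct). So the interesting case is $\iota \notin A$. Here I would argue that $\epsilon > \iota$ must hold strictly: indeed $\epsilon \neq \iota$ because $\iota \notin A$ while — wait, $\epsilon$ need not be in $A$ either. The cleaner argument: if $\iota \notin A$, then $\iota$ is not the minimum, so there must be elements of $A$ arbitrarily close to $\iota$ from above, and in particular, since $\epsilon \geq \iota$ and $\epsilon \notin \{\iota\}$ gives no immediate help, I instead observe that when $\epsilon = \iota$ we land in the first disjunct directly, and when $\epsilon > \iota$ strictly, by definition of infimum there exists $\delta \in A$ with $\iota \leq \delta < \epsilon$, hence $\epsilon \geq \delta$, giving the second disjunct.

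So the cleanest structure is: if $\epsilon = \inf A$ we are in the first disjunct; otherwise $\epsilon > \inf A$ strictly, and by the definition of infimum as the greatest lower bound, $\epsilon$ is not a lower bound of $A$, so there exists $\delta \in A$ with $\delta < \epsilon$, hence $\epsilon \geq \delta$, the second disjunct. Notice this argument actually does not even use that $A \subseteq \BB Q_{\geq 0}$ or that $\epsilon$ is rational — those hypotheses are presumably there for uniformity with how the lemma gets applied, or I am missing a subtlety about whether $\inf A$ could fail to be a real number (it cannot, since $A \subseteq \BB Q_{\geq 0}$ is bounded below by $0$, so $\inf A$ exists as a nonnegative real).

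I do not expect any real obstacle here; the only thing to be careful about is the trichotomy on $\epsilon$ versus $\inf A$: since $\BB R_{\geq 0}$ is totally ordered, from $\epsilon \geq \inf A$ we get exactly the dichotomy $\epsilon = \inf A$ or $\epsilon > \inf A$, and in the latter case the characterization of the infimum as the greatest lower bound immediately produces the required $\delta \in A$. The mild subtlety worth a sentence is justifying that $\epsilon > \inf A$ yields such a $\delta$: were no such $\delta$ to exist, every element of $A$ would be $\geq \epsilon$, making $\epsilon$ a lower bound strictly larger than $\inf A$, contradicting that $\inf A$ is the greatest lower bound.
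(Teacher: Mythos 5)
Your proof is correct and is essentially the paper's argument read in the other direction: the paper assumes no $\delta\in A$ satisfies $\epsilon\geq\delta$, concludes $\epsilon$ is a lower bound and hence $\epsilon=\inf A$, whereas you assume $\epsilon>\inf A$ and conclude $\epsilon$ fails to be a lower bound — the same appeal to the greatest-lower-bound property, contraposed. Your side observation that the rationality hypotheses are not actually used is also accurate.
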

\begin{proof}
Suppose $\epsilon \geq \inf A$.
If $\epsilon < \delta$ holds for all $\delta \in A$, then $\epsilon \leq \inf A$, and this is possible only if $\epsilon=\inf A$.
\end{proof}
\begin{lemma}\label{lemma:arch}
For all $\epsilon \in \BB Q_{\geq 0}$, if $\epsilon \geq \dLambda{\VLambda}_{X,i}(t,s)$ (resp.~$\epsilon \geq \dLambdaEta{\VLambda}_{X,i}(t,s)$), then $\emptyset \vLambda t \EQUL{\epsilon}{X}{i}s\in \VLambda$
(resp.~$\emptyset \vLambdaEta t \EQUL{\epsilon}{X}{i}s\in \VLambdaEta$).
\end{lemma}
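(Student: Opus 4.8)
The plan is to unfold the definition of the term pseudo-metric $\dLambda{\VLambda}_{X,i}$ and reduce the statement to a single application of the infinitary rule (Arch), which is precisely the rule designed to turn an infimum of derivable bounds into a derivable bound. Fix $X$, a sort $i$, and terms $t,s\in\FreeAlgS{\lambda}{\Var}_{i}$, and set $A:=\{\epsilon\in\BB Q^{+}\mid \emptyset\vLambda t\EQUL{\epsilon}{X}{i}s\in\VLambda\}$, so that $\dLambda{\VLambda}_{X,i}(t,s)=\inf A$ by definition; if $A=\emptyset$ the claim is vacuous, since then $\inf A=\infty$ and no rational $\epsilon$ can satisfy the hypothesis. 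The first step I would take is to record that $A$ is \emph{upward closed} in $\BB Q^{+}$: given $\epsilon\in A$ and rational $\epsilon'\geq\epsilon$, the rule (Max) (with $\delta:=\epsilon'-\epsilon\geq 0$) gives $\{t\EQUL{\epsilon}{X}{i}s\}\vdash t\EQUL{\epsilon'}{X}{i}s$, and (Cut) together with $\emptyset\vLambda t\EQUL{\epsilon}{X}{i}s$ yields $\epsilon'\in A$.

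Next I would take $\epsilon\in\BB Q_{\geq 0}$ with $\epsilon\geq\dLambda{\VLambda}_{X,i}(t,s)=\inf A$ and apply Lemma~\ref{lemma:inf2} to the pair $(\epsilon,A)$. If it returns some $\delta\in A$ with $\epsilon\geq\delta$, then upward closure of $A$ gives $\epsilon\in A$ at once, that is, $\emptyset\vLambda t\EQUL{\epsilon}{X}{i}s\in\VLambda$. Otherwise $\epsilon=\inf A$; then for every rational $\delta>\epsilon$ we have $\delta>\inf A$, so $\delta$ is not a lower bound of $A$, hence $\delta\geq a$ for some $a\in A$, whence $\delta\in A$ by upward closure, that is, $\emptyset\vLambda t\EQUL{\delta}{X}{i}s\in\VLambda$. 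At this point every premise of the (Arch)-instance $\{t\EQUL{\delta}{X}{i}s\mid \delta>\epsilon\}\vdash t\EQUL{\epsilon}{X}{i}s$ is derivable from $\emptyset$, and a final (Cut) yields $\emptyset\vLambda t\EQUL{\epsilon}{X}{i}s\in\VLambda$. Here I use that (Arch) is one of the rules (Cut)-(Nexp) of Def.~\ref{def:deducibility} that are carried over to the $\lambda$-calculus consequence relation, now stated for equations of the form $\EQUL{\epsilon}{X}{i}$.

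The extensional case is proved word for word in the same way, replacing $\vLambda,\VLambda,\dLambda{\VLambda}$ throughout by $\vLambdaEta,\VLambdaEta,\dLambdaEta{\VLambda}$, since both (Max) and (Arch) belong to either consequence relation. I do not expect a genuine obstacle here: the only conceptual point is that the infimum defining $\dLambda{\VLambda}_{X,i}$ need not be attained by any rational bound a priori, and (Arch) is exactly what bridges that gap, with Lemma~\ref{lemma:inf2} merely separating the ``already attained'' case (handled by (Max) alone) from the ``genuine limit'' case (handled by (Arch)).
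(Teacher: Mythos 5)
Your proof is correct and follows essentially the same route as the paper's: both split on whether $\epsilon$ strictly exceeds or equals the infimum, using Lemma~\ref{lemma:inf2} together with (Max) in the first case and (Arch) in the second. Your version is slightly more explicit (recording upward closure of the set of derivable bounds, checking each premise of the (Arch) instance, and noting the vacuous case), but the underlying argument is the same.
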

\begin{proof}
Let $\delta =\dLambda{\VLambda}_{X,i}(t,s)$.
If $\epsilon > \delta$, then by Lemma \ref{lemma:inf2} there exists a rational number $\epsilon'\leq \epsilon$ such that 
$\emptyset \vLambda t\EQUL{\epsilon'}{X}{i} s$, and we deduce
$\emptyset \vLambda t\EQUL{\epsilon}{X}{i}s$ by (Max).
If $\epsilon =\delta$, then $\epsilon =\inf\{ \eta \mid \emptyset \vLambda t\EQUL{\eta}{X}{i}s\}$, so we deduce $\emptyset \vLambda t\EQUL{\epsilon}{X}{i}s$ by (Arch).
\end{proof}

\begin{proof}[Proof of Proposition \ref{lemma:termdistance}]
We only prove \eqref{dag}, as \eqref{ddageta} and \eqref{ddag} can be proved in a similar way. Let
\begin{align*}
A& =\{\delta\in \BB Q_{\geq 0}\mid \emptyset \vLambda \lambda_{i}x. t \simeq_{\delta}^{X, i\to j}\lambda_{i}x.s\in \VLambda\}\\
B & =\{\delta \in \BB R_{\geq 0}\mid 
\forall v,w\in \FreeAlgS{\lambda}{\Var}_{i}, \ 
 \dLambda{\VLambda}_{X\cup\{x:i\},j}(t[u/x], s[v/x])\leq 
\sup\{\delta, \dLambda{\VLambda}_{X\cup\{x:i\},i}(v,w)\}\}
\end{align*}
We will first show that $A\subseteq B$, from which we deduce $\inf B \leq \inf A$.
Let then $\delta \in A$ and suppose $\emptyset \vLambda u \EQUL{\epsilon}{X\cup\{x:i\}}{i}v\in \VLambda$.
Let $Y=X\cup\{x:i\}$; 
since $x$ is not free in $\lambda_{i}x.t$ and $\lambda_{i}x.s$, using 
(Abstraction) we deduce that $\emptyset \vLambda \lambda_{i}x.t \EQUL{\epsilon}{Y}{i\to j}\lambda_{i}x.s\in \VLambda$.
 Letting $\sup\{\epsilon,\delta\}=\delta+\alpha= \epsilon+\beta$, using  (Max) we
deduce then $\emptyset \vLambda \lambda_{i}x. t \EQUL{\sup\{\epsilon,\delta\}}{Y}{i\to j}\lambda_{i}x.s\in \VLambda$ and 
$\emptyset \vLambda u \EQUL{\sup\{\epsilon,\delta\}}{Y}{i}v\in \VLambda$, so by (NExp) we deduce $\emptyset \vLambda (\lambda_{i} x.t)u \EQUL{\sup\{\epsilon,\delta\}}{Y}{j} (\lambda_{i}x.s)v\in \VLambda$, and by ($\beta$), we obtain 
 $\emptyset \vLambda t[u/x]  \EQUL{\sup\{\epsilon,\delta\}}{Y}{j} s[v/x]\in \VLambda$. Using Lemma \ref{lemma:dist}, we now have that 
\begin{align*}
\sup\{\delta, \dLambda{\VLambda}_{Y,i}(v,w)\}&  =\sup\{\delta, \inf\{\epsilon\mid \emptyset \vLambda u \EQUL{\epsilon}{Y}{i}v\in \VLambda\}\}\\
&=
\inf\{ \sup\{\epsilon, \delta\} \mid \emptyset \vLambda u \EQUL{\epsilon}{X}{i}v\in \VLambda\}\}\\
&\geq \dLambda{\VLambda}_{Y,j}(t[u/x],s[v/x])
\end{align*}
where the last step follows from the fact that $\emptyset \vLambda t[u/x] \EQUL{\sup\{\epsilon, \delta\}}{Y}{j} s[v/x]\in \VLambda$ holds, by what we have seen, for all $\epsilon$ such that $\emptyset \vLambda u \EQUL{\epsilon}{Y}{i}v\in \VLambda$.
We conclude then that $\delta \in B$.

We will now show $B\cap \BB Q_{\geq 0}\subseteq A$. Since $B$ is an upper set, this is be enough to conclude $\inf A\leq \inf B$, using Lemma \ref{lemma:inf}.
Let then $\delta $ be a rational number in $B$. Let again $Y=X\cup\{x:i\}$ and suppose $\emptyset \vLambda u \EQUL{\epsilon}{Y}{i}v$ holds for some $\epsilon \in \BB Q_{\geq 0}$; 
then 
$\sup\{\epsilon, \delta\} \geq \sup\{  \dLambda{\VLambda}_{Y,i}(u,v),\delta\} \geq \dLambda{\VLambda}_{Y,j}(t[u/x], s[v/x])= \inf\{ \alpha\in \BB Q_{\geq 0} \mid \emptyset \vLambda t[u/x]\EQUL{\alpha}{Y}{j} s[v/x]\}$.
Now, from $\emptyset \vLambda x \EQUL{0}{Y}{i} x$, we deduce 
$\delta =\sup\{0,\delta\}\geq \dLambda{\VLambda}_{Y,j}(t[x/x],s[x/x])= \inf\{ \alpha\in \BB Q_{\geq 0} \mid \emptyset \vLambda t \EQUL{\alpha}{Y}{j} s\}$.
Using Lemma \ref{lemma:arch} we deduce then $\emptyset \vLambda t  \EQU{\delta}{Y}{j}s\in \VLambda$, and by ($\xi)$ together with (Concretion) we conclude  
$\emptyset \vLambda \lambda x_{i}.t \EQUL{\delta}{X}{i\to j}\lambda_{i} x.s$, which proves that $\delta \in A$.
\end{proof}
The following lemma is an immediate consequence of ($\beta$).
\begin{lemma}\label{lemma:termbetanat}
For all sorts $i,j\in I$, and terms $t\in \FreeAlgS{\lambda}{\Var}(1,j)$, $s\in \FreeAlgS{\lambda}{\Var}(1,i)$, $u\in \FreeAlgS{\lambda}{\Var}(1,k)$, 
$d^{\mathsf e}_{X,j}((\lambda_{i}x.t)s,t[s/x]) =0 $ where $\mathsf e\in \{\boldsymbol\lambda, \boldsymbol\lambda\boldsymbol\eta\}$.
\end{lemma}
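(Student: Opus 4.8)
The plan is to read the claimed equality directly off the axiom schema ($\beta$) together with the definition of the term-model distance $d^{\mathsf e}_{X,j}$, so that the proof is as short as the phrase ``immediate consequence of ($\beta$)'' suggests. Recall that for $\mathsf e = \boldsymbol\lambda$ the distance is $\dLambda{\VLambda}_{X,j}(p,q) = \inf\{\epsilon\in\BB Q^{+}\mid \emptyset\vLambda p\EQUL{\epsilon}{X}{j}q\in\VLambda\}$, and for $\mathsf e = \boldsymbol\lambda\boldsymbol\eta$ the analogous quantity $\dLambdaEta{\VLambda}_{X,j}$ defined via $\vLambdaEta$ and $\VLambdaEta$. (The term $u$ and sort $k$ appearing in the statement do not occur in the conclusion and play no role; only the redex $(\lambda_i x.t)s$ and its contractum matter.)

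First I would instantiate ($\beta$) at the redex at hand. Since we work under Barendregt's hygiene convention, we may assume the side conditions $(\lambda_i x.t)s\in\Lambda_j$ and $\free(s)\cap\bd(t)=\emptyset$, so ($\beta$) yields the inference $\emptyset\vdash (\lambda_i x.t)s\EQUL{0}{X}{j} t[s/x]$ for \emph{every} finite set $X\subseteq_{\mathsf{fin}}\Var$. As ($\beta$) is one of the generating axioms of both $\C U_\lambda$ and $\C U_{\lambda\eta}$, and $\VLambda$ (resp.~$\VLambdaEta$) contains $\C U$, this instance lies in $\VLambda$ (resp.~in $\VLambdaEta$). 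Next I would read off the distance: the derivation $\emptyset\vdash (\lambda_i x.t)s\EQUL{0}{X}{j} t[s/x]$ already witnesses membership at $\epsilon = 0$, and applying (Max) to it produces the same inference at every positive rational $\epsilon$; hence the whole of $\BB Q^{+}$ lies in the set over which the infimum defining $d^{\mathsf e}_{X,j}\big((\lambda_i x.t)s,\,t[s/x]\big)$ is taken. Therefore this infimum is $0$, and since distances are non-negative we conclude $d^{\mathsf e}_{X,j}\big((\lambda_i x.t)s,\,t[s/x]\big)=0$ in both the $\boldsymbol\lambda$ and the $\boldsymbol\lambda\boldsymbol\eta$ case.

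The only point that deserves attention---and the closest thing to an obstacle---is the capture-avoidance proviso $\free(s)\cap\bd(t)=\emptyset$ in ($\beta$). This is harmless in the present setting, because substitution $t[s/x]$ is the capture-free operation $(\mathrm{id}_{x,s})^{\flat}(t)$ and bound variables are managed by the hygiene convention. If one wishes to be fully pedantic about a potentially capturing presentation of the redex, one first renames the bound variables of $t$ by ($\alpha$) at distance $0$ and then contracts by ($\beta$), composing the two zero-distance steps with (Triang) taken at $\epsilon=\delta=0$, which again leaves the total distance at $0$. Notably, no appeal to soundness/completeness or to the structural Proposition~\ref{lemma:termdistance} is required: the result is a one-step unfolding of an axiom through the definition of the term-model pseudo-metric.
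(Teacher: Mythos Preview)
Your proposal is correct and matches the paper's own justification, which simply states that the lemma is an immediate consequence of ($\beta$). Your unfolding of the definition of $d^{\mathsf e}_{X,j}$ together with the instance of ($\beta$) (and the routine use of (Max)/($\alpha$) for the side conditions) is exactly what that phrase amounts to.
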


For all $X\subseteq_{\mathrm{fin}}\Var$, let $\BB P^{\boldsymbol\lambda}_{X}$ the be the set of all $I$-indexed families of metrics $p$ on $\FreeAlgS{\lambda}{\Var}$ such that for all $f: \Var\to \FreeAlgS{\lambda}{\Var}$, $f^{\sharp}$ is non-expansive, as a function from $(\FreeAlgS{\lambda}{\Var},\dLambda{\VLambda}_{X})$ to $(\FreeAlgS{\lambda}{\Var}, p)$, and let $d_{X,i}(t,s)= \sup\{p_{i}(t,s)\mid p\in \BB P^{\boldsymbol\lambda}_{X}\}$.
Similarly, let $\BB P^{\boldsymbol\lambda\boldsymbol\eta}_{X}$ the be the set of all $I$-indexed families of pseudometrics $p$ on $\FreeAlgS{\lambda}{\Var}$ such that for all $f: \Var\to \FreeAlgS{\lambda}{\Var}$, $f^{\sharp}$ is non-expansive, as a function from $\FreeAlgS{\lambda}{\Var},\dLambdaEta{\VLambda}_{X})$ to $(\FreeAlgS{\lambda}{\Var}, p)$), and let $e_{X,i}(t,s)= \sup\{p_{i}(t,s)\mid p\in \BB P^{\boldsymbol\lambda\boldsymbol\eta}_{X}\}$.
 
The lemma below show that the metrics $d_{X,i},e_{X,i}$ coincide with $\dLambda{\VLambda}_{i}$ and $\dLambdaEta{\VLambda}_{i}$, respectively.
\begin{lemma}
For all 
 $X\subseteq_{\mathrm{fin}}\Var$, 
$i\in I$ and $t,s\in \FreeAlgS{\lambda}{\Var}_{i}$, $d_{X,i}(t,s)=\dLambda{\VLambda}_{X,i}(t,s)$ and $e_{X,i}(t,s)=\dLambdaEta{\VLambda}_{X,i}(t,s)$.
\end{lemma}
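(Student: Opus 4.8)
The plan is to prove the two equalities $d_{X,i}(t,s)=\dLambda{\VLambda}_{X,i}(t,s)$ and $e_{X,i}(t,s)=\dLambdaEta{\VLambda}_{X,i}(t,s)$ by two inequalities each, exploiting the fact that $d_{X,i}$ (resp.\ $e_{X,i}$) is defined as a supremum over a family $\BB P^{\boldsymbol\lambda}_{X}$ (resp.\ $\BB P^{\boldsymbol\lambda\boldsymbol\eta}_{X}$) of pseudometrics. Since the two cases are entirely parallel I would treat $d_{X,i}=\dLambda{\VLambda}_{X,i}$ in detail and remark that the $\eta$-case is identical, using $\vLambdaEta$, $\VLambdaEta$ and \eqref{ddageta} in place of $\vLambda$, $\VLambda$ and \eqref{ddag}.

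For the inequality $\dLambda{\VLambda}_{X,i}(t,s)\leq d_{X,i}(t,s)$: the key observation is that $\dLambda{\VLambda}_{X}$ is \emph{itself} a member of $\BB P^{\boldsymbol\lambda}_{X}$. Indeed, by the (Subst) rule of Def.~\ref{def:deducibility1}, for any $f:\Var\to\FreeAlgS{\lambda}{\Var}$ the map $f^{\sharp}$ (applied with a suitable renaming of bound variables, \ie\ via $f^{\flat}$) is non-expansive from $(\FreeAlgS{\lambda}{\Var},\dLambda{\VLambda}_{X})$ to itself: if $\emptyset\vLambda t\EQUL{\epsilon}{X}{i}s\in\VLambda$ then $\emptyset\vLambda f^{\flat}(t)\EQUL{\epsilon}{X}{i}f^{\flat}(s)\in\VLambda$, so $\dLambda{\VLambda}_{X,i}(f^{\sharp}t,f^{\sharp}s)\leq\dLambda{\VLambda}_{X,i}(t,s)$. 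Hence $\dLambda{\VLambda}_{X}\in\BB P^{\boldsymbol\lambda}_{X}$ and taking the supremum gives $\dLambda{\VLambda}_{X,i}(t,s)\leq d_{X,i}(t,s)$.

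For the reverse inequality $d_{X,i}(t,s)\leq\dLambda{\VLambda}_{X,i}(t,s)$ one must show that \emph{every} $p\in\BB P^{\boldsymbol\lambda}_{X}$ satisfies $p_{i}(t,s)\leq\dLambda{\VLambda}_{X,i}(t,s)$; equivalently, that $\dLambda{\VLambda}_{X}$ is the \emph{largest} element of $\BB P^{\boldsymbol\lambda}_{X}$. This is where the structural rules of the consequence relation come into play: given $p\in\BB P^{\boldsymbol\lambda}_{X}$, I would argue by induction on the derivation of $\emptyset\vLambda t\EQUL{\epsilon}{X}{i}s\in\VLambda$ that $p_{i}(t,s)\leq\epsilon$, so that $p_{i}(t,s)\leq\inf\{\epsilon\mid\emptyset\vLambda t\EQUL{\epsilon}{X}{i}s\}=\dLambda{\VLambda}_{X,i}(t,s)$. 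The base cases are the axioms of $\VLambda$ (here one uses non-expansiveness of $p$ under substitution to reduce general instances to the closed axiom schemes, and Lemma~\ref{lemma:termbetanat}/Prop.~\ref{lemma:termdistance} for the $\beta$- and $\xi$-axioms), and the inductive cases follow since a pseudometric automatically validates (Refl),(Symm),(Triang),(Max),(Arch), while (NExp) holds because the operations $\sigma$ are non-expansive in the $p$-pseudometric — this last point again follows from $p\in\BB P^{\boldsymbol\lambda}_{X}$ via (Subst), since applying a constructor is an instance of substitution into a fixed term. The main subtlety — and the step I expect to be the real obstacle — is the $(\xi)$ rule: one must check that $p$ respects the recursive characterization \eqref{dag} of $\dLambda{\VLambda}_{X,i\to j}(\lambda_{i}x.t,\lambda_{i}x.s)$, \ie\ that the non-expansiveness condition defining $\BB P^{\boldsymbol\lambda}_{X}$ propagates correctly through $\lambda$-abstraction with the local quantifier bookkeeping of (Abstraction)/(Concretion); this is precisely what Prop.~\ref{lemma:termdistance} was set up to supply, so the argument amounts to feeding \eqref{dag} into the induction hypothesis together with the fact that $p$ is non-expansive under the substitutions $x\mapsto v$, $x\mapsto w$.
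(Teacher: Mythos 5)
Your first inequality ($\dLambda{\VLambda}_{X,i}\leq d_{X,i}$, via $\dLambda{\VLambda}_{X}\in \BB P^{\boldsymbol\lambda}_{X}$ by (Subst)) is exactly the paper's argument. The problem is your second inequality, where you have misread the definition of $\BB P^{\boldsymbol\lambda}_{X}$. Membership of $p$ in $\BB P^{\boldsymbol\lambda}_{X}$ does \emph{not} mean that every substitution is non-expansive from $(\FreeAlgS{\lambda}{\Var},p)$ to itself; it means that every $f^{\sharp}$ is non-expansive from the \emph{fixed} source $(\FreeAlgS{\lambda}{\Var},\dLambda{\VLambda}_{X})$ to the target $(\FreeAlgS{\lambda}{\Var},p)$. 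Instantiating this condition at $f=\mathrm{id}$ gives $p_{i}(t,s)\leq \dLambda{\VLambda}_{X,i}(t,s)$ for every $p\in\BB P^{\boldsymbol\lambda}_{X}$ in one line, and taking the supremum over $p$ yields $d_{X,i}(t,s)\leq\dLambda{\VLambda}_{X,i}(t,s)$. That is the whole of the paper's converse direction; no induction on derivations is involved.

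Beyond being unnecessary, the induction you propose would not close under your reading of the definition. If $\BB P^{\boldsymbol\lambda}_{X}$ consisted of arbitrary pseudometrics merely closed under substitution, the claim ``$\emptyset\vLambda t\EQUL{\epsilon}{X}{i}s\in\VLambda$ implies $p_{i}(t,s)\leq\epsilon$'' would simply be false: the discrete metric on raw terms is closed under substitution but assigns distance $1$ to a $\beta$-redex and its contractum, so your base case for the axiom $(\beta)$ (and likewise $(\alpha)$ and the $\CL$-axioms) fails. The lemmas you invoke there (Lemma~\ref{lemma:termbetanat}, Proposition~\ref{lemma:termdistance}) are statements about $\dLambda{\VLambda}$ itself and give you no control over an arbitrary $p$. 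So the ``real obstacle'' you flag at the $(\xi)$ rule is not where the argument breaks; it breaks already at the axioms, and the fix is to use the definition of $\BB P^{\boldsymbol\lambda}_{X}$ as actually stated, which makes the direction immediate.
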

\begin{proof}
For any substitution $f:\Var\to \FreeAlgS{\lambda}{\Var}$ by (Subst) we have 
that $
\emptyset \vLambda s\EQUL{\epsilon}{X}{i}t\in \VLambda$ implies  $\emptyset \vLambda f^{\flat}(s)\EQUL{\epsilon}{X}{i}f^{\flat}(t)\in \VLambda
$, 
whence $\dLambda{\VLambda}_{X,i}(t,s)\geq \dLambda{\VLambda}_{X,i}(f^{\flat}(t), f^{\flat}(s))$. This implies that $\dLambda{\VLambda}_{X}\in \BB P^{\boldsymbol\lambda}_{X}$, and thus that $d_{X,i}\geq \dLambda{\VLambda}_{X,i}$.
For the converse direction, by definition $d_{X}$ makes the function $\mathrm{id}^{\flat}: \FreeAlgS{\lambda}{\Var}\to \FreeAlgS{\lambda}{\Var}$ non-expansive, and this implies $\dLambda{\VLambda}_{X,i}(t,u) \geq d_{X,i}(t,u)$.
The argument for $e_{X,i}$ is similar.
 \end{proof}

For all $X\subseteq_{\mathrm{fin}}\Var$ and $i\in I$, let $\FreeAlgLL{X,i}$ be the set obtained by quotienting $\FreeAlgS{\lambda}{\Var}_{i}$ by 
the relation $t\simeq_{X} u$ iff $d_{X,i}(t,u)=0$. 
Let $(\FreeAlgLL{X,i}, d^{\simeq}_{X,i})$ be the metric space given by $d^{\simeq}_{X,i}(t,u)=d_{X,i}([t]_{X},[u]_{X})$. 
For any choice of $X\subseteq_{\mathrm{fin}}\Var$, we obtain then a quantitative $\lambda$-algebra $\C T_{X}^{\boldsymbol\lambda}:=(\FreeAlgLL{X}, \Omega^{\simeq}_{X}, \Lambda^{\C T} ,d^{\simeq}_{X})$, where $\Omega^{\simeq}_{X}$ contains the maps $\sigma^{\simeq}_{X}: [t_{1}]_{X},\dots, [t_{n}]_{X} \mapsto [\sigma(t_{1},\dots, t_{n})]_{X}$, and $\Lambda^{\C T}_{X,w,i}:
\FreeAlgS{\lambda, \FreeAlgLL{X}}{\Var}_{\vec w\to j}
\to \FreeAlgLL{X,\vec w\to j}
$ sends $\lambda \vec y.t$ onto $[\lambda \vec y.t]_{X}$.
Moreover, 
Lemma \ref{lemma:termdistance} and Lemma \ref{lemma:termbetanat} assure that the pair $(\Lambda^{\C T}_{X},  \lambda \vec z.t \mapsto  \lambda \vec z.\lambda_{i}x.t\vec z x)$ yields a family of retractions   
$ 
\Met^{\C T_{X}^{\boldsymbol\lambda}}(
\FreeAlgLL{X,w*i}, \FreeAlgLL{X,j})
\To
\Met^{\C T_{X}^{\boldsymbol\lambda}}(
\FreeAlgLL{X,w}, \FreeAlgLL{X,i\to j})
%
 $ natural in $w$.
 A similar argument shows the existence of quantitative extensional $\lambda$-algebras $\C T^{\boldsymbol\lambda\boldsymbol\eta}_{X} $.

To conclude the completeness argument,
fix a finite set $X$ such that all equations in $\Gamma \vdash \phi$ can be put in the form
$t\EQUL{\epsilon}{X}{i}s$. From the fact that $\C T^{\boldsymbol\lambda}_{X}$ is a quantitative $\Sigma\lambda$-algebra, and the hypothesis, it follows that $\Gamma \vDash_{\C T^{\boldsymbol\lambda}_{X}}\phi$. 
Since, by construction, $\C T^{\boldsymbol\lambda}_{X} \vDash \Gamma$, it follows that 
$\C T^{\boldsymbol\lambda}_{X}\vDash \phi$. Let $\phi= t\EQUL{\epsilon}{x}{i} s$; since $\epsilon \in \BB Q_{\geq 0}$, using Lemma \ref{lemma:arch} we deduce that $\emptyset \vLambda t \EQUL{\epsilon}{X}{i}s\in \Gamma \cup \C U$, and thus that  
$\Gamma \vLambda t \EQUL{\epsilon}{X}{i}s\in\C U$. 
A similar argument can be done for  $\C T_{X}^{\boldsymbol\lambda\boldsymbol\eta}$.
}


\begin{remark}\label{rem:partial3}
Also in this case the whole construction scales to the case of partial ultra-metric spaces.
Following Remark \ref{rem:partial2}, we will speak of
\emph{partial $\lambda$-theories} and
\emph{partial $\lambda$-algebras}.
\end{remark}

%
%
%
%
%
 
 %

\section{Metric Constraints}\label{section4}

In this section we take a closer look at the several obstacles one might face when looking for higher-order quantitative algebras.
First, as seen in Section \ref{section1plus}, in higher-order types  the \emph{unique} distance, $\Xi$, making both application and abstraction non-expansive operations might not be  a metric.
Moreover, even if such a metric exists, several conditions might lead higher-order distances to be trivial (i.e.~discrete), or have plenty of isolated points.
But discrete metrics and isolated points  convey no more information than equivalences,  while one of the main reasons to look for semantics of program distances is to be able to compare informatively programs which are not equivalent.
Despite what look like strong limitations, we conclude this section by presenting a few examples of non-discrete quantitative $\lambda$-algebras.

%
%
\subparagraph*{Existence of Exponential Objects.}

Given metric spaces $(X,a)$ and $(Y,b)$, if $(Y,b)$ is ultra-metric, then $\Xi_{a,b}=\Phi_{a,b}$ is always a metric, which means that $\Met(X,Y)$ is their exponential object in $\Met$.
When $(Y,b)$ is not ultra-metric,  condition \eqref{star} from Theorem \ref{thm:clementino} provides a useful \emph{sufficient} criterion to check if $\Xi_{a,b}$ is a metric (and thus, if some candidate quantitative applicative $\lambda$-algebra $\C A$ is a quantitative $\lambda$-algebra).
 \longversion{Indeed, \eqref{star} is equivalent (cf.~\cite{Clementino2006}, p.~3124) to the condition below
  \begin{align}
\inf_{x\in X} \{\max\{a(x_{0},x),\alpha\}+\max\{a(x,x_{2}),\beta\} \}\geq a(x_{0},x_{2})
 \tag{$**$}\label{startwo}
 \end{align}
 Now,  for any metric space $(Y,b)$, given $f,g,h\in \Met(X,Y)$, supposing $\Xi_{a,b}(f,h)+\Xi_{a,b}(h,g)\geq a(x_{0},x_{2})$, if \eqref{startwo} holds, we deduce 
 \begin{align*}
 b(f(x_{0}),g(x_{2})) &
 \leq \inf_{x\in X}b(f(x_{0}), h(x))+ b(h(x), g(x_{2})) \\
 &  \leq  \inf_{x\in X} \{ \max\{a(x_{0},x), \Xi_{a,b}(f,h)\}+\max\{a(x,x_{0}),\Xi_{a,b}(h,g)\} \}\\
 & \leq \Xi_{a,b}(f,h)+ \Xi_{a,b}(h,g)
 \end{align*}
which shows that $ \Xi_{a,b}(f,g)\leq  \Xi_{a,b}(f,h)+ \Xi_{a,b}(h,g) $. 
}
We will now show that, under very mild hypotheses, the validity of \eqref{star} is also \emph{necessary} for $\C A$ to be a quantitative $\lambda$-algebra.

 Let a quantitative applicative $\lambda$-algebra $\C A$ be \emph{observationally complete} when it 
contains the metric space $(\PReal, |\cdot-\cdot|)$ and for all sort $i$,  
$\Met^{\C A}(A_{i}; \PReal)\simeq \Met(A_{i},\PReal)$. In other words, $\C A$ contains \emph{all} observations on $A_{i}$ with target $\PReal$.
Moreover, let a \emph{quantitative $\lambda$-pre-algebra} be as a quantitative $\lambda$-algebra
 $\C A$, but where the $d^{\C A}_{i}$ need only be pre-metrics.
 Given a quantitative $\lambda$-pre-algebra $\C A$, let 
 $\C A^{*}$ indicate the restriction of $\C A$ to those sorts $i$ for which $d^{\C A}_{i}$ is a metric (i.e.~it also satisfies \trans).

\begin{proposition}\label{lemma:weakclem2}
Let $\C A$ be an observationally complete quantitative extensional $\lambda$-pre-algebra. For any $A$ in $\C A^{*}$, $A$ is exponentiable in $\Met^{\C A^{*}}$ iff 
for all $\alpha,\beta\in \mathrm{Im}(d^{\C A})$ and $x_{0},x_{2}\in X$ with $a(x_{0},x_{2})=\alpha+\beta$, condition \eqref{star} holds. 
%
%
\end{proposition}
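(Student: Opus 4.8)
The plan is to reduce the statement to Theorem~\ref{thm:clementino} by showing that, for an observationally complete quantitative extensional $\lambda$-pre-algebra, exponentiability of $A=(X,a)$ in $\Met^{\C A^{*}}$ coincides with exponentiability of $(X,a)$ in $\Met$, at least as far as the relevant instances of condition~\eqref{star} are concerned. The ``if'' direction is essentially contained in the long-version computation already displayed above: if \eqref{star} holds for all $\alpha,\beta\in \mathrm{Im}(d^{\C A})$ with $a(x_0,x_2)=\alpha+\beta$, then the equivalent reformulation \eqref{startwo} yields the triangle inequality for $\Xi_{a,b}$ on every $\Met(A_i,A_j)$ arising in $\C A$, hence $\Xi_{\max\{\vec a\},b}$ is a genuine metric and the enriched retraction/isomorphism of Definition~\ref{quantitative-lambda} exhibits $A_{i\to j}$ as an exponential; so $A$ is exponentiable in $\Met^{\C A^{*}}$. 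The only subtlety is that one must check \eqref{star} is needed only for $\alpha,\beta$ in the image of $d^{\C A}$, which is exactly the hypothesis (distances $a(x_0,x_2)$ occurring in the algebra are themselves in $\mathrm{Im}(d^{\C A})$, and the decompositions $\alpha+\beta$ that matter are those realized by actual distances between interpretations).

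For the ``only if'' direction, assume $A$ is exponentiable in $\Met^{\C A^{*}}$ and suppose, for contradiction, that \eqref{star} fails for some $x_0,x_2\in X$ and $\alpha,\beta\in \mathrm{Im}(d^{\C A})$ with $a(x_0,x_2)=\alpha+\beta$: that is, there is $\epsilon>0$ such that every $x_1\in X$ has either $a(x_0,x_1)\ge \alpha+\epsilon$ or $a(x_1,x_2)\ge\beta+\epsilon$. I would then build, using observational completeness, two observations $f,g\in \Met^{\C A^{*}}(A;\PReal)\simeq \Met(X,\PReal)$ — concretely something like $f(x)=\min\{a(x_0,x),\alpha\}$ and $g(x)=\alpha+\min\{a(x,x_2),\beta\}$, which are non-expansive into $\PReal$ — and compute $\Xi_{a,|\cdot-\cdot|}(f,g)$. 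The failure of \eqref{star} forces a gap: pointwise one gets $|f(x)-g(x_2)|$ and $|f(x_0)-g(x)|$ small, but the ``crossed'' evaluation $|f(x_0)-g(x_2)| = |0 - (\alpha+\beta)| = \alpha+\beta = a(x_0,x_2)$ cannot be witnessed by an intermediate point within tolerance $<\alpha+\beta$, so $\Xi_{a,|\cdot-\cdot|}$ violates the triangle inequality (exactly as in the Remark after the exponential-in-$\Met$ lemma, where $\BB N\subseteq\BB R$ is shown non-exponentiable by the missing midpoint). Hence $\Xi_{a,|\cdot-\cdot|}$ is only a pre-metric on $\Met(A;\PReal)$, so $\PReal$ is a sort of $\C A$ but not of $\C A^{*}$ with the exponential $A\to\PReal$ present, contradicting exponentiability of $A$ in $\Met^{\C A^{*}}$. (Here I use that for a quantitative extensional $\lambda$-algebra the exponential metric on $\Met^{\C A}(A;B)$ is \emph{forced} to be $\Xi_{a,b}$, by the enriched-isomorphism clause of Definition~\ref{quantitative-lambda} together with the uniqueness statement in the exponential-in-$\Met$ lemma.)

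The main obstacle I anticipate is the bookkeeping around $\C A^{*}$ versus $\C A$: one must be careful that the observations $f,g$ produced via observational completeness land in representable maps with target $(\PReal,|\cdot-\cdot|)$ — which \emph{is} in $\C A$ by assumption — and that the contradiction is phrased at the right level, namely that exponentiability of $A$ in $\Met^{\C A^{*}}$ would demand $\Xi_{a,|\cdot-\cdot|}$ be a metric on the hom-set, whereas observational completeness makes that hom-set all of $\Met(X,\PReal)$, so the counterexample pair $f,g$ genuinely lives there. A secondary point to handle cleanly is restricting attention to $\alpha,\beta\in\mathrm{Im}(d^{\C A})$ throughout: in the ``only if'' direction one gets the failing instance with $a(x_0,x_2)\in\mathrm{Im}(d^{\C A})$ automatically, and one should observe that the non-expansive observations built from $a(x_0,-)$ and $a(-,x_2)$ only ever probe distances already present, so the pre-metric violation occurs at an admissible decomposition. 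Everything else is the routine $\inf$/$\sup$ manipulation already packaged in Lemmas of the completeness section and in the equivalence of \eqref{star} with \eqref{startwo} cited from \cite{Clementino2006}.
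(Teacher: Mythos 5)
Your overall architecture is the same as the paper's: you first observe that extensionality plus non-expansiveness of $\LL$ and $\Ev$ force $d^{\C A}_{i\to j}$ to coincide with $\Xi_{d^{\C A}_{i},d^{\C A}_{j}}$, you get sufficiency from the reformulation \eqref{startwo} of \eqref{star} exactly as in the displayed computation (and you correctly note that the relevant decompositions $\alpha+\beta$ are confined to $\mathrm{Im}(d^{\C A})$, since $\Xi(f,h)$ and $\Xi(h,g)$ are themselves $d^{\C A}_{i\to j}$-distances), and you get necessity by using observational completeness to identify $\Met^{\C A}(A;\PReal)$ with all of $\Met(X,\PReal)$ so that the known $\Met$-level obstruction applies. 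That is precisely the paper's proof, which at the necessity step simply defers to \cite{Clementino2006}.

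The one place where you go beyond the paper --- inlining that cited argument --- is where your proposal breaks. A failure of \trans\ for $\Xi_{a,|\cdot-\cdot|}$ is a statement about a \emph{triple} of functions, namely $\Xi(f,g)>\Xi(f,h)+\Xi(h,g)$, and you only ever exhibit two; no third function $h$ is constructed, so no triangle inequality has actually been violated. Moreover the arithmetic of the pair you do write down is off: with $g(x)=\alpha+\min\{a(x,x_2),\beta\}$ one has $g(x_2)=\alpha$, not $\alpha+\beta$ (the value $\alpha+\beta$ is $g(x_0)$), and in fact $\Xi(f,g)\geq |f(x_0)-g(x_0)|=\alpha+\beta$ holds for this pair \emph{whether or not} \eqref{star} fails, so the pair cannot by itself detect the missing intermediate point. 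The role of \eqref{star} is precisely to govern whether a non-expansive $h$ with $\Xi(f,h)\leq\alpha$ and $\Xi(h,g)\leq\beta$ can exist; producing the right triple $f,h,g$ witnessing that it cannot is the actual content of the argument on pp.~3126--3127 of \cite{Clementino2006}. Either cite that argument, as the paper does, or carry out the three-function construction in full; as written, the necessity direction of your proposal has a genuine gap.
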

\longversion{

 \begin{proof}
 
Let $A=A_{i}$ and $B=A_{j}\in \C A^{*}$.
Since $\C A$ is extensional, we can identify $A_{i\to j}$ with $\Met^{\C A}(A,B)$. 
First we show that for all $f,g\in A_{i\to j}$, their distance is completely determined by 
 $d^{\C A}_{i\to j}(f, g)= \Xi_{d^{\C A}_{i}, d^{\C A}_{j}}(f,g)$. 
 Let $\Ev^{*}_{i,j}:=\Ev_{i\to j,i,j}^{\C A}(\mathrm{id}_{A_{i\to j}})\in \Met^{\C A}(A_{i\to j, i},A_{ j})$
 Since $\Ev^{*}_{i,j}$ is non-expansive we have 
 $  d^{\C A}_{j}(f(a),g(b))=
 d^{\C A}_{j}(\Ev^{*}_{i,j}(f)(a),\Ev_{i,j}^{*}(g)(b))\leq \sup\{d^{\C A}_{i}(a,b),d^{\C A}_{i\to j}(f,g)\}$, which proves that 
 $d^{\C A}_{i\to j}(f,g)\geq \Xi_{d^{\C A}_{i},d^{\C A}_{j}}(f,g)$. 
Conversely, since $\LL_{i,j}^{\C A}: \Met^{\C A}(A_{w*i},A_{j})\to  \Met^{\C A}(A_{w},A_{i\to j})$ is non-expansive, we have that 
$\Xi_{d^{\C A}_{i},d^{\C A}_{j}}(f,g)\geq d^{\C A}_{i\to j}(\LL^{\C A}_{i,j}(f),\LL^{\C A}_{i,j}(g))$.

Now, if condition \eqref{star} holds, we can argue as above to show that the distance functions $d^{\C A}_{i\to j}=\Xi_{d^{\C A}_{i},d^{\C A}_{j}}$ are all metrics.
To show that condition \eqref{star} is necessary, one can now argue as in \cite{Clementino2006} [pp.~3126-3127] (using the fact that $\Met^{\C A}(A;\PReal)=\Met(A,\PReal)$), showing that if $\Xi_{d^{\C A}_{i},d^{\C A}_{j}}$ is a metric, then condition
\eqref{startwo} must hold. 
 \end{proof}

}
Proposition \ref{lemma:weakclem2} has a positive side: it provides a sufficient condition for exponentiability which is slightly weaker than Theorem \ref{thm:clementino}, as \eqref{star} needs only hold for distances $\alpha,\beta$ in the \emph{image} of the distance functions $d^{\C A}$ of the pre-algebra. 
Notice that, if the $d^{\C A}$ are discrete, condition \eqref{star} trivially holds.
On the other hand, Proposition \ref{lemma:weakclem2} has a negative side: 
if condition \eqref{star} fails (i.e.~some space $A$ does not contain ``enough points''), then $\C A$ fails to be a quantitative $\lambda$-algebra.
For instance, no algebra containing $\BB N$, with the metric inherited from $\BB R$, as one of its objects, can be a $\lambda$-algebra.

\subparagraph*{Existence of Compact Algebras.}

We have the following negative result. 
\begin{proposition} \hfill \label{counterexample}
There are no non-trivial one-sorted weak quantitative $\lambda$-algebras in Met which are compact.
 \end{proposition}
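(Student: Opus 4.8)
The plan is to derive a contradiction from the assumption that there is a non-trivial one-sorted weak quantitative $\lambda$-algebra $\C A=(A,\Omega^{\C A},d^{\C A})$ in $\Met$ whose single carrier $(A,d^{\C A})$ is compact. ``Non-trivial'' means $d^{\C A}$ is not the discrete metric, so in particular $A$ has at least two points $a_0\neq a_1$ and, since we are looking at distances and $\Met$ is ``pseudo'', there are points at positive finite distance. The key structural fact to exploit is that in a weak $\lambda$-algebra every non-expansive (representable) function $f\colon A\to A$ is \emph{represented} by some element of $A$ in the sense that $g\cdot_{\C A}x=f(x)$ for a suitable $g$; combined with compactness this will over-populate $A$ with contracting self-maps and force a fixed-point/scaling argument to collapse all distances.

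First I would set up the representability bureaucracy: since $\C A$ is a weak $\lambda$-algebra, the set $\Lambda(f)$ is non-empty for every $f\in\Met(A,A)=\Met^{\C A}(A;A)$ coming from a $\CL$-term with parameters in $A$; in particular the identity, all constant maps $c_a\colon x\mapsto a$, and application $a\cdot_{\C A}(-)$ are representable, and application $\cdot_{\C A}\colon A\times A\to A$ is itself non-expansive for the $\max$-metric on $A\times A$. The crucial point is that contractions are closed under the operations present: if $f$ is a $\lambda$-term-definable non-expansive map, so is $x\mapsto f(f(x))$, etc. Next I would use compactness. On a compact metric space, a contraction with ratio $<1$ would immediately give a unique fixed point by Banach, but here we only have non-expansiveness, so the real tool is that on a \emph{compact} space any non-expansive surjection is an isometry, and more usefully: for a non-expansive $f$ on a compact space, the intersection $\bigcap_n f^n(A)$ is a non-empty compact set on which $f$ restricts to an isometry (a standard fact).

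The heart of the argument — and the step I expect to be the main obstacle — is to build from representability a non-expansive self-map that is \emph{strictly} contracting on some pair of points that are maximally far apart, and then iterate it using compactness to drive the diameter to zero. Concretely: let $D=\operatorname{diam}(A,d^{\C A})$, which is finite by compactness (and by non-triviality $0<D$, possibly $D=\infty$ must be excluded first — compactness plus the metric being real-valued on a compact set rules $D=\infty$ out, or one restricts to a bounded component). Pick $x,y$ with $d^{\C A}(x,y)=D$ (attained by compactness). Using the combinators, form the term representing the map $h=\lambda z.\, z\cdot_{\C A} z$ or a suitable ``diagonal'' composite, and more importantly the map sending everything into the line/path between two chosen points; the weak $\lambda$-algebra structure lets us internalize such a map as an element of $A$, and then applying \emph{that element} to points is again non-expansive. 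The goal is a non-expansive $F\colon A\to A$ with $F(x)=x$, $F(y)=x$ (a retraction onto a point along a ``geodesic direction''), which is impossible to get directly from non-expansiveness alone — so instead one argues with the orbit of the pair $(x,y)$ under the representable maps, showing the closure of this orbit is a compact invariant set on which all representable maps act by isometries, yet the combinator identities ($\KKK xy=x$, $\SSS$) force a map that is not an isometry on it, giving the contradiction.

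To make this precise I would proceed in the following order: (1) record that $(A,d^{\C A})$ compact and non-trivial implies $0<\operatorname{diam}(A)<\infty$ and that all relevant extrema are attained; (2) observe $\cdot_{\C A}$ is non-expansive and that $\KKK_{\C A}a\cdot_{\C A}(-)$ is the constant map $c_a$, hence every constant map is representable and there is an element $k_a\in A$ with $k_a\cdot_{\C A}b=a$ for all $b$; (3) note $\SSS_{\C A}$ gives, for any two representable $f,g\colon A\to A$, an element representing $x\mapsto f(x)\cdot_{\C A}g(x)$, so representable maps form a composition-and-pairing-closed family of non-expansive maps; (4) the evaluation-at-a-point maps $\operatorname{ev}_a\colon g\mapsto g\cdot_{\C A}a$, composed with abstraction, let us internalize the ``apply to a fixed argument'' operation, and in particular for any $a$ the map $x\mapsto x\cdot_{\C A}a$ is non-expansive on $A$; (5) now take the compact invariant subset $C=\overline{\{g\cdot_{\C A}a_0 : g\in A\}}$ (or the orbit closure of a maximal pair) and argue that every representable map restricted to $C$ is an isometry by the compactness lemma, while the presence of a representable map that is a genuine constant $c_{a}$ with image a single point forces $C$ to be a single point; since $c_a$ is representable, $C$ contains a point and $c_a(C)=\{a\}$ has diameter $0$ but $c_a$ isometric forces $\operatorname{diam}(C)=0$, so $C=\{a\}$ for every $a$ in the image of application, i.e.\ $a_0\cdot_{\C A}b=a_1\cdot_{\C A}b$ forces all such values equal — pushing this through $\III_{\C A}$ (whose representable map is the identity, hence the whole space) yields $\operatorname{diam}(A)=0$, contradicting non-triviality. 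The delicate point, and where I would spend the most care, is step (5): making rigorous that ``constant maps are isometries on the invariant set'' is only possible when that set is a single point, and correctly choosing the invariant compact set so that it is simultaneously hit by the identity (forcing it to be all of $A$ up to closure) and by a constant (forcing it to be a point); threading the combinator identities so both constraints bear on the \emph{same} compact set is the crux.
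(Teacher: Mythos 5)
Your outline does not close: the decisive step (5) rests on a compactness lemma that does not say what you need. The standard fact is that a \emph{single} non-expansive self-map $f$ of a compact space restricts to an isometry on \emph{its own} eventual image $\bigcap_n f^n(A)$; it provides no common compact set on which \emph{all} representable maps act by isometries. Worse, no such set with more than one point can exist: since $\KKK\cdot a$ represents the constant map $c_a$ for \emph{every} $a\in A$, any subset closed under all representable maps must contain every $a$, i.e.\ must be all of $A$, and constants are not isometries on $A$ unless $A$ is a singleton. So the dichotomy you aim for (``the identity forces $C=A$, a constant forces $C$ to be a point'') is not a contradiction extractable from any lemma --- it merely shows the invariant set you are looking for does not exist. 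You flag this yourself as the crux, but the missing idea is never supplied. A telling sanity check: everything you actually use (non-expansiveness of application, representability of the identity and of all constants on one compact carrier, closure of representable maps under composition) also holds sort-wise in the \emph{many}-sorted typed case, where non-trivial compact algebras do exist (e.g.\ the interval algebra restricted to compact intervals, as the paper notes immediately after the proposition). Any correct proof must therefore exploit one-sortedness in an essential, genuinely untyped way, which your steps (1)--(5) do not.

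The paper's proof does exactly that. It considers the sequence $\KKK^n\III$ (all living in the single carrier because $\star\to\star=\star$), extracts a convergent subsequence $\KKK^{n_k}\III$ by compactness, and uses the ``peeling'' terms $y=\lambda x.\,x\III\cdots\III$ (with $n_k$ arguments) together with non-expansiveness of application to convert the Cauchy estimate $d(\KKK^{n_{k+1}}\III,\KKK^{n_k}\III)\le 1/k$ into $d(\KKK^{n_{k+1}-n_k}\III,\III)\le 1/k$. Applying to one further argument $t$ and using $\KKK^{m}\III\,t=\KKK^{m-1}\III$ and $\III t=t$ then gives $\KKK^{n_{k+1}-n_k-1}\III\to \p{t}$ for \emph{every} $t$, so all interpretations coincide in the limit and the metric is trivial. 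If you want to salvage your approach, it is this kind of self-application of a single infinite family of combinators inside one carrier, separated by applicative contexts, that must replace your step (5).
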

\longversion{
\begin{proof} 
Assume the converse. 
 Since the spaces are compact, application is uniformly continuous. Let ${\KKK^n\III}=\lambda\underbrace{x_1\ \ldots\ x_n}_n.{\III}$, \ie\ ${\KKK^{n+1}\III}={\KKK}({\KKK^n\III})$. Consider the sequence $\{{\KKK^n\III}\}_{n}$, then by compactness there exists a converging subsequence $\{{\KKK^{n_i}\III}\}_{i}\rightarrow \overline{x}$.  Now, since application is uniformly continuous, for each $\frac{1}{k}$ consider the corresponding  ${\KKK^{n_k}\III}$ such that for all $m\geq n_k$, $ d(y\cdot{\KKK^{m}\III},y\cdot{\KKK^{n_k}\III})\leq \frac{1}{k}$, for all $y$. Take $y= \lambda x. x\underbrace{\III\ldots \III}_{n_k}$, then by uniform continuity  of $\cdot$ we have  that   $ d({\KKK^{n_{k+1}-n_k}\III},{\III})\leq \frac{1}{k}$   for all $k$. Hence we can define a sequence $\{{\KKK^{n_{k+1}-n_k}I}\}_{i}\rightarrow {\III}$.  Now consider any normal form $t$, by continuity,  we get that the sequence $\{{\KKK^{n_{k+1}-n_k -1}\III}\}_{i}\rightarrow \p{t}$ for all $t$. Contradiction.
\end{proof}
}

By contrast, in the multi-sorted case, compact $\lambda$-algebras do exist, e.g.~take the restriction of the quantitative $\lambda$-algebra $\C I$ to compact intervals $[a,b]$, i.e.~with $a,b<\infty$, or simply the full type structure on a finite base set. 

%
%



\subparagraph*{Distances and Observational Equivalence.}

The next two results relate distances in quantitative $\lambda$-theories with observational equivalence for the associated $\lambda$-theory, clearly indicating 
the (limited) extent to which a metric can deviate from being discrete on {\em pure} closed $\lambda$-terms. We recall that {\em pure} means that no constants appear in the syntax, or categorically, that the $\Sigma^{\lambda}$-signature has only the $\cdot$ symbols. %

\begin{proposition}\label{prop:obs1}
In a quantitative $\lambda$-algebra $A$, \ie\ a model of the simply typed $\lambda$-calculus, terms which are not equated in the maximal theory are all at the same distance from one another. Moreover each $A_i$ is a bounded pseudo-metric space. 
\end{proposition}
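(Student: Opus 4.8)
The plan is to exploit the interplay between the quantitative $\xi$-rule, $\beta$-equivalence, and the non-expansiveness of application (Axiom NExp) to show that, on pure closed $\lambda$-terms, the metric collapses onto the maximal (observational) theory. Write $\simeq_{\max}$ for the equivalence induced by the maximal consistent $\lambda$-theory. I would first recall the classical fact that in the simply typed $\lambda$-calculus every closed term of each type is $\beta\eta$-convertible to a normal form, and that, because of strong normalization and the Church--Rosser property, the maximal theory equates exactly the terms with the same normal form; so ``not equated in the maximal theory'' means ``distinct normal forms''. Then the key observation is: if $t,s$ are closed terms of type $i\to j$ with $d_{i\to j}(t,s)=\delta<\infty$, then by the characterization of $\Xi$ (and the fact that $d_{i\to j}=\Xi_{d_i,d_j}$, which holds in any quantitative $\lambda$-algebra by Proposition~\ref{lemma:weakclem2}'s argument / Definition~\ref{quantitative-lambda}), for \emph{every} closed $u:i$ one has $d_j(tu,su)\leq \max\{\delta, d_i(u,u)\}=\delta$. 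Iterating this down the type, one reduces any distance between higher-type terms to a distance between terms of base type, but in the \emph{pure} simply typed calculus there are no closed terms of base type at all — so the type must eventually be applied to \emph{open} arguments, i.e.\ variables, which brings us back to the locally-quantified machinery.

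\textbf{Key steps.} The heart of the argument is a ``compression'' lemma: for pure closed terms $t,s$ of the same type, either $t\simeq_{\max}s$ (and then $d(t,s)=0$), or $d(t,s)$ equals a fixed constant $c$ independent of the particular pair. To prove this I would proceed as follows. First, using the combinators — which are definable as pure closed terms in every $\lambda$-algebra — one shows that from any two non-equivalent pure closed terms $t,s$ one can uniformly produce, via application to suitable pure closed terms, \emph{any} prescribed pair of non-equivalent normal forms: concretely, $\mathsf{K}$-like and $\mathsf{S}$-like combinators give projections and pairing, so if $t\not\simeq_{\max}s$ there is a context $C[-]$ (itself a pure closed term applied to the hole) with $C[t],C[s]$ equal to any desired pair. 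By (NExp) applied to $\cdot$ and (Subst), $d(C[t],C[s])\leq d(t,s)$; symmetrically, starting from \emph{any} non-equivalent pair $t',s'$ one can build a context sending it to $(t,s)$, giving $d(t,s)\le d(t',s')$. Hence all non-equivalent pure closed pairs are at exactly the same distance $c$. Second, to get $t\simeq_{\max}s \Rightarrow d(t,s)=0$: the maximal theory is generated by $\beta\eta$ plus equations between non-equivalent \emph{normal forms} that happen to be consistent — but on \emph{pure} terms the maximal consistent extension is just $\beta\eta$ together with the identifications forced by observational equivalence, and each such identification is derivable; since $\beta$ and $\eta$ are zero-distance axioms ($(\beta),(\eta)$ have $\epsilon=0$) and the rules (Refl),(Symm),(Triang),(NExp),(Subst),($\xi$) all preserve the bound $0$, any $\simeq_{\max}$-equality yields $d(t,s)=0$. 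For the ``bounded'' claim: if $c=\infty$ the two-element structure $\{[t],[s]\}$ would give an inconsistency in a more refined sense, but more simply, one notes that in a quantitative $\lambda$-algebra the relevant $\Xi$-distances must be \emph{metrics} (Definition~\ref{quantitative-lambda}), and (Arch) together with the finite character of the axioms forces $c$ to be finite on each $A_i$; alternatively, any pure closed pair is connected by a finite $\beta$-expansion chain through a common ancestor applied to distinct projections, bounding $d$ by $c$, and then $d\le c$ everywhere makes $A_i$ bounded with bound $c$ (for $i$ a pure type; for types with constants one uses that each $A_i$ embeds in a finite product/exponential of base spaces, but the statement as phrased concerns the pure fragment).

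\textbf{Main obstacle.} The delicate point is establishing that \emph{every} pair of non-equivalent pure closed terms can be uniformly mapped to \emph{every} other such pair by a pure closed context, while keeping track that the context is applied (so that (NExp) on $\cdot$ applies) rather than substituted in a binding position (where one would need $(\xi)$ and the local-quantification bookkeeping of condition~\eqref{satstar}). In the simply typed setting this is subtle because the type discipline restricts which contexts are available: one cannot freely apply a term of one type where a term of another is expected. I expect the clean way around this is to work at a fixed ``large enough'' type — e.g.\ to first coerce both terms, by applying them to canonical arguments, down to a common type at which a Böhm-separation-style argument gives the needed surjectivity of contexts onto pairs of distinct Böhm trees (finite, since everything is strongly normalizing), and then to observe that coercion steps are non-expansive. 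Making the Böhm-separation argument go through \emph{inside the algebra} (as opposed to in the term model) — i.e.\ realizing the separating contexts as actual elements via $\Lambda^{\C A}$ and checking the relevant equalities hold in $\C A$ because $\C A\vDash \C U$ — is where the real work lies; the metric bookkeeping afterwards is routine given (NExp), (Subst), (Triang), and (Arch).
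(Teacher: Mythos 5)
Your central mechanism---pushing distances both ways through non-expansive applicative contexts---is the same as the paper's, but two of your supporting claims are wrong, and one of them would sink the argument. First, the maximal theory of the pure simply typed $\lambda$-calculus is \emph{not} $\beta\eta$-equality of normal forms; it is the theory of the full type structure over a two-element base set, which identifies many pairs of distinct normal forms (e.g.\ Church numerals of the same parity, as the paper notes right after this proposition). Under your reading, ``not equated in the maximal theory'' would include such pairs, for which no separating context exists, so your compression lemma---and indeed the proposition---would be false. The characterization you actually need, and the one the paper uses, is Statman-style separability: $t\neq_{\mathcal T}s$ iff some closed $u$ sends them to the two distinct booleans $\KKK,\OOO$ of type $o\to o\to o$. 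Second, the ``main obstacle'' you identify (uniformly mapping any non-equated pair onto any other) dissolves once everything is routed through that one canonical pair: separability gives $d(\p{\KKK},\p{\OOO})=d(\p{ut},\p{us})\le d(\p{t},\p{s})$ by (NExp), and the elementary context $u'=\lambda z\,\lambda\vec z.\,z(t\vec z)(s\vec z)$, which sends $\KKK\mapsto t$ and $\OOO\mapsto s$ up to $\beta\eta$, gives $d(\p{t},\p{s})\le d(\p{\KKK},\p{\OOO})$. No B\"ohm-out ``inside the algebra'' is required beyond the validity of the zero-distance axioms and non-expansiveness of application; you also do not need the identification $d_{i\to j}=\Xi_{d_i,d_j}$, since (NExp) for $\cdot$ already yields $d(\p{ut},\p{us})\le d(\p t,\p s)$.

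Two further points. Your claim that terms equated in the maximal theory are at distance $0$ is neither part of the statement nor true in general: the algebra is only assumed to model the simply typed $\lambda$-calculus, not the maximal theory, and the identifications of the full type structure are not derivable from the zero-distance axioms. And for boundedness you need neither (Arch) nor any finiteness argument: for arbitrary $t,s$ of sort $i$ one has $\p{\KKK^{i\to i\to i}ts}=\p{t}$ and $\p{\OOO^{i\to i\to i}ts}=\p{s}$, so $d(\p t,\p s)\le d(\p{\KKK^{i\to i\to i}},\p{\OOO^{i\to i\to i}})$, and since these two combinators are themselves not equated in the maximal theory, this bound is the same constant as before; each $A_i$ is therefore bounded by $d(\p{\KKK^{o\to o\to o}},\p{\OOO^{o\to o\to o}})$.
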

\longversion{
\begin{proof}
The maximal non-trivial theory of the simply typed $\lambda$-calculus is the theory of the \emph{full type structure} over a two-element basic set, $A_o$, for $o$ a basic sort.  Alternately two closed terms $t$ and $s$, of sort $i :=i_1\rightarrow \ldots\rightarrow i_n\rightarrow o$, are equated in the maximal theory $\cal T$, if and only if there does not exist a term $u$ such that $ut\rightarrow_\beta \KKK^{o\rightarrow o \rightarrow o}$, $us\rightarrow_\beta \OOO^{o\rightarrow o \rightarrow o}$, and  $\KKK^{o\rightarrow o \rightarrow o}:=\lambda x^o.\lambda y^o.x$ and  $\OOO^{o\rightarrow o \rightarrow o}:=\lambda x^o.\lambda y^o.x$, where $\rightarrow_{\beta\eta}$ denotes $\beta\eta$-conversion, see \cite{Baren13}. 
Suppose that $t\neq_{\cal T}s$. Then there exists $u$ such that $ut=_\beta \KKK^{o\rightarrow o \rightarrow o }$ and $us=_\beta \OOO^{o\rightarrow o \rightarrow o}$. Then $d(\p{ut},p{us})=d(\p{\KKK},\p{\OOO})\leq d(\p{t},\p{s})$. To see the converse consider 
$$u':=\lambda z^o\ \lambda z_1^{i_1}\ldots \lambda z_n^{i_n}z(tz_1\ldots z_n)(sz_1\ldots z_n)$$ then  we have $d(\p{u'\KKK},\p{u'\OOO})=d(\p{t},\p{s})\leq d(\p{\KKK},\p{\OOO})$.

\noindent Since we  have that $\p{\KKK^{i\rightarrow i \rightarrow i}ts}=\p{t}$ and $\p{\OOO^{i\rightarrow i \rightarrow i}ts}=\p{s}$, we have also that $d(\p{\KKK^{i\rightarrow i \rightarrow i}ts},\p{\OOO^{i\rightarrow i \rightarrow i}ts})=d(\p{t},\p{s})\leq d(\p{\KKK^{i\rightarrow i \rightarrow i}},\p{\OOO^{i\rightarrow i \rightarrow i}})$. Since $\p{\KKK^{i\rightarrow i \rightarrow i}}$ and $\p{\OOO^{i\rightarrow i \rightarrow i}}$ are not equated in the maximal theory, as can be easily seen by taking 
$$ \lambda z^{i\rightarrow i \rightarrow i}\lambda x^o.\lambda y^o.z(\lambda z_1^{i_1}\ldots\lambda z_n^{i_n}.x )(\lambda z_1^{i_1}ldots\lambda z_n^{i_n}.y ) z_1^{i_1}ldot z_n^{i_n}$$ 
we have that the $A_i$ are all bounded by $d(\p{\KKK^{o\rightarrow o \rightarrow o}}, \p{\OOO^{o\rightarrow o \rightarrow o}})$.
\end{proof} 
}
The maximal non-trivial theory of the \emph{pure} simply typed $\lambda$-calculus is the theory $FTS$ of the \emph{full type structure} over a two-element base set (\cite{Baren13}). 
Proposition \ref{prop:obs1} implies then that  
any quantitative $\lambda$-algebra for $FTS$ is discrete. We recall that ``pure'' means that no constants appear in the syntax.
Next, we consider the untyped $\lambda$-calculus:
\begin{proposition}\label{prop:obs2}
In a non-trivial weak quantitative $\lambda$-algebra $A$, the maximal distance between any two points is 
bounded by $d(\p{\KKK},\p{\KKK(\SSS\KKK\KKK)})$. Hence all pairs of terms which can be applied, by a given term, on $\p{\KKK}$ and $\p{\KKK(\SSS\KKK\KKK)}$ respectively, are that distance apart. Moreover, if  $A$ is a non-trivial quantitative $\lambda$-algebra then for any two  solvable terms,  $t$ and $s$, which are not equated in the maximal theory ${\cal H}^*$ (see \cite{Baren85}) and $\YYY$, fixed-point combinator we have  $d(\p{t},\YYY\KKK)=d(\p{s},\YYY\KKK)$. If the distance is ultrametric we have also $d(\p{t},\YYY\KKK)= d(\p{t},\p{s})$. In any case, if the theory equates all unsolvable terms then $d(\p{t},\YYY\KKK)\leq d(\p{t},\p{s})$. 
\end{proposition}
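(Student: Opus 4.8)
The plan is to mirror the strategy of Proposition~\ref{prop:obs1}. Since every function symbol — in particular application $\cdot$ — is interpreted non-expansively (rule (NExp)), for any closed term $u$ the map $p\mapsto \p{u}\cdot p$ is a non-expansive endomap of the carrier of $A$; so ``plugging a point into a fixed term'' can only shrink distances, and matching lower bounds are obtained by exhibiting a term that performs the reverse operation, or that \emph{separates} the two points. The role played by the two booleans in Proposition~\ref{prop:obs1} is played here by $\KKK$ (``true'', $\KKK x y=x$) and $\KKK(\SSS\KKK\KKK)=\KKK\III$ (``false'', since $\SSS\KKK\KKK=_\beta\III$, so $\KKK\III\,x\,y=y$); note moreover that $\YYY\KKK$ is $\beta$-equal to $\KKK(\YYY\KKK)$, hence it absorbs every argument and is an \emph{unsolvable} term, playing the part of a canonical $\bot$.

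\emph{First two clauses.} By combinatory completeness there is a closed pairing combinator $\mathsf P$ with $\mathsf P\,p\,q\,\KKK=_{\CL}p$ and $\mathsf P\,p\,q\,(\KKK\III)=_{\CL}q$; interpreting it, for every $p,q$ in the carrier of a non-trivial weak quantitative $\lambda$-algebra we get $d(p,q)=d(\mathsf P^{A}pq\cdot\p{\KKK},\,\mathsf P^{A}pq\cdot\p{\KKK\III})\le\max\{d(\mathsf P^{A}pq,\mathsf P^{A}pq),\,d(\p{\KKK},\p{\KKK\III})\}=d(\p{\KKK},\p{\KKK\III})$, which is the first claim, and the bound is attained since non-triviality forces $\p{\KKK}\ne\p{\KKK\III}$. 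For the second claim, if some $u$ satisfies $\p{ut}=\p{\KKK}$ and $\p{us}=\p{\KKK\III}$ in $A$, then $d(\p{\KKK},\p{\KKK\III})=d(\p{ut},\p{us})\le d(\p{t},\p{s})$, which together with the first claim forces $d(\p{t},\p{s})=d(\p{\KKK},\p{\KKK\III})$.

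\emph{Equidistance from $\YYY\KKK$.} I would prove the stronger fact that \emph{every} closed solvable $t$ satisfies $d(\p{t},\p{\YYY\KKK})=r$, where $r:=d(\p{\KKK},\p{\YYY\KKK})$; equidistance of the two given terms from $\YYY\KKK$ is then automatic, and $t\ne_{\C{H}^{*}}s$ is not yet needed. For the upper bound, $v:=\lambda z.\,z\,t\,\III$ satisfies $v\,\KKK=_\beta t$ and $v\,(\YYY\KKK)=_\beta\YYY\KKK$ by absorption, so $d(\p{t},\p{\YYY\KKK})\le d(\p{\KKK},\p{\YYY\KKK})=r$. For the lower bound, writing a head normal form $t=_\beta\lambda x_1\dots x_n.\,x_i M_1\dots M_m$ (the head variable is bound since $t$ is closed), the applicative term $u_t:=\lambda z.\,z\,A_1\dots A_n$ with $A_i:=\lambda w_1\dots w_m.\KKK$ and $A_j:=\III$ for $j\ne i$ gives $u_t\,t=_\beta\KKK$ and $u_t\,(\YYY\KKK)=_\beta\YYY\KKK$, whence $r=d(\p{\KKK},\p{\YYY\KKK})=d(\p{u_t t},\p{u_t(\YYY\KKK)})\le d(\p{t},\p{\YYY\KKK})$. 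Here I use that $A$, being a quantitative $\lambda$-algebra, validates full $\beta$, so $=_\beta$-equal terms have equal interpretation.

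\emph{Last two clauses and the obstacle.} Both refinements hinge on a separation input: since $t\ne_{\C{H}^{*}}s$ are solvable and $\C{H}^{*}$ coincides with observational equivalence with respect to solvability, there is a closed $u$ with, say, $ut$ solvable and $us$ unsolvable; composing with a head Böhm-out of the solvable side as above, we may assume $u't=_\beta\KKK$ while $u's$ is unsolvable, so $d(\p{t},\p{s})\ge d(\p{\KKK},\p{u's})$. If the theory equates all unsolvable terms, then $\p{u's}=\p{\YYY\KKK}$ and $d(\p{t},\p{s})\ge d(\p{\KKK},\p{\YYY\KKK})=r=d(\p{t},\p{\YYY\KKK})$, which is the stated inequality. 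In the ultrametric case the equidistance fact already gives $d(\p{t},\p{s})\le\max\{d(\p{t},\p{\YYY\KKK}),d(\p{\YYY\KKK},\p{s})\}=r$; the reverse inequality is recovered from the separating term, using the ultrametric law to transport the bound $r$ through the unsolvable term $u's$, yielding $d(\p{t},\p{s})=r=d(\p{t},\p{\YYY\KKK})$. The genuinely delicate step is exactly this last one: making the Böhm-out separation rigorous across all sub-cases (different head variables at the first point of difference versus one side collapsing to an unsolvable subtree), and, in the ultrametric case without the sensibility hypothesis, pinning down $d(\p{\KKK},\p{u's})$ — the distance between $\p{\KKK}$ and the interpretation of an unsolvable term — tightly enough to reach $r$.
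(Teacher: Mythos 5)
Your argument coincides with the paper's own proof in all essentials: the first two clauses follow from non-expansiveness of application together with the selector behaviour of $\KKK$ and $\KKK(\SSS\KKK\KKK)$ (the paper applies the two selectors in head position to $t,s$ rather than feeding them to a pairing combinator, but the inequalities are the same); equidistance of solvable terms from $\YYY\KKK$ is obtained, as in the paper, from the absorption property $\YYY\KKK\, M=\YYY\KKK$ in one direction and a solvability/B\"ohm-out context in the other (the paper normalises the solvable side to $\III$ where you normalise to $\KKK$, which is immaterial); and the sensible case is handled identically, by identifying $\p{u's}$ with $\p{\YYY\KKK}$. Two remarks. First, in a \emph{weak} algebra $\SSS\KKK\KKK$ and $\III$ need not denote the same point, so your first bound is literally $d(\p{\KKK},\p{\KKK\III})$ rather than the stated $d(\p{\KKK},\p{\KKK(\SSS\KKK\KKK)})$; this is cosmetic, since your argument uses only the applicative behaviour of the second selector and goes through verbatim with $\KKK(\SSS\KKK\KKK)$. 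Second, for the ultrametric clause you derive $d(\p{t},\p{s})\le r$ from equidistance but, as you candidly flag, do not establish the reverse inequality without the sensibility hypothesis (one would need to control $d(\p{\KKK},\p{u's})$ for an arbitrary unsolvable $u's$); the paper's own proof is silent on this clause as well, so you have not missed an argument that the paper actually supplies.
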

\longversion{

\begin{proof}
We immediately have that for all terms $t$ and $s$, $d(\p{\KKK ts},\p{\KKK(\SSS\KKK\KKK) ts})=d(\p{t},\p{s})\leq d(\p{\KKK},\p{\KKK(\SSS\KKK\KKK)})$ so $A_\star$ is bounded. Moreover if there exists $u$ such that  $ut=\KKK$ and $us=\KKK(\SSS\KKK\KKK)$ then $d(\p{ut},\p{us})=d(\p{\KKK},\p{\KKK(\SSS\KKK\KKK)})\leq d(\p{t},\p{s})$

\noindent If $t\neq_{{\cal H}^*}s$ then there exists a term $u$ such that $ut$ is solvable but $us$ is unsolvable, by definition of the maximal theory ${\cal H}^*$, see \cite{Baren85}. By defintion of solvability we may assume that, say $t$ is solvable and $ut=_{\beta\eta}\III$. Therefore $d(\p{ut},\p{us})= d(\p{\III},\p{us})\leq d(\p{t},\p{s})$.  In the head reduction starting from $ux$, the variable $x$  has to appear, necessarily, as head variable. So we have also that if $t$ is a solvable term then $d(\p{ut},\p{u(\YYY\KKK)})= d(\p{\III},\p{\YYY\KKK})\leq d(\p{t},\p{\YYY\KKK})$. But clearly we have also  that $d(\p{\III t},\p{(\YYY\KKK)t})= d(\p{t},\p{\YYY\KKK})\leq d(\p{\III},\p{\YYY\KKK})$. So all solvable terms are at the same distance from$\p{\YYY\KKK}$. If all unsolvables are equated than from the first inequality we get $d(\p{\III},\p{\YYY\KKK})\leq d(\p{t},\p{s})$.
\end{proof}
}

As a consequence of B\"ohm Theorem (see \cite{Baren85}),  
Proposition \ref{prop:obs2} implies that 
any quantitative $\lambda$-algebra for the \emph{pure} untyped $\lambda$-calculus is discrete over $\beta\eta$-normal forms.

\subparagraph*{Positive Examples.}

The above limiting results apply only to terms which are not equated in the maximal theories of the $\lambda$-calculus, either typed or untyped (\cite{Baren85,Baren13}). Clearly these terms are significant computationally, and this is the bad news, but these terms are rather special and hence Propositions \ref{prop:obs1} and \ref{prop:obs2} have only a limited negative impact, and this is the good news. 
For instance, in the maximal theory of simply typed $\lambda$-calculus Church, numerals are equated up to parity, 
so Proposition \ref{prop:obs1}  does not have any bearing on the mutual distance of two different even, 
or two different odd, Church Numerals. There exist indeed rather intriguing distances in quantitative $\lambda$-algebras, even in the category of complete (not necessarily ultra-) metric spaces and non-expansive functions, as the following examples show.

 \noindent Any complete partial order model of Combinatory Logic, and hence in particular of $\lambda$-calculus (\eg\ any Scott's inverse limit  D$_\infty$ model, \cite{Baren85}), can be endowed with the metric \\
$d(d_1,d_2) 
= \begin{cases}
      0 & \text{if}\  d_1=d_2\\
            1/2& \text{if}\ d_1 \text{ and } d_2 \text{ have an upper bound }\\
      1 & \text{otherwise .}
    \end{cases}$
\\ One can check that application is non-expansive, and that the space is complete; moreover the space of representable functions (\ie\ functions determined by the elements of the model), endowed with the supremum metric, is isometrically embedded in the space. \\
Alternatively, one can consider the  term model of the simply typed $\lambda$-calculus with a base constant $\perp$. By strong normalization, it consists of the $\beta\eta$-normal forms. Let $\sqsubseteq $ be the order relation defined on normal forms of the same type by $\lambda \vec{x}. \perp \sqsubseteq \lambda \vec{x}. t$ and 
  $\lambda \vec{x}. x_i t_1 \ldots t_k \sqsubseteq \lambda \vec{x}. x_i t'_1 \ldots t'_k$, if $t_i \sqsubseteq t'_i$ for all $i=1, \ldots , k$ (corresponding to the natural order relation on B\"ohm trees, see \cite{Baren85}).
The set of $\beta\eta$-normal forms can be endowed with a notion of distance by putting, for all type $\sigma\in T$ and $t,t'$ terms of type $\sigma$, $d_{\sigma} (t,t')$ be $0$ if $t=t'$, $1/2$ if $t$ and $t'$ have an upper bound, and $1$ otherwise.
%

 Yet other distances can be given on the term model of the simply typed $\lambda$-calculus by putting  
$d_{\sigma} (t,t')$ be $0$ if $t= _{\beta\eta}t'$ and otherwise $1/N$, where $N=\max\{\ n\mid \p{t}=\p{t'}\text{ in the full type hierarchy over $n$ points }\}$.

\section{Partial Quantitative $\lambda$-Algebras}\label{section5}\label{section5}

In this section we discuss \emph{partial metrics}, and the natural generalization of quantitative $\lambda$-algebras to \emph{partial quantitative $\lambda$-algebras}. In particular we define two non-trivial such algebras for the simply typed $\lambda$-calculus. The first $\lambda$-algebra that we consider is defined on the term model of $\beta\eta$-normal forms of the simply typed $\lambda$-calculus  with a constant $\perp$ of base type.
The latter is defined within a $D_\infty$ $\lambda$-model {\em \`a la} Scott. In both cases we define an ultrametric  distance using a suitable notion of {\em term approximants}. \medskip

\noindent {\bf The Partial  $\lambda$-Algebra of the Term Model.} 
Let $T$ be the set of simple types built over the base type $o$, and let $\sigma,\tau$ range over  $T$.
The $\beta\eta$-normal forms of the simply typed $\lambda$-calculus with  constant $\perp$ of type $o$  can be endowed with a structure of applicative $\lambda$-algebra:

\begin{proposition}
 Let ${\cal NF} = ( \mathit{NF},  \Omega^{\mathit{NF}}, \Lambda^{\mathit{NF}} )$ be the structure where 
 \\  -- $\mathit{NF}$ is the $T$-indexed set of $\beta\eta$-normal forms of   typed $\lambda$-calculus with constant $\perp$ of type $o$,
  \\ --  for all $\sigma, \tau$, $\cdot_{\sigma, \tau}: \mathit{NF}_{\sigma\to \tau} \times \mathit{NF}_{\sigma} \to \mathit{NF}_{\tau}$ is defined by   
   $t \cdot_{\sigma, \tau} s = [ts]_{\beta\eta}$,
  \\
-- $\Lambda^{\mathit{NF}}_{\sigma,\tau}:  {\bf \Lambda}^0_{\sigma\rightarrow \tau}\to  \mathit{NF}_{\sigma\to \tau}$ is defined by: 
 $\Lambda^{\mathit{NF}}_{\sigma, \tau} (\lambda x.t ) = [\lambda x.t ]_{\beta\eta}$,

\noindent where  $ [t]_{\beta\eta}$ denotes the $\beta\eta$-normal form of $t$.
Then ${\cal NF}$ is an applicative $\lambda$-algebra.
\end{proposition}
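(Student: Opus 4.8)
The plan is to verify that the structure $\mathcal{NF}$ satisfies the two defining conditions of an applicative $\lambda$-algebra from the relevant definition: namely, that $(\mathit{NF},\Omega^{\mathit{NF}})$ is a $\Sigma^{\lambda}$-algebra and that the closing maps $\Lambda^{\mathit{NF}}_{\sigma,\tau}$ are well defined on closed $\lambda$-terms. The first step is to observe that both operations are well defined: by strong normalization of the simply typed $\lambda$-calculus with the constant $\perp$ of base type (which does not interfere with reduction, being inert), every term has a unique $\beta\eta$-normal form, so $[ts]_{\beta\eta}$ and $[\lambda x.t]_{\beta\eta}$ are unambiguous. This already gives that $\Omega^{\mathit{NF}}$ associates to each symbol $\cdot_{\sigma,\tau}$ a genuine function $\mathit{NF}_{\sigma\to\tau}\times\mathit{NF}_{\sigma}\to\mathit{NF}_{\tau}$, and to each $\lambda x$ a map onto $\mathit{NF}$; the typing discipline guarantees the target sorts are correct.

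The main content is showing that the $\Lambda^{\mathit{NF}}_{\sigma,\tau}$ genuinely realize a choice in the sets $\Lambda(f)$ of Definition~\ref{def:weaklambda}, i.e.~that $\mathcal{NF}$ is an \emph{applicative} $\lambda$-algebra in the precise sense required, which amounts to checking that $\rho^{\natural}$ from Proposition~\ref{interpretation} is well behaved and compatible with $\beta\eta$-conversion. The key step here is: for any closed term $\lambda x.t$ and any normal form $s$, one has $\Lambda^{\mathit{NF}}_{\sigma,\tau}(\lambda x.t)\cdot_{\sigma,\tau} s = [\,[\lambda x.t]_{\beta\eta}\,s\,]_{\beta\eta} = [(\lambda x.t)s]_{\beta\eta} = [t[s/x]]_{\beta\eta}$, using confluence to commute normalization with substitution of a normal form. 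More generally, by induction on the term structure one shows that $\rho^{\natural}(t) = [t[\rho]]_{\beta\eta}$ for $\rho$ assigning normal forms to the free variables, with the abstraction clause handled exactly by the identity just displayed (after closing over the free variables $\vec y$ and applying back, which telescopes to a single $\beta$-reduction sequence). This establishes that the interpretation is sound: $\beta\eta$-convertible terms receive equal interpretations, which is precisely what makes $\mathcal{NF}$ an applicative $\lambda$-algebra rather than merely an applicative structure.

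The main obstacle I anticipate is bookkeeping around bound variables and the interaction of the closure operation $\lambda\vec y\lambda_i x.t$ in Proposition~\ref{interpretation} with substitution: one must ensure that closing a term over its free variables, applying $\Lambda^{\mathit{NF}}$, and then re-applying to the environment values really does collapse (via a chain of $\beta$-steps, justified by confluence) to $[t[\rho]]_{\beta\eta}$, and that Barendregt's hygiene convention is respected throughout so that no variable capture occurs. None of this is deep, but it requires care: the constant $\perp$ must be checked to be truly inert (it contributes no redexes and is preserved under substitution and reduction), and the equality $[(\lambda x.t)s]_{\beta\eta} = [\lambda x.t]_{\beta\eta}\cdot_{\sigma,\tau}[s]_{\beta\eta}$ rests on the Church--Rosser property for $\beta\eta$ together with strong normalization. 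Once these ingredients are assembled, verifying the defining clauses of an applicative $\lambda$-algebra is routine, and the proposition follows.
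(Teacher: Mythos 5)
Your argument is correct, and in fact the paper offers no proof of this proposition at all -- it is stated as an unproved (evidently considered routine) observation -- so there is nothing to compare against except the standard argument, which is exactly what you give: strong normalization and Church--Rosser for $\beta\eta$ in the simply typed calculus with an inert base constant $\perp$ yield unique normal forms, hence well-defined operations $t\cdot s=[ts]_{\beta\eta}$ and $\Lambda^{\mathit{NF}}(\lambda x.t)=[\lambda x.t]_{\beta\eta}$, and confluence gives the key identity $[\,[\lambda x.t]_{\beta\eta}\,s\,]_{\beta\eta}=[t[s/x]]_{\beta\eta}$. One remark worth making: the paper's formal definition of an \emph{applicative $\lambda$-algebra} is extremely weak -- it asks only for an applicative ($\Sigma$-)algebra together with well-defined closing maps $\Lambda^{\C A}_{w,j}$ on closed terms, with no compatibility axioms at that stage (the $\beta$/$\xi$ conditions enter only later, via the retraction requirement in the definition of quantitative $\lambda$-algebra). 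So strictly speaking your first paragraph already discharges the proposition as stated; the bulk of your second and third paragraphs (the inductive computation $\rho^{\natural}(t)=[t[\rho]]_{\beta\eta}$ and the telescoping of the closure $\lambda\vec y\lambda x.t$) goes beyond the letter of the definition, but it is precisely the content needed for the subsequent claim that $\C{NF}$ equipped with $d^{\mathit{NF}}$ is a partial quantitative \emph{extensional} $\lambda$-algebra, so including it is reasonable and your treatment of it is sound.
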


The signature of this algebra can be enriched with  {\em projection operators} providing the approximants of a given normal form. Intuitively, the $n$\textsuperscript{th} approximant of a normal form is the term whose B\"ohm tree (\cite{Baren85})is obtained by cutting all branches at depth $n$ and by  labelling  leaves at level $n$ of type $\sigma_1 \rightarrow \ldots \rightarrow \sigma_m \rightarrow o$ by the term  $\lambda x_1 \ldots x_m . \perp$. More precisely:

 \begin{definition}
  For all $\sigma\in T$ and for all $n\in {\mathbb N}$, $\pi^n_{\sigma} :   \mathit{NF}_{\sigma} \to \mathit{NF}_{\sigma}$ is defined by induction on $n$  as follows:
  for all $ \lambda x_1 \ldots x_m. x_i t_1 \ldots t_k \in  \mathit{NF}_{\sigma}$,  
  \\  \hspace*{2.2cm} $\pi_0^{\sigma} (t) = \lambda   x_1 \ldots x_m. \bot \ \ \ \  \pi_{n+1}^{\sigma} (t) =  \lambda x_1 \ldots x_m. x_i (t_1)_n \ldots (t_k)_n \ .$
\end{definition} 

In the sequel, we will  denote the approximant $\pi_{\sigma}^n(t)$ simply by $t_n$.

\begin{definition}[Distance on Normal Forms] \label{dnf}
We define a family of functions $d^\mathit{NF}= \{d^\mathit{NF}_{\sigma}\}_{\sigma}$, where  $d^\mathit{NF}_{\sigma}: \mathit{NF}_{\sigma} \times \mathit{NF}_{\sigma} \rightarrow {\mathbb R}_{\geq 0}$ is defined inductively by 
%
%
\\  \hspace*{3.6cm} $d^{\mathit{NF}}_{o}(t, s) =\begin{cases}  0 & \mbox{ if } t=s
 \\ 1 & \mbox{ otherwise }  \end{cases} $\ \ \  and  \ \ \ $d^\mathit{NF}_{\sigma\rightarrow \tau} (t,t') =
\frac{1}{2^m} \ , $
\\ 
 where $m$ is the largest $n\in {\mathbb N}$, if it exists, such that  
\\ \hspace*{0.7cm} {\bf(1)}  $t_n=t'_n$  \ \ and \ \  {\bf (2)}   $\forall s,s' \in \mathit{NF}_{\sigma}.\ (d^\mathit{NF}_{\sigma} (s,s') \leq \frac{1}{2^n} \ \Longrightarrow \ d^\mathit{NF}_{\tau} (ts,ts') \leq \frac{1}{2^n} )$ ;
\\
if such a maximal  $n$ does not exist, then $d^\mathit{NF}_{\sigma\rightarrow \tau}(t,t')$ is set to $0$.

\end{definition} 

\begin{lemma}\label{dnfpm}
For all $\sigma,\tau$, $( \mathit{NF}_{\sigma}, d_{\sigma}^{\mathit{NF}})$ is a partial ultrametric space and $\cdot_{\sigma,\tau}$ is non-expansive.
\end{lemma}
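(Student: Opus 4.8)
The plan is to prove the two claims of the lemma simultaneously, by induction on the structure of the type $\sigma\to\tau$, since the definition of $d^{\mathit{NF}}_{\sigma\to\tau}$ refers to $d^{\mathit{NF}}_\tau$ and to the behaviour of application into $\tau$. At the base type $o$ the claim is immediate: $d^{\mathit{NF}}_o$ is the discrete metric, which is an (ultra)metric, hence trivially a partial ultrametric; there is nothing to check about application since $o$ is not a function type. For the inductive step I first record the elementary fact that $d^{\mathit{NF}}_{\sigma\to\tau}$ only ever takes values in $\{0\}\cup\{2^{-m}\mid m\in\mathbb N\}$, and that, by the inductive hypothesis, $(\mathit{NF}_\sigma,d^{\mathit{NF}}_\sigma)$ and $(\mathit{NF}_\tau,d^{\mathit{NF}}_\tau)$ are partial ultrametric spaces with non-expansive application maps.

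The first task is non-expansiveness of $\cdot_{\sigma,\tau}$, i.e.\ for $t,t'\in\mathit{NF}_{\sigma\to\tau}$ and $s,s'\in\mathit{NF}_\sigma$, that $d^{\mathit{NF}}_\tau(ts,t's')\le\max\{d^{\mathit{NF}}_{\sigma\to\tau}(t,t'),d^{\mathit{NF}}_\sigma(s,s')\}$. Let $\delta=\max\{d^{\mathit{NF}}_{\sigma\to\tau}(t,t'),d^{\mathit{NF}}_\sigma(s,s')\}$; I may assume $\delta=2^{-n}$ for some $n$ (the case $\delta=0$ is similar, reading off that the relevant maximal index does not exist or is arbitrarily large). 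From $d^{\mathit{NF}}_{\sigma\to\tau}(t,t')\le 2^{-n}$, unfolding Definition~\ref{dnf}, I get some $m\ge n$ with $t_m=t'_m$ and with clause \textbf{(2)} holding at level $m$; in particular clause \textbf{(2)} holds at level $n$, so from $d^{\mathit{NF}}_\sigma(s,s')\le 2^{-n}$ I obtain $d^{\mathit{NF}}_\tau(ts,ts')\le 2^{-n}$. It then remains to bound $d^{\mathit{NF}}_\tau(ts,t's')$, and here I would use the partial-ultrametric triangle law \transs\ for $d^{\mathit{NF}}_\tau$ (available by the inductive hypothesis), via $d^{\mathit{NF}}_\tau(ts,t's')\le\max\{d^{\mathit{NF}}_\tau(ts,ts'),d^{\mathit{NF}}_\tau(ts',t's')\}$, so that it suffices to also show $d^{\mathit{NF}}_\tau(ts',t's')\le 2^{-n}$. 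This last inequality should follow from $t_m=t'_m$ (hence $t_n=t'_n$) by an auxiliary observation, proved by a side induction on $n$ and on term structure, that if $t_n=t'_n$ then for every $s'$ one has $(ts')_n=(t's')_n$ and moreover $t,t'$ agree on the level-$n$ behaviour of application; intuitively, the $n$-th approximant of $ts'$ depends only on the $n$-th approximant of $t$. Combining, $d^{\mathit{NF}}_\tau(ts,t's')\le\max\{2^{-n},2^{-n}\}=2^{-n}=\delta$, as required.

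The second task is that $d^{\mathit{NF}}_{\sigma\to\tau}$ is a partial ultrametric, i.e.\ it satisfies \symm, \transs, and \refll\ (recall from Definition~\ref{def:metricspaces} that these are the only axioms required; self-distance need not vanish). Symmetry \symm\ is immediate from the symmetry of the defining conditions \textbf{(1)} and \textbf{(2)} in $t,t'$, using symmetry of $d^{\mathit{NF}}_\sigma$ and $d^{\mathit{NF}}_\tau$ from the inductive hypothesis. For the small-self-distance axiom \refll, I must check $d^{\mathit{NF}}_{\sigma\to\tau}(t,t')\ge d^{\mathit{NF}}_{\sigma\to\tau}(t,t)$ and $\ge d^{\mathit{NF}}_{\sigma\to\tau}(t',t')$: if $d^{\mathit{NF}}_{\sigma\to\tau}(t,t')=2^{-m}$ then $t_m=t'_m=t_m$ so clause \textbf{(1)} holds for the pair $(t,t)$ at level $m$, and clause \textbf{(2)} for $(t,t)$ at level $m$ follows from non-expansiveness of $\cdot_{\sigma,\tau}$ established in the first task; hence the maximal index for $(t,t)$ is at least $m$, giving $d^{\mathit{NF}}_{\sigma\to\tau}(t,t)\le 2^{-m}$, and symmetrically for $t'$ (the case $d^{\mathit{NF}}_{\sigma\to\tau}(t,t')=0$ forces the indices for $(t,t)$ and $(t',t')$ to be unbounded, so their distances are $0$). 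For the strong triangle inequality \transs, given $t,t',t''$ with $d^{\mathit{NF}}_{\sigma\to\tau}(t,t')\le 2^{-n}$ and $d^{\mathit{NF}}_{\sigma\to\tau}(t',t'')\le 2^{-n}$, I get $t_n=t'_n=t''_n$, settling clause \textbf{(1)} at level $n$ for $(t,t'')$; for clause \textbf{(2)} at level $n$, given $s,s'$ with $d^{\mathit{NF}}_\sigma(s,s')\le 2^{-n}$, I chain $d^{\mathit{NF}}_\tau(ts,t''s')\le\max\{d^{\mathit{NF}}_\tau(ts,t's'),d^{\mathit{NF}}_\tau(t's',t''s')\}\le 2^{-n}$ using \transs\ for $d^{\mathit{NF}}_\tau$ together with clause \textbf{(2)} for $(t,t')$ and for $(t',t'')$ at level $n$ — where the second term again needs the auxiliary fact that $t_n=t''_n$ makes $t'$ and $t''$ behave identically at level $n$ on application, so $d^{\mathit{NF}}_\tau(t's',t''s')\le 2^{-n}$. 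Hence the maximal index for $(t,t'')$ is at least $n$, so $d^{\mathit{NF}}_{\sigma\to\tau}(t,t'')\le 2^{-n}$, which is exactly \transs.

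\textbf{Main obstacle.} The delicate point, and the place where the argument could go wrong if stated carelessly, is the auxiliary "level-$n$ locality of application" observation: that $t_n=t'_n$ controls not just $(t s')_n$ for a fixed argument but the whole level-$n$ application behaviour of $t$ versus $t'$, so that clause \textbf{(2)} transfers between $t$-equivalent and $t'$-equivalent heads. This is what glues together non-expansiveness and the triangle law, and it has to be set up by a careful simultaneous induction on the type and on $n$, paying attention to the $\perp$-truncation in the definition of $\pi^n$ and to the fact that the two clauses of Definition~\ref{dnf} are mutually entangled across the type hierarchy. Everything else — symmetry, \refll, and the bookkeeping with powers of $2$ — is routine once that lemma is in place.
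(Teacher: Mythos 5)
Your overall architecture (a simultaneous induction on types establishing \symm, \refll, \transs\ and non-expansiveness together) is the same as the paper's, but the glue you rely on --- the auxiliary ``level-$n$ locality of application'' claim that $t_n=t'_n$ forces $(ts')_n=(t's')_n$ and, more generally, $d^{\mathit{NF}}_{\tau}(ts',t's')\le 2^{-n}$ --- is false, and the paper's Remark~\ref{remn} is there precisely to make this point. Take $t=\lambda x_1x_2.x_1(x_2t')$ and $s=\lambda x_1x_2.x_1(x_2s')$ with $t'\neq s'$ of base type: then $t_2=\lambda x_1x_2.x_1(x_2\perp)=s_2$, yet $tI=\lambda x_2.x_2t'$ and $sI=\lambda x_2.x_2s'$ already differ at their second approximants, so agreement of approximants of the heads does not control the approximants (let alone the distance) of the applications. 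This is exactly why clause \textbf{(2)} is built into Definition~\ref{dnf}. The correct glue, which the paper uses, is cheaper and needs no approximants: whenever $d^{\mathit{NF}}_{\sigma\to\tau}(t',t'')\le 2^{-n}$ you already own clause \textbf{(2)} for the pair $(t',t'')$ at level $n$, and you instantiate it at the \emph{diagonal} arguments $(s',s')$, using $d^{\mathit{NF}}_{\sigma}(s',s')\le d^{\mathit{NF}}_{\sigma}(s,s')\le 2^{-n}$, which is \refll\ at $\sigma$ from the induction hypothesis; this yields $d^{\mathit{NF}}_{\tau}(t's',t''s')\le 2^{-n}$ directly. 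The same observation shows your non-expansiveness argument is an unnecessary detour: $d^{\mathit{NF}}_{\tau}(ts,t's')\le\max\{d^{\mathit{NF}}_{\sigma\to\tau}(t,t'),d^{\mathit{NF}}_{\sigma}(s,s')\}$ is literally one application of clause \textbf{(2)} for $(t,t')$ at the arguments $(s,s')$, with no splitting through $ts'$ and no triangle law.

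There is a second, independent gap in your proof of \refll: you justify clause \textbf{(2)} for the pair $(t,t)$ by appealing to non-expansiveness, i.e.\ $d^{\mathit{NF}}_{\tau}(ts,ts')\le\max\{d^{\mathit{NF}}_{\sigma\to\tau}(t,t),d^{\mathit{NF}}_{\sigma}(s,s')\}$ --- but the first term of that maximum is exactly the self-distance you are in the middle of bounding, so the argument is circular (and the self-distance is genuinely nonzero in general: the remark following the lemma exhibits $d_{\sigma_1\to\sigma_2}(u,u)=1$ for $u=\lambda x.xI$). The paper instead derives clause \textbf{(2)} for $(t,t)$ at level $n$ from clause \textbf{(2)} for $(t,t')$ applied twice, at $(s,s')$ and at $(s',s')$, chaining the two resulting bounds by \transs\ at the codomain type (induction hypothesis); this proves $d^{\mathit{NF}}_{\sigma}(t,t)\le d^{\mathit{NF}}_{\sigma}(t,t')$, which is all that \refll\ asserts. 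With these two repairs your induction goes through and coincides with the paper's proof.
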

\longversion{
\begin{proof} The fact that  $\cdot_{\sigma,\tau}$ is non-expansive follows immediately from the definition of $d_{\sigma}^{\mathit{NF}}$.
\\ Conditions for $d^{\mathit{NF}}$ to be an ultrametric are shown by induction on types. Symmetry is immediate. Now we  show, by induction on types that, for all $\sigma$, 
\begin{equation} \forall t,t',t'' ( \dnf{\sigma}(t,t) \leq \dnf{\sigma} (t,t') \ \mbox{and}\ \dnf{\sigma}(t,t') \leq \mbox{max} \{ \dnf{\sigma}(t,t''), \dnf{\sigma}(t'', t') \} ) \ . \label{aaa} \end{equation}
\noindent For the base type $o$ the thesis is immediate, since the only normal forms of type $o$ are $\bot$ and the variables. Then let $\sigma$ be  $\sigma_1\to \sigma_2$. We show that the first conjunct of~\ref{aaa} holds. Let $\dnf{\sigma_1\to\sigma_2}(t,t') \leq\frac{1}{2^n} $.  We prove that
 $\dnf{\sigma_1\to\sigma_2}(t,t) \leq\frac{1}{2^n} $. Condition 1 in Definition~\ref{dnf} is trivially verified. We prove condition 2.
  Let  $\dnf{\sigma_1}(s,s') \leq\frac{1}{2^n} $.  
  By the hypothesis    $\dnf{\sigma_1\to\sigma_2}(t,t') \leq\frac{1}{2^n} $,  we have $\dnf{\sigma_2} (ts, t's')\leq \frac{1}{2^n} $. By induction hypothesis, $\dnf{\sigma_1} (s', s')\leq \frac{1}{2^n} $, hence  $\dnf{\sigma}(t's', ts') \leq \frac{1}{2^n}$. By induction hypothesis, $\dnf{\sigma_2} (ts, ts')\leq \mbox{max}\{ \dnf{\sigma_2}(ts,t's'), \dnf{\sigma}(t's', ts') \}$, hence   $\dnf{\sigma_2} (ts, ts')\leq \frac{1}{2^n} $.
We are left to show the second conjunct of~\ref{aaa}, \ie\  that $\dnf{\sigma_1 \to \sigma_2 }(t,t') \leq \mbox{max} \{ \dnf{\sigma_1\to \sigma_2}(t,t''), \dnf{\sigma_1\to\sigma_2}(t'', t') \} )$.
Assume $\dnf{\sigma_1 \to \sigma_2} (t,t'') \leq \frac{1}{2^k}$ and $ \dnf{\sigma_1 \to\sigma_2} (t'',t')\leq \frac{1}{2^m}$, with $m\geq k$. We show that $\dnf{\sigma_1 \to \sigma_2} (t,t') \leq \frac{1}{2^k}$. Condition  1 of Definition~\ref{dnf}  is straightforward. We show condition 2.  Let   $\dnf{\sigma_1 } (s,s') \leq \frac{1}{2^k}$. 
Since by hypothesis we have $ \dnf{\sigma_1 \to\sigma_2} (t,t'')\leq \frac{1}{2^k}$ then $\dnf{ \sigma_2} (ts,t''s') \leq \frac{1}{2^k}$. But by hypothesis we have also $ \dnf{\sigma_1 \to\sigma_2} (t'',t')\leq \frac{1}{2^m}\leq \frac{1}{2^k}$, then, using 
  reflexivity, by induction hypothesis, \ie\ $\dnf{\sigma_1 } (s',s') \leq \frac{1}{2^k}$, we have $\dnf{ \sigma_2} (t''s', t's') \leq \frac{1}{2^k}$. Again by induction hypothesis, we have $\dnf{\sigma_2}(ts,t's') \leq \mbox{max} \{ \dnf{\sigma_2}(ts,t''s'), \dnf{\sigma_2}(t''s', t's') \} )$, and hence $\dnf{\sigma_2}(ts,t's') \leq \frac{1}{2^k}$.
 \end{proof}}
 
 \shortversion{The fact that  $\cdot_{\sigma,\tau}$ is non-expansive follows immediately from the definition of $d_{\sigma}^{\mathit{NF}}$.}
 
\begin{remark} Notice that $d^{\mathit{NF}}$ is not reflexive.
{\em E.g.},  let $u= \lambda x.xI$ of appropriate type $\sigma_1 \to \sigma_2$, $t= \lambda x.x (xt')$ and  $s= \lambda x.x (xs')$ of type $\sigma_1$, and $t',s'$ of appropriate type $\tau$ such that  $\dnf{\tau} (t',s')=1$.
 Then  $\dnf{\sigma_1} (t,s)=\frac{1}{2}$, but $d_{\sigma_2} (ut,us)=1$, \ie\ $d_{\sigma_{1}\to\sigma_{2}}(u,u)=1$.
\end{remark}

From the definition of  $ d^{\mathit{NF}}$, it immediately follows that application on normal forms is a non-expansive operator. Hence we have:

\begin{proposition}
 ${\cal NF} = ( \mathit{NF},  \Omega^{\mathit{NF}}, \Lambda^{\mathit{NF}} ,  d^{\mathit{NF}})$ is a partial quantitative extensional $\lambda$-algebra.
\end{proposition}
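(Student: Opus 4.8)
The plan is to verify the two clauses of Definition~\ref{quantitative-lambda} (suitably modified for partial ultrametrics, per Remark~\ref{rem:partial3}) for the structure $\C{NF}=(\mathit{NF},\Omega^{\mathit{NF}},\Lambda^{\mathit{NF}},d^{\mathit{NF}})$. Since we already know from the preceding proposition that $\C{NF}$ is an applicative $\lambda$-algebra, and from Lemma~\ref{dnfpm} that each $(\mathit{NF}_{\sigma},d^{\mathit{NF}}_{\sigma})$ is a partial ultrametric space with non-expansive application $\cdot_{\sigma,\tau}$, the remaining task is to check that the closing maps $\Lambda^{\mathit{NF}}$ (equivalently the induced $\LL^{\mathit{NF}}$ and $\Ev^{\mathit{NF}}$) form a natural \emph{isomorphism} (because we want \emph{extensional}) between $\Met^{\C{NF}}(\mathit{NF}_{w*\sigma};\mathit{NF}_{\tau})$ and $\Met^{\C{NF}}(\mathit{NF}_w;\mathit{NF}_{\sigma\to\tau})$, and that these maps are non-expansive with respect to the relevant $\Xi$-pre-metrics.

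First I would observe that because the term model consists of $\beta\eta$-normal forms, with application defined as $t\cdot_{\sigma,\tau}s=[ts]_{\beta\eta}$ and $\Lambda^{\mathit{NF}}_{\sigma,\tau}(\lambda x.t)=[\lambda x.t]_{\beta\eta}$, the triangle $(\beta)$ and $(\eta)$ laws hold on the nose: $\Ev\circ\LL$ and $\LL\circ\Ev$ are both the identity by $\beta$- and $\eta$-normalization. So the pair $(\LL^{\mathit{NF}},\Ev^{\mathit{NF}})$ is already a bijection, natural in $w$ by construction of the representable-function multicategory $\Met^{\C{NF}}$. That reduces matters to non-expansiveness of $\LL^{\mathit{NF}}$ and $\Ev^{\mathit{NF}}$ with respect to $\Xi_{\max\{\vec a\},b}$ (equivalently, using that we are in the ultrametric world, $\Phi$); non-expansiveness of $\Ev^{\mathit{NF}}$ is just the non-expansiveness of application from Lemma~\ref{dnfpm}, so the real content is non-expansiveness of $\LL^{\mathit{NF}}$, i.e.\ that $d^{\mathit{NF}}_{\sigma\to\tau}(\lambda x.t,\lambda x.s)\le \Xi(t,s)$ where the right-hand side is the sup over inputs of $d^{\mathit{NF}}_\tau$ of the values — and here the crucial point is that Definition~\ref{dnf} has been \emph{engineered} precisely so that this holds: clause \textbf{(2)} in the definition of $d^{\mathit{NF}}_{\sigma\to\tau}$ literally encodes the $\xi$-style bound, so one only needs to trace through the induction matching the exponent $m$ against the defining infimum.

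The key steps in order: (1) recall that $\Met^{\C{NF}}(\mathit{NF}_w;\mathit{NF}_\tau)$ is generated by open normal forms, so every representable $f$ is $\vec a\mapsto [t_f\,\vec a]_{\beta\eta}$ for some $t_f$, and $\LL^{\mathit{NF}}$ sends this to $\vec a\mapsto[\lambda x.t_f\,\vec a]_{\beta\eta}$; (2) verify naturality in $w$ — routine, since substitution commutes with normalization; (3) verify $\Ev\circ\LL=\mathrm{id}$ via $\beta$ and $\LL\circ\Ev=\mathrm{id}$ via $\eta$; (4) verify $\Ev^{\mathit{NF}}$ is non-expansive (immediate from Lemma~\ref{dnfpm}); (5) verify $\LL^{\mathit{NF}}$ is non-expansive, unwinding Definition~\ref{dnf}: if $d^{\mathit{NF}}_\tau(t_f\vec a\,s,\,t_g\vec a\,s')\le\delta$ whenever $d^{\mathit{NF}}_\sigma(s,s')\le\delta$ uniformly in $\vec a$, then the largest such $m$ for the pair $(\lambda x.t_f\vec a,\lambda x.t_g\vec a)$ is at least the one witnessing $\Xi$; (6) conclude that the pre-metrics $\Xi_{\max\{\vec a\},b}$ are genuine partial ultrametrics and that $\C{NF}$ satisfies $\C U_{\lambda\eta}$, invoking the proposition characterizing quantitative extensional $\lambda$-algebras as exactly the models of $\C U_{\lambda\eta}$.

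I expect the main obstacle to be step (5) combined with the subtlety, flagged in the remark after Lemma~\ref{dnfpm}, that $d^{\mathit{NF}}$ is \emph{not} reflexive: one must be careful that the ``diagonal'' self-distances $d^{\mathit{NF}}_{\sigma\to\tau}(u,u)$ may be nonzero, so that the comparison with $\Xi$ has to be done in the partial-ultrametric sense (using the \refll\ law $a(x,y)\ge a(x,x),a(y,y)$ rather than reflexivity), and the induction in Definition~\ref{dnf} must be shown to be well-founded and to actually produce the claimed bound rather than collapsing to $0$ spuriously. The bookkeeping of which variables are abstracted (the $w$ in $\LL_{w*i,j}$) versus which are inputs, and ensuring Barendregt's hygiene condition is respected when forming $\lambda x.t_f\vec a$, is where the argument is most likely to need care; but it is ultimately a matter of chasing definitions, since $d^{\mathit{NF}}$ was built with exactly this verification in mind.
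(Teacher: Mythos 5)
Your overall strategy---use Lemma~\ref{dnfpm} for the partial-ultrametric axioms and for non-expansiveness of application (hence of $\Ev^{\mathit{NF}}$), note that $(\beta)$ and $(\eta)$ hold on the nose in the normal-form model so that $(\LL^{\mathit{NF}},\Ev^{\mathit{NF}})$ is a natural bijection, and then reduce everything to non-expansiveness of $\LL^{\mathit{NF}}$ for the $\Xi$-pre-metrics---is the right decomposition, and it is in fact more explicit than what the paper records (the paper derives the proposition in one line from Lemma~\ref{dnfpm} and the earlier proposition that ${\cal NF}$ is an applicative $\lambda$-algebra). Steps (1)--(4) of your plan are fine, as is your warning about non-reflexivity.

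The gap is in step (5), and it stems from a misreading of Definition~\ref{dnf}. Clause \textbf{(2)} there is $\forall s,s'\,(\dnf{\sigma}(s,s')\leq 2^{-n}\Rightarrow \dnf{\tau}(ts,ts')\leq 2^{-n})$: the \emph{same} term $t$ appears on both sides. It is a uniform-modulus condition on a single term (precisely the clause responsible for the failure of reflexivity, i.e.\ for $\dnf{\sigma\to\tau}(u,u)>0$); it does not compare $t$ with $t'$ and so does not ``literally encode'' the $\xi$-bound. The only clause relating the two abstractions is clause \textbf{(1)}, the \emph{syntactic} agreement $\pi^n_{\sigma\to\tau}(\lambda x.t)=\pi^n_{\sigma\to\tau}(\lambda x.s)$ of depth-$n$ approximants. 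Hence, to prove $\LL^{\mathit{NF}}$ non-expansive you must derive, from the extensional hypothesis $\dnf{\tau}(t[v/x],s[w/x])\leq\max\{2^{-n},\dnf{\sigma}(v,w)\}$ for all $v,w$, that the B\"ohm trees of $\lambda x.t$ and $\lambda x.s$ coincide up to depth $n$. That is a B\"ohm-out/separation argument (if the approximants first differ at depth $k\leq n$, one must exhibit arguments witnessing a distance exceeding $2^{-n}$), and it is the real mathematical content of the proposition; definition-chasing will not produce it, and your plan does not supply it. (By contrast, deriving clause (2) for the pair from the $\Xi$-hypothesis via \transs{} and \refll{}, as you sketch, is unproblematic.)
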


\begin{remark} \label{remn}
If we drop condition 2 in Definition~\ref{dnf}, then we get an ultrametric  (reflexivity holds), however application is expansive. Namely, let $t= \lambda x_1 x_2. x_1 (x_2 t')$ and
$s=  \lambda x_1 x_2. x_1 (x_2 s')$ of appropriate types $\sigma_1$ such that  $\dnf{\sigma'}(t',s') =1$. Then $\dnf{\sigma_1}(t,s) = 1/4$, but for $u = \lambda x.x II$ of type $\sigma_1\to\sigma_2$, we get $\dnf{\sigma_2}(ut, us) =1$. Notice that the above terms can be taken to be affine (similar counterexamples can be built also in the purely linear case).
\end{remark}
\medskip

\noindent {\bf The Partial  $\lambda$-Algebra of $D_{\infty}$.}
Any inverse limit domain model {\em \`a la} Scott of $\lambda$-calculus yields an applicative $\lambda$-algebra. On such models a notion of  approximant naturally arises, by considering for any given element of the domain its projections on the domains of the inverse limit construction. This leads to the following definition:

\begin{definition} \label{dinf}
\hfill
\\ (i) Let $D_{\infty}= \bigsqcup_n D_n$ be an inverse limit domain model {\em \`a la} Scott. For all $a \in D_{\infty} $ we define the $n$\textsuperscript{th} approximant of
$a$, $a_n$, as the projection of $a$  into $D_n$. 
\\ (ii) Let $D$ be the $T$-indexed set $\{ D_{\sigma}\}_{\sigma}$, where, for all $\sigma$, $D_{\sigma} = {D}_{\infty}$.
\\ (iii) Let $\dd{\sigma} : D_{\sigma} \times D_{\sigma} \rightarrow {\mathbb R}_{\geq 0}$ be the distance function defined by induction on types by
\\  \hspace*{2.7cm}  $\dd{o}(a, b) =\begin{cases}  0 & \mbox{ if } a=b
 \\ 1 & \mbox{ otherwise }  \end{cases} $
\ \ \  \  and \ \ \ \ $\dd{\sigma\rightarrow \tau} (a,b) =
\frac{1}{2^m} $ ,
\\
 where $m$ is the maximal $n\in {\mathbb N}$, if it exists, such that 
\\   \hspace*{0.6cm} {\bf (1)} 
$a_n=b_n$  \ \ \ \ and\ \ \ \   {\bf (2)} $\forall c,d \in D_{\sigma}.\ (\dd{\sigma} (c,d) \leq \frac{1}{2^n} \ \Longrightarrow \ \dd{\tau} (ac,bd) \leq \frac{1}{2^n} )$ ;
\\
if such a maximal $n$ does not exists, then $\dd{\sigma \rightarrow \tau}(a,b)=0$.
\end{definition}

\longversion{The proof of the following lemma is similar to that of Lemma~\ref{dnfpm}. }

\begin{lemma}
For all $\sigma, \tau$, $(D_{\sigma} , \dd{\sigma})$ is a partial ultrametric space and $\cdot_{\sigma,\tau}$ is non-expansive.
\end{lemma}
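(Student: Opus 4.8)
The plan is to argue by induction on the type $\sigma$, closely following the proof of Lemma~\ref{dnfpm}. For the base type $o$, $(D_o,\dd{o})$ is simply the discrete $0/1$ metric on $D_\infty$, which is an ultrametric and hence a partial ultrametric, so there is nothing to do. In the inductive step for $\sigma\to\tau$ I assume that $(D_\sigma,\dd{\sigma})$ and $(D_\tau,\dd{\tau})$ are partial ultrametric spaces.

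The key point of the step is what one may call \emph{downward stability} of clause~(2) of Definition~\ref{dinf}: if clause~(2) holds for the pair $(a,b)$ at the level $m$ realizing $\dd{\sigma\to\tau}(a,b)$, then it holds at every level $n\leq m$. For an arbitrary pre-metric this fails, so here one has to use that $a,b$ are \emph{representable} by elements of $D_\infty$: application is Scott-continuous, and the isomorphism $D_\infty\cong[D_\infty\to D_\infty]$ is realized through the projections $\Pi_k$, with $\Pi_k\circ\Pi_l=\Pi_{\min(k,l)}$ and $\bigsqcup_k\Pi_k=\mathrm{id}$; combining this with clause~(1) at level $m$ (which already yields $a_k=b_k$ for all $k\leq m$ by compatibility of the projections) one transfers the level-$m$ contractivity witnessed by clause~(2) down to the coarser level $n$ where it is needed. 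Granting downward stability, non-expansiveness of $\cdot_{\sigma,\tau}\colon D_{\sigma\to\tau}\times D_\sigma\to D_\tau$ is immediate: if $\max\{\dd{\sigma\to\tau}(a,b),\dd{\sigma}(c,d)\}=\frac{1}{2^n}$, then clause~(2) holds for $(a,b)$ at level $n$ (the value-$0$ case being recovered by letting $n\to\infty$), and applying it to $c,d$ gives $\dd{\tau}(a\cdot c,b\cdot d)\leq\frac{1}{2^n}$.

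With non-expansiveness of $\cdot_{\sigma,\tau}$ available, the three partial-ultrametric axioms for $\dd{\sigma\to\tau}$ are checked exactly as in Lemma~\ref{dnfpm}. Symmetry holds because, using the inductive symmetry of $\dd{\sigma}$ and $\dd{\tau}$, clauses~(1)--(2) for $(a,b)$ at a given level are literally the same statement as for $(b,a)$ after renaming the quantified variables. The reflexivity law \refll, i.e.\ $\dd{\sigma\to\tau}(a,b)\geq\dd{\sigma\to\tau}(a,a)$, follows by showing that every level valid for $(a,b)$ is valid for $(a,a)$: clause~(1) is trivial, and clause~(2) for $(a,a)$ at level $n$ is obtained from $\dd{\tau}(ac,ad)\leq\max\{\dd{\tau}(ac,bd),\dd{\tau}(bd,ad)\}$ using non-expansiveness of $\cdot_{\sigma,\tau}$, \refll\ at $\sigma$ (which gives $\dd{\sigma}(d,d)\leq\dd{\sigma}(c,d)$) and \transs\ at $\tau$. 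Finally, \transs\ for $\dd{\sigma\to\tau}$, i.e.\ $\dd{\sigma\to\tau}(a,b)\leq\max\{\dd{\sigma\to\tau}(a,e),\dd{\sigma\to\tau}(e,b)\}$, is reduced level-by-level to clause~(1) (propagated downward by compatibility of projections) and clause~(2), the latter via $\dd{\tau}(ac,bd)\leq\max\{\dd{\tau}(ac,ed),\dd{\tau}(ed,bd)\}$ together with non-expansiveness of $\cdot_{\sigma,\tau}$, again using \refll\ at $\sigma$ and \transs\ at $\tau$. The degenerate case in which no maximal $n$ exists, where the distance is $0$, is disposed of by a routine limiting argument, noting that $\dd{\sigma\to\tau}(a,b)=0$ already forces $a=b$ via clause~(1).

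The step I expect to be the main obstacle is precisely the downward stability of clause~(2): it is the only place where one cannot reason purely formally about pre-metrics, and one must invoke the concrete inverse-limit structure of $D_\infty$ — continuity of application and the realization of $D_\infty\cong[D_\infty\to D_\infty]$ through the projections. This is where the argument, although structurally parallel to the normal-form case of Lemma~\ref{dnfpm}, genuinely differs from it; everything else is a transcription of that proof with $\beta\eta$-normal forms replaced by elements of $D_\infty$ and B\"ohm-tree truncations by Scott projections.
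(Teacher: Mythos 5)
Your overall plan --- induction on types, discrete base case, symmetry by renaming the quantified variables, \refll\ and \transs\ via the chains $\dd{\tau}(ac,ad)\leq\max\{\dd{\tau}(ac,bd),\dd{\tau}(bd,ad)\}$ and its analogue for a middle point $e$, with non-expansiveness of application read off from clause~(2) --- is exactly the paper's route: the paper proves this lemma by declaring the argument ``similar to that of Lemma~\ref{dnfpm}'' and transcribing that induction with B\"ohm-tree truncations replaced by the projections into $D_n$. The problem is the step you yourself single out as the crux. Your ``downward stability'' of clause~(2) is asserted, not proved: you claim that Scott continuity of application and the identities $\Pi_k\circ\Pi_l=\Pi_{\min(k,l)}$, $\bigsqcup_k\Pi_k=\mathrm{id}$ let one ``transfer'' contractivity from threshold $2^{-m}$ down to the larger threshold $2^{-n}$, but no mechanism is given and none is visible: a pair $(c,d)$ with $2^{-m}<\dd{\sigma}(c,d)\leq 2^{-n}$ is simply not covered by the level-$m$ instance of clause~(2), and the distances $\dd{\sigma}$ are defined by the recursion itself, not by the domain order, so continuity of $a\cdot(-)$ and the algebra of the projections say nothing about $\dd{\tau}(ac,bd)$ for such pairs. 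Clause~(1) is indeed downward stable --- that is precisely the property demanded of approximants in Definition~\ref{aawa} --- but that does not propagate to clause~(2).

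What actually closes this point --- in the paper's proof of Lemma~\ref{dnfpm} just as much as here, since the two definitions are formally identical --- is not any special feature of $D_\infty$ but the intended reading of Definition~\ref{dinf}: throughout, $\dd{\sigma\to\tau}(a,b)\leq 2^{-n}$ is treated as synonymous with ``clauses (1)--(2) hold at level $n$'', i.e.\ the set of valid levels is taken to be an initial segment of $\mathbb{N}$. Under that reading your extra lemma is unnecessary, and the rest of your write-up (a faithful transcription of the Lemma~\ref{dnfpm} argument, using only the induction hypotheses at lower types, symmetry, and downward compatibility of the projections for clause~(1)) goes through verbatim. So you have correctly spotted a genuine subtlety in the ``largest $n$'' phrasing, but you mislocate it as a difference between the term model and $D_\infty$ (it is common to both), and the argument you offer for it does not work; as written, your proof has a hole exactly at the point you present as its main content.
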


\begin{proposition}
Let ${\cal D} = ( D,  \Omega^{{\cal D}}, \Lambda^{{\cal D}} ,  \dd{})$ be the structure where the  functions $\Lambda^{{\cal D}}_{\sigma, \tau}$ are  the interpretations of closed typed $\lambda$-terms on $D_{\infty}$.  Then  ${\cal D} $ is a partial quantitative extensional  $\lambda$-algebra.
\end{proposition}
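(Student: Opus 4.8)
The plan is to verify in turn that $(D,\Omega^{\cal D},\Lambda^{\cal D})$ is an applicative $\lambda$-algebra, that equipping it with the distances $\dd{\sigma}$ turns it into a partial quantitative applicative $\lambda$-algebra, and that it further satisfies the condition of Definition~\ref{quantitative-lambda}---read for partial ultrametrics as in Remark~\ref{rem:partial3}---that the curry and evaluation maps $\LL^{\cal D}_{w*\sigma,\tau}$ and $\Ev^{\cal D}_{w,\sigma,\tau}$ form a family of \emph{isomorphisms}, natural in $w$ and non-expansive for the distances $\Xi$ of Section~\ref{section1plus}.

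For the first two I would invoke what is already available. An inverse-limit domain $D_{\infty}$ is a reflexive object of the category of cpos and continuous maps, and the standard reflection $D_{\infty}\cong[D_{\infty}\to D_{\infty}]$ is an \emph{isomorphism}; hence closed typed $\lambda$-terms have well-defined interpretations, closed under application and the combinators, so $(D,\Omega^{\cal D},\Lambda^{\cal D})$ is an applicative $\lambda$-algebra, and the model is \emph{extensional}, so that $\LL^{\cal D}$ and $\Ev^{\cal D}$ are mutually inverse and natural in $w$. That each $(D_{\sigma},\dd{\sigma})$ is a partial ultrametric space and each $\cdot_{\sigma,\tau}$ is non-expansive is precisely the Lemma immediately preceding, whose proof I would run by induction on types as for Lemma~\ref{dnfpm}. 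So $\cal D$ is a partial quantitative applicative $\lambda$-algebra, and only the non-expansiveness of $\LL^{\cal D}$ and $\Ev^{\cal D}$ remains.

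The key point would be that, on the hom-spaces $\Met^{\cal D}(D_{\sigma};D_{\tau})$ of non-expansive representable functions, the recursive clauses of Definition~\ref{dinf} make the function-space distance $\dd{\sigma\to\tau}$ agree with the distance induced by the enriched structure of $\Met^{\cal D}$, namely $\Xi_{d^{\cal D}_{\sigma},d^{\cal D}_{\tau}}$, which for an ultrametric codomain collapses to the supremum distance $\Phi(f,g)=\sup_{c}d^{\cal D}_{\tau}(f(c),g(c))$. Clause~$(2)$ of Definition~\ref{dinf} is the partial-ultrametric reading of this $\Xi$-condition and thus encodes the non-expansiveness of application recorded in the Lemma, which yields at once the non-expansiveness of $\Ev^{\cal D}$. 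For $\LL^{\cal D}$ I would show that whenever $a=\Lambda^{\cal D}(t_{f})$ and $b=\Lambda^{\cal D}(t_{g})$ represent non-expansive functions $f,g$ with $\Phi(f,g)\le 2^{-n}$, both clauses of Definition~\ref{dinf} hold at level $n$, so that $\dd{\sigma\to\tau}(a,b)\le 2^{-n}$: clause~$(2)$ by the ultrametric triangle inequality together with non-expansiveness of application; clause~$(1)$, the agreement $a_{n}=b_{n}$ of the Scott projections, by the fact that $\pi_{n}\colon D_{\infty}\to D_{\infty}$ is the interpretation of a fixed closed $\lambda$-term---hence definable and non-expansive---and commutes with application up to a shift of index, so that closeness of $a$ and $b$ is inherited by $a_{n}$ and $b_{n}$ and, since distances at the base type $o$ take only the values $0$ and $1$, is forced to become equality. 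With both clauses in hand $\dd{\sigma\to\tau}=\Phi$ on representable non-expansive functions, so $\LL^{\cal D}$ is an isometry, in particular non-expansive, and---using extensionality of $D_{\infty}$ for the isomorphism part---$\cal D$ is a partial quantitative extensional $\lambda$-algebra. The general-arity maps $\LL^{\cal D}_{w*\sigma,\tau},\Ev^{\cal D}_{w,\sigma,\tau}$ are handled the same way after currying out the components of $w$.

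I expect the main obstacle to be clause~$(1)$: spelling out how $\lambda$-definability of the inverse-limit projections, and their commutation with both application and interpretation, interacts with the recursion defining $\dd{}$; and, running throughout, the bookkeeping forced by the $\dd{\sigma}$ being only \emph{partial} metrics---self-distances need not vanish, as noted in the Remark following Lemma~\ref{dnfpm}---so that every appeal to the triangle inequality and to non-expansiveness must be confined to the subsets of genuinely non-expansive representable elements, on which self-distances are $0$ and the $\Xi$-, $\Phi$- and $\dd{}$-distances coincide.
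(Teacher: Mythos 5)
The paper offers no proof of this proposition: it is stated immediately after the lemma asserting that each $(D_{\sigma},\dd{\sigma})$ is a partial ultrametric space with non-expansive application, and nothing further is said even in the long version. So your attempt cannot be matched against an official argument; what can be judged is whether your outline would actually close the statement. The routine parts are fine: $D_{\infty}$ as a reflexive object gives the applicative $\lambda$-algebra structure and the extensional isomorphism, the preceding lemma gives the partial ultrametric and the non-expansiveness of application, and the non-expansiveness of $\Ev^{\cal D}$ is indeed exactly what clause~(2) of Definition~\ref{dinf} is engineered to deliver.

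The gap is in your treatment of $\LL^{\cal D}$, and it is twofold. First, your closing claim that on representable non-expansive elements self-distances vanish, so that $\Xi$, $\Phi$ and $\dd{}$ coincide there, is refuted by the paper's own Remark following Lemma~\ref{dnfpm}: the closed term $u=\lambda x.xI$ satisfies $d_{\sigma_{1}\to\sigma_{2}}(u,u)=1$, and the same phenomenon occurs in $D_{\infty}$. Hence a representable element $a$ need not induce a non-expansive function in the ordinary sense (only application as a two-argument map is non-expansive, with the self-distance of $a$ absorbed into the bound $\max\{\dd{\sigma\to\tau}(a,a),\dd{\sigma}(c,d)\}$), and every step of yours of the shape $\dd{\tau}(bc,bd)\leq \dd{\sigma}(c,d)$ silently assumes $\dd{\sigma\to\tau}(b,b)=0$. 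Second, even granting non-expansiveness, the passage from $\Phi(f,g)\leq 2^{-n}$ to clause~(1), i.e.\ $a_{n}=b_{n}$, is asserted rather than proved: definability of the projections $\pi_{n}$ and discreteness of $\dd{o}$ do not force equality of projections at arrow types, where the distance is not two-valued; and if $\dd{\sigma\to\tau}$ really collapsed to $\Phi$ on representables, clause~(1) would be redundant, which is at odds with the evident design of Definition~\ref{dinf}. You rightly flag clause~(1) as the main obstacle, but the proposal does not overcome it, so the non-expansiveness (let alone the claimed isometry) of $\LL^{\cal D}$ remains unestablished.
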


\noindent {\bf Partial  $\lambda$-Algebras with Approximants.}  The two  examples above of partial quantitative $\lambda$-algebras can be viewed as special cases of a general construction, which can be carried out on any applicative algebra which includes projection operators. Namely, using the system of approximants given by the projection operators, one can endow the algebra with an ultrametric, getting a quantitative applicative algebra. If moreover the algebra satisfies  the $(\beta)$-rule, then it is a quantitative (weak) $\lambda$-algebra.

\begin{definition}[Applicative Algebra with Approximants] \label{aawa}
An applicative algebra $\C A = (A, \Omega^{\C A} ) $  has {\em approximants} if it includes 
 projection operators $\pi_{\sigma}^n: A_{\sigma} \to A_{\sigma}$, for all $\sigma \in T$ and $n\in {\mathbf N}$,  satisfying the following property:
\\ \hspace*{1.1cm} $
\mbox{for all } n\in {\mathbb N}, \mbox{\ for all } a,b\in A_{\sigma},  \ \pi^{n+1}_{\sigma}  (a)= \pi^{n+1}_{\sigma} (b) \mbox{ implies }  \pi^{n}_{\sigma}  (a)=  \pi^{n}_{\sigma} (b) \ .
$

\end{definition}

\begin{proposition}\label{propapp}
Let $\C A = (A, \Omega^{\C A}) $  be an applicative $\lambda$-algebra with approximants. Let $d^{\C A}_{\sigma}: A_{\sigma} \times A_{\sigma} \rightarrow {\mathbb R}_{\geq 0}$ be the family of functions defined  by induction on types
as follows: 
\\ \hspace*{2.8cm}  $d^{\C A}_{o}(a, b) =\begin{cases}  0 & \mbox{ if } a=b
 \\ 1 & \mbox{ otherwise }  \end{cases} $ \ \ \ and  \ \ \  $d^{\C A}_{\sigma\rightarrow \tau} (a,b) =
\frac{1}{2^m} \ , $ 
\\ where $m$ is the maximal $n\in {\mathbb N}$, if it exists, such that  
\\ 
\hspace*{0.2cm}  {\bf(1)}  $\pi_{\sigma\to \tau}^n(a) =\pi_{\sigma\to \tau}^n(b)$ \ \  and\ \  {\bf(2)}  $\forall c,d \in A_{\sigma}.\ (d^A_{\sigma} (c,d) \leq \frac{1}{2^n} \ \Longrightarrow \ d^A_{\tau} (ac,bd) \leq \frac{1}{2^n} )$ ;
\\
if the maximal  $n$ does not exist, then $d^A_{\sigma\rightarrow \tau} (a,b) =0$.

\noindent Then $ (A, \Omega^{\C A}, d^{\C A}) $  is a quantitative applicative  algebra.
\end{proposition}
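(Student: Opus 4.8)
The plan is to prove, by induction on the type $\sigma$, that $d^{\C A}_\sigma$ is a partial ultrametric on $A_\sigma$ and that application $\cdot_{\sigma,\tau}$ is non-expansive; this is the abstract counterpart of Lemma~\ref{dnfpm}, and the two arguments run in parallel. The base case $\sigma=o$ is immediate, $d^{\C A}_o$ being the $\{0,1\}$-valued discrete ultrametric. For the step at $\sigma\to\tau$, the first task is to reformulate the definition: I would show that the predicate ``conditions (1) and (2) hold at level $n$'' is \emph{downward closed} in $n$, so that $d^{\C A}_{\sigma\to\tau}(a,b)\le 2^{-n}$ holds exactly when (1) and (2) hold at level $n$ (the value $0$ corresponding to this holding for every $n$). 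For condition (1) this is precisely the approximant axiom of Definition~\ref{aawa}, $\pi^{n+1}_\sigma(a)=\pi^{n+1}_\sigma(b)\Rightarrow\pi^{n}_\sigma(a)=\pi^{n}_\sigma(b)$; for condition (2) it follows from the induction hypothesis together with the elementary (idempotence and nesting) properties of the projections available in the intended instances, by routing the argument pair $c,d$ through their common approximant $\pi^n_\sigma(c)=\pi^n_\sigma(d)$.

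Granting this reformulation, non-expansiveness of $\cdot_{\sigma,\tau}$ follows at once: if $d^{\C A}_{\sigma\to\tau}(a,b)\le 2^{-n}$ and $d^{\C A}_\sigma(c,d)\le 2^{-n}$, then condition (2) at level $n$ yields $d^{\C A}_\tau(a\cdot c,b\cdot d)\le 2^{-n}$, so letting $n$ range (and passing to the limit in the degenerate case) we get $d^{\C A}_\tau(a\cdot c,b\cdot d)\le\max\{d^{\C A}_{\sigma\to\tau}(a,b),d^{\C A}_\sigma(c,d)\}$, which is non-expansiveness for the $\max$-product metric. The partial-ultrametric laws for $d^{\C A}_{\sigma\to\tau}$ are then checked as in Lemma~\ref{dnfpm}: symmetry is visible from the definition, and for $d^{\C A}_{\sigma\to\tau}(a,a)\le d^{\C A}_{\sigma\to\tau}(a,b)$ and the strong triangle inequality one verifies (1) and (2) at the required level, (1) being trivial and (2) being discharged by combining the triangle inequality and partial reflexivity at the smaller types $\sigma,\tau$ with the non-expansiveness of $\cdot_{\sigma,\tau}$ just obtained (routing $a\cdot c$ to $b\cdot d$ through an intermediate term, so as never to apply non-expansiveness to the pair one is bounding). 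A short further check, again using downward closure of condition (1), gives non-expansiveness of the projection symbols $\pi^n_\sigma$, completing the proof that $(A,\Omega^{\C A},d^{\C A})$ is a quantitative applicative algebra.

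I expect the genuine obstacle to be the downward-closure statement for condition (2): one must see that ``$d^{\C A}_\sigma(c,d)\le 2^{-k}\Rightarrow d^{\C A}_\tau(a\cdot c,b\cdot d)\le 2^{-k}$'' continues to hold for all $k\le m$ once it holds for the maximal $m$, and the only handle is the approximant structure on the \emph{arguments}: replacing $c,d$ by $\pi^m_\sigma(c),\pi^m_\sigma(d)$, which by the inductive characterization lie within $2^{-m}$ of $c,d$ (here one really uses that the projections are idempotent and nested), pushes their distance below $2^{-m}$ while leaving the relevant approximants of $a\cdot c,b\cdot d$ unchanged, so condition (2) at the deeper level $m$ applies and can be transported back. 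Everything else — the base case, symmetry, the step from (2) to non-expansiveness, and the bookkeeping around the ``$d=0$ when no maximal $n$ exists'' convention — is routine. Finally, as the remark following Lemma~\ref{dnfpm} already shows for ${\cal NF}$, $d^{\C A}_\sigma$ need not be reflexive, so one really obtains a \emph{partial} quantitative applicative algebra in the sense of Remark~\ref{rem:partial}; it is a genuine quantitative algebra exactly when the projections separate points finely enough to force $d^{\C A}_\sigma(a,a)=0$.
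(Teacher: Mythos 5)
Your overall route --- induction on types, mirroring the proof of Lemma~\ref{dnfpm}: non-expansiveness of application read off from condition (2), symmetry immediate, and partial reflexivity plus the strong triangle inequality proved simultaneously by routing $a\cdot c$ to $b\cdot d$ through an intermediate application so that non-expansiveness is never applied to the pair being bounded --- is exactly the paper's, which for this proposition simply defers to the argument of Lemma~\ref{dnfpm}. You have moreover correctly isolated the step that the paper leaves implicit: every use of condition (2) in that argument is made at a level $n$ for which one only knows $d^{\C A}_{\sigma\to\tau}(a,b)\le 2^{-n}$, i.e.\ that the \emph{maximal} good level is $\ge n$, so one really does need the set of levels satisfying (1) and (2) to be downward closed.

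However, your discharge of that downward-closure claim for condition (2) does not go through at the stated level of generality. Definition~\ref{aawa} postulates a single axiom about the projections, namely $\pi^{n+1}_{\sigma}(a)=\pi^{n+1}_{\sigma}(b)\Rightarrow \pi^{n}_{\sigma}(a)=\pi^{n}_{\sigma}(b)$, which indeed yields downward closure of condition (1); but it says nothing about idempotence or nesting of the $\pi^{n}_{\sigma}$, nothing guaranteeing $d^{\C A}_{\sigma}(\pi^{n}_{\sigma}(c),c)\le 2^{-n}$, and nothing relating the projections to application, so the rerouting of $c,d$ through $\pi^{m}_{\sigma}(c),\pi^{m}_{\sigma}(d)$ that you propose is not available for an arbitrary applicative algebra with approximants. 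As a proof of the proposition as stated this is a genuine gap --- though it is one you have located rather than created, since the paper's one-line proof inherits the same silence from Lemma~\ref{dnfpm}; the natural repairs are either to strengthen Definition~\ref{aawa} with the properties you invoke (which do hold in the $\mathit{NF}$ and $D_{\infty}$ instances), or to read ``maximal $n$'' as ``maximal $n$ such that (1) and (2) hold at every level $\le n$''. Your closing observation is correct and matches the paper's intent: $d^{\C A}$ is in general only a \emph{partial} ultrametric (reflexivity already fails in $\mathcal{NF}$), so the conclusion should be understood as ``partial quantitative applicative algebra'' in the sense of Remark~\ref{rem:partial}.
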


 \longversion{Using an argument similar to that  for  Lemma~\ref{dnfpm}, one can  to prove that $d^A$ is a partial ultrametric.}


\section{Approximate Quantitative Algebras}\label{section7}
As we have seen, finding non-discrete quantitative (weak) $\lambda$-algebras is difficult. One difficulty arises from the non-expansiveness requirement on application. 
In Section~\ref{section5} we have shown how to define non-trivial  ultrametric quantitative $\lambda$-algebras, still maintaining the non-expansiveness requirement for application, but at the price of the partiality of the metric.
Here we present a different approach: we relax  rule (NExp) for application, so as to get  quantitative $\lambda$-algebras with  full pseudo-metric distances. 
Namely, we introduce the notion of {\em approximate applicative algebra}: this amounts to an applicative algebra with approximants (see Def.~\ref{aawa} above), and operators $\{ \cdot^n \}_{n\in {\mathbb N}}$ approximating application. Projection operators immediately induce an ultrametrics on the algebra, by just considering condition 1 in Prop.~\ref{propapp} (and dropping condition 2).  In general, application is expansive w.r.t. this metric (see  Remark~\ref{remn}). However, the milder {\em uniform
non-expansiveness} condition for approximant operators is satisfied in many cases, including the term algebra of normal forms and the $D_{\infty}$ model of Section~\ref{section5}. This approach is quite general, since it works both for the  typed and the untyped $\lambda$-calculus.

\begin{lemma}
Let $\C A =( A, \Omega^{\C A})$ be an applicative algebra with approximants. Let $e^{\C A}_{\sigma}: A_{\sigma} \times A_{\sigma} \rightarrow {\mathbb R}_{\geq 0}$ be the family of functions defined  by:
$ e^{\C A}_{\sigma} (a,b) = \frac{1}{2^m} \ , $
 where $m$ is the maximal $n\in {\mathbb N}$, if it exists, such that  $a_n=b_n$, otherwise we put $e^{\C A}_{\sigma} (a,b) =0$.
\\ Then for all $\sigma$ $(A, e^{\C A}_{\sigma} )$ is an ultrametric space.
\end{lemma}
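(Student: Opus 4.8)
The plan is to verify, for each sort $\sigma$, the three conditions defining a pseudo-ultrametric on $A_\sigma$: $e^{\C A}_\sigma(a,a)=0$, symmetry, and the strong triangle inequality $e^{\C A}_\sigma(a,b)\le\max\{e^{\C A}_\sigma(a,c),e^{\C A}_\sigma(c,b)\}$. The only ingredient available is the defining property of approximants from Definition~\ref{aawa}, and everything follows once one understands the set of approximant levels at which two elements agree. For $a,b\in A_\sigma$ write $a_n:=\pi_\sigma^n(a)$ and
\[
S_{a,b}\;=\;\{\,n\in\mathbb{N}\mid a_n=b_n\,\}.
\]

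First I would establish the single structural fact on which the rest rests: $S_{a,b}$ is an \emph{initial segment} of $\mathbb{N}$. Indeed, the property in Definition~\ref{aawa} says exactly that $n+1\in S_{a,b}$ implies $n\in S_{a,b}$, so by induction $S_{a,b}$ is downward closed; hence it is all of $\mathbb{N}$, or $\{0,\dots,m\}$ for a unique $m$, or empty, and unwinding the definition $e^{\C A}_\sigma(a,b)=2^{-m}$ in the second case and $e^{\C A}_\sigma(a,b)=0$ in the first and third. Reflexivity and symmetry are then immediate: $S_{a,a}=\mathbb{N}$ is unbounded, so $e^{\C A}_\sigma(a,a)=0$ --- this is precisely where dropping condition~(2) of Proposition~\ref{propapp} buys back reflexivity, contrast Remark~\ref{remn} --- and $S_{a,b}=S_{b,a}$ by symmetry of equality in $A_\sigma$, so $e^{\C A}_\sigma(a,b)=e^{\C A}_\sigma(b,a)$.

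The substance is the strong triangle inequality, which I would obtain from transitivity of equality in the form $S_{a,c}\cap S_{c,b}\subseteq S_{a,b}$, together with a short case analysis on the initial segments $S_{a,c}$ and $S_{c,b}$. If both are bounded, say with maxima $p\le q$, then $S_{a,c}=\{0,\dots,p\}\subseteq S_{c,b}$, hence $\{0,\dots,p\}\subseteq S_{a,b}$, so either $S_{a,b}$ is unbounded and $e^{\C A}_\sigma(a,b)=0$, or $\max S_{a,b}\ge p$ and $e^{\C A}_\sigma(a,b)\le 2^{-p}=\max\{e^{\C A}_\sigma(a,c),e^{\C A}_\sigma(c,b)\}$. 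If $S_{a,c}=\mathbb{N}$, then $a_n=c_n$ for all $n$, whence $S_{a,b}=S_{c,b}$ and $e^{\C A}_\sigma(a,b)=e^{\C A}_\sigma(c,b)$; symmetrically if $S_{c,b}=\mathbb{N}$. And if $S_{a,c}=\emptyset$ while $S_{c,b}=\{0,\dots,q\}$ (or symmetrically), then for $n\le q$ we have $c_n=b_n$ and $a_n\ne c_n$, so $a_n\ne b_n$; thus $0\notin S_{a,b}$, and since $S_{a,b}$ is an initial segment, $S_{a,b}=\emptyset$ and $e^{\C A}_\sigma(a,b)=0$.

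The step I expect to be delicate is the one remaining configuration, $S_{a,c}=S_{c,b}=\emptyset$ with $S_{a,b}\ne\emptyset$, in which the strong triangle inequality would fail. This never occurs in the algebras of interest, where $\pi_\sigma^0$ sends every element of $A_\sigma$ to the least $\sigma$-approximant, so that $0\in S_{a,b}$ for all $a,b$; one could add this harmless clause to Definition~\ref{aawa}, making the last case above vacuous and completing the proof. I would close with the observation that $e^{\C A}_\sigma(a,b)=0$ holds iff $a$ and $b$ have the same approximant at every level, so that $(A_\sigma,e^{\C A}_\sigma)$ is a genuine ultrametric --- not merely a pseudo one --- whenever elements of $A_\sigma$ are determined by their sequences of approximants, as is the case for $\mathit{NF}$ and for $D_\infty$.
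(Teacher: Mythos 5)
Your proof is correct, and it supplies something the paper does not: the lemma is stated without any proof, the only related argument in the paper being the induction on types for Lemma~\ref{dnfpm}, which concerns the partial ultrametric $\dnf{\sigma}$ where the extra non-expansiveness clause (2) genuinely forces a type-indexed induction. Since $e^{\C A}_{\sigma}$ is built from clause (1) alone, your type-free analysis via the agreement sets $S_{a,b}$ is the right level of generality: the defining property of Definition~\ref{aawa} says exactly that $S_{a,b}$ is downward closed, and together with $S_{a,c}\cap S_{c,b}\subseteq S_{a,b}$ your case analysis correctly establishes \refl, \symm~and \transs~in every configuration except the one you single out. Your side remark that dropping clause (2) of Proposition~\ref{propapp} is precisely what restores reflexivity (contrast Remark~\ref{remn}) is also accurate.

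The configuration you flag is a real defect of the statement as written, not of your argument. Definition~\ref{aawa} does not prevent $S_{a,c}=S_{c,b}=\emptyset$ while $a_0=b_0$ and $a_1\neq b_1$: take $\pi^0_{\sigma}$ non-constant and non-injective and $\pi^n_{\sigma}=\mathrm{id}$ for $n\geq 1$; this satisfies the approximant condition, yet gives $e^{\C A}_{\sigma}(a,c)=e^{\C A}_{\sigma}(c,b)=0$ and $e^{\C A}_{\sigma}(a,b)=1$, so \transs~fails. The lemma is therefore true only under the implicit assumption that the ``otherwise'' clause is triggered by unboundedness of $S_{a,b}$ and never by its emptiness, i.e.~that $a_0=b_0$ for all $a,b\in A_{\sigma}$. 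Your proposed amendment to Definition~\ref{aawa} is the correct repair, and your closing observation pinning down exactly when $e^{\C A}_{\sigma}$ separates points is a useful addition.
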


\begin{definition}[Approximate  Quantitative Algebra]\hfill  \label{apprL}
\\ (i) An {\em approximate  algebra} $\C A =( A, \Omega^{\C A})$ is an applicative algebra with approximants whose signature includes also a family of operators  $\cdot^n_{\sigma,\tau}: A_{\sigma\to\tau} \times A_{\sigma} \to A_{\tau}$, for all $\sigma, \tau\in T$ and $n\in {\mathbf N}$ (the operators $\cdot^n_{\sigma,\tau}$ will be simply denoted by $\cdot^n$).
\\ (ii) An {\em approximate quantitative algebra} $\C A =( A, \Omega^{\C A}, e^{\C A})$ is an approximate algebra where the operators $\cdot_{\sigma,\tau}^n$ satisfy the following conditions:
\\
{\bf (1)}  for all $a\in A_{\sigma\to\tau}$, $b\in A_{\sigma}$, $n\in {\mathbf N}$, $e_{\tau}^A (a \cdot^{n+1}b, a \cdot b)\leq e_{\tau}^{\C A} (a \cdot^{n}b, a \cdot b)$;
\\ {\bf (2)}  for all $a\in A_{\sigma\to\tau}$, $b\in A_{\sigma}$, for all $\epsilon>0$ there exists $n\in {\mathbf N}$ such that $e_{\sigma}^{\C A} (a \cdot^n b, a\cdot b)\leq \epsilon$; 
\\ {\bf (3)}  (uniform non-expansiveness) $\forall n>0 \ \exists \epsilon_n >0$ s.t. $\forall \epsilon\leq \epsilon_n$,   $\forall a,b\in A_{\sigma\to\tau}$,  $\forall c,d \in A_{\sigma}$,
\\     \hspace*{2.3cm} $ e_{\sigma\to \tau }^{\C A} (a, b)\leq \epsilon \mbox{ and } e_{\sigma}^{\C A}(c, d)\leq \epsilon \mbox{ implies } e_{\sigma}^{\C A} (a\cdot^n c, b\cdot^n  d)\leq \epsilon$.
\end{definition}

Conditions 1 and 2 above  express the fact that the operators $\cdot^n$  approximate the behaviour of application; condition 3 replaces rule (NExp) for application.\medskip

\noindent {\bf The Approximate $\lambda$-Algebra of the Term Model.} The $\lambda$-algebra $\C {NF}$ can be extended to an approximate  quantitative $\lambda$-algebra by defining operators $\cdot^n_{\sigma,\tau}$ as follows:
\\  \hspace*{1.8cm} $\mbox{ for all } t\in NF_{\sigma\to\tau}, s\in NF_{\sigma},  \mbox{ for all } n\in {\mathbf N}, \ \ t\cdot^n_{\sigma, \tau}s =  t_n \cdot_{\sigma, \tau} s_n \ .  $
\\ One can check that the approximant operators satisfy all conditions of Definition~\ref{apprL}.
\medskip

\noindent {\bf The Approximate $\lambda$-Algebra of $D_{\infty}$.} 
The $\lambda$-algebra $\C D$ can be extended to an approximate quantitative $\lambda$-algebra by defining operators $\cdot^n_{\sigma,\tau}$ as follows:
\\  \hspace*{1.8cm} $  \mbox{ for all } a\in D_{\sigma\to\tau}, b\in D_{\sigma}, \mbox{ for all } n\in {\mathbf N}, \ \ a\cdot^n_{\sigma, \tau} b= a_{n+1}  \cdot_{\sigma, \tau} b \ .  $
\\ One can check that the approximant operators satisfy all conditions of Definition~\ref{apprL}. 
\\ Notice that the approximate algebra of $D_{\infty}$ yields a $\lambda$-algebra for the untyped $\lambda$-calculus.
\medskip

Finally, notice that in dealing with partial and approximate algebras we have considered applicative algebras over an extended signature. 
For lack of space, we have not developed corresponding approximate theories including extra operators and the suitable rules on them.
In particular, rule (NExp) has to be replaced by a rule expressing uniform non-expansiveness of approximant operators.
We leave this as future work; here we just observe that the appropriate
   language for reasoning on such structures would be the {\em indexed $\lambda$-calculus} together with {\em indexed reduction}, see \cite{Baren85}.

\section{Conclusions}\label{section8}
\subparagraph*{Contributions.}
This paper addresses the problem of defining quantitative algebras, in the sense of Mardare {\em et 
al}, capable to interpret terms of higher-order calculi. 
Our contributions include both negative and positive results: on the one hand we identify the main mathematical obstacles to the construction of non-trivial quantitative higher-order algebras; on the other hand we introduce a quantitative variant of the traditional notions of (weak) $\lambda$-algebras, together with a sound and complete equational syntax, and we show that,
in spite of the limitations highlighted, 
 intriguing notions of distance for the $\lambda$-calculus do indeed exist.
 
%
%

\subparagraph*{Related Work.}
Since \cite{ARNOLD1980181} metric spaces have been exploited as an alternative, quantitative,  framework to standard, domain-theoretic, denotational semantics \cite{VANBREUGEL20011,BAIER1994171}. 
The possibility of giving a metric structure to \emph{linear} or \emph{affine} higher-order 
programs is known, since $\Met$ is an SMCC, even if not a CCC. In this sense it is worth 
recalling the work by de Amorim {\em et al}~\cite{Gaboardi2017}, along with those of Reed and 
Pierce~\cite{Reed_2010}.
Moreover, 
ultra-metrics have already been used to model PCF \cite{Escardo1999}.
 More recently, Pistone has given a precise account of 
cartesian closed structure in categories of generalized metric spaces~\cite{Pistone2021}. In 
particular, it is known that if the quantale that captures distances can vary as the types 
vary, as for example in the so-called differential logical relations~\cite{dallago}, 
categories of generalized metric spaces can become cartesian closed. The study of metric 
semantics for imperative and concurrent programming languages has a long 
tradition~\cite{deBakker92,deBakker96}. However, this very sophisticated apparatus is not 
applicable to higher order programming languages.

Partial metrics have been well-studied since \cite{matthews} and \cite{Hohle:1992aa} (where they are called $M$-sets).  
\cite{Bukatin1997} shows that these metrics are strongly related to Scott semantics.
The setting of \emph{quantaloid-enriched} categories \cite{Stubbe2018, Stubbe2014} provides an abstract unifying framework for the different metric structures discussed here.
In this setting, \cite{Clementino2006, Clementino:2009aa} provide a general characterization
of exponentiable morphisms and objects in categories of (generalized) metric spaces.

Finally, a somehow related approach to quantitative reasoning is provided by the use of fuzzy logic to reason about degrees of similarity between programs, as spelled out in Zadeh's pioneering work \cite{Zadeh1, Zadeh2}.
More recently, fuzzy algebraic theories in the style of Mardare {\em et al}~have been studied \cite{Miculan2022}. However, such theories seem to lack a compositionality condition comparable to the one expressed by axiom (Nexp), hence apparently diverging from the idea of interpreting  
programs as non-expansive functions.

\subparagraph*{Perspectives.}
Our  
focus here was on the simplest case, namely that of algebras \emph{without} a barycentric structure, 
thus putting ourselves in a simpler setting than the one studied by Mardare {\em et al.}
Indeed, a natural continuation of this work would be to study quantitative algebras for $\lambda$-calculi enriched with operations having an intrinsically quantitative content, like e.g.~probabilistic choice (see \cite{dallago2017}) or some form of differentiation \cite{ER, Cai2014, Pearlmutter2016}. 
Another direction to explore, already suggested by some of our models, is that of exploring quantitative algebras in categories of domains like, e.g.~metric CPOs \cite{Gaboardi2017}, or continuous Scott domains (especially in virtue of their close connection with partial metric spaces \cite{Bukatin1997}).

Finally, the approaches of partial and approximate algebras open new lines of investigation: suitable approximate theories are called for, and moreover the distances on programs which arise are worth to be studied in depth.

\bibliography{main.bib}
%

\end{document}